\newcommand{\footremember}[2]{\footnote{#2}
    \newcounter{#1}
    \setcounter{#1}{\value{footnote}}}
\newcommand{\footrecall}[1]{\footnotemark[\value{#1}]}
\setlist[enumerate,1]{label=(\roman*), leftmargin=2.2em}
\setlist[enumerate,2]{label=(\alph*)}
\setlist{nosep,topsep=0.1em}
\setlist[itemize,1]{label={\bfseries--}}
\newtheorem*{rep@theorem}{\rep@title}
\newcommand{\newreptheorem}[2]{\newenvironment{rep#1}[1]{\def\rep@title{\cref{##1}}\begin{rep@theorem}}{\end{rep@theorem}}}
\definecolor{darkblue}{rgb}{0,0,0.38}
\definecolor{darkred}{rgb}{0.6,0,0}
\definecolor{darkgreen}{rgb}{0.1,0.35,0}
\patchcmd\blx@bblinput{\blx@blxinit}
                      {\blx@blxinit
                      }{}{\fail}
\addspace\mkbibbrackets{\thefield{eprintclass}}}}}
\addspace\mkbibbrackets{\thefield{eprintclass}}}}}
\patchcmd{\@algocf@start}{\begin{lrbox}{\algocf@algobox}}{\rule{0.025\textwidth}{\z@}\begin{lrbox}{\algocf@algobox}\begin{minipage}{0.95\textwidth}}{}{}
\patchcmd{\@algocf@finish}{\end{lrbox}}{\end{minipage}\end{lrbox}}{}{}
\newcommand\appendtographicspath[1]{\g@addto@macro\Ginput@path{#1}}
\DeclareRobustCommand{\cev}[1]{{\mathpalette\do@cev{#1}}}
\newcommand{\do@cev}[2]{\vbox{\offinterlineskip
    \sbox\z@{$\m@th#1 x$}\ialign{##\cr
      \hidewidth\reflectbox{$\m@th#1\vec{}\mkern4mu$}\hidewidth\cr
      \noalign{\kern-\ht\z@}
      $\m@th#1#2$\cr
    }}}
\newtheorem*{theorem*}{Theorem}
\newtheorem{theorem}{Theorem}
\newtheorem{lemma}[theorem]{Lemma}
\newtheorem*{lemma*}{Lemma}
\newtheorem{conjecture}[theorem]{Conjecture}
\newtheorem{definition}[theorem]{Definition}
\newtheorem{remark}[theorem]{Remark}
\newtheorem{corollary}[theorem]{Corollary}
\newtheorem{observation}[theorem]{Observation}
\newtheorem{claim}[theorem]{Claim}
\crefname{theorem}{Theorem}{Theorems}
\crefname{conjecture}{Conjecture}{Conjectures}
\Crefname{lemma}{Lemma}{Lemmas}
\Crefname{claim}{Claim}{Claims}
\Crefname{fact}{Fact}{Facts}
\Crefname{remark}{Remark}{Remarks}
\Crefname{observation}{Observation}{Observations}
\Crefname{line}{Line}{Lines}
\Crefname{figure}{Figure}{Figures}
\Crefname{corollary}{Corollary}{Corollaries}
\title{Short circuit walks in fixed dimension}
\author{
Alexander E. Black
\footremember{bowdoin}{
Bowdoin College, Brunswick, ME, USA.
    Email: \href{mailto:a.black@bowdoin.edu}{a.black@bowdoin.edu}
}
\and
Christian N{\"o}bel\footremember{ETH}{
ETH Zurich, Zurich, Switzerland. R.S. funded by SNSF Ambizione Grant No. 216071. 
Email: $\{$\href{mailto:cnoebel@ethz.ch}{cnoebel}, \href{mailto:rsteine@ethz.ch}{rsteine}$\}$@ethz.ch.}\and
Raphael Steiner\footrecall{ETH}}
\date{}
\renewcommand{\epsilon}{\varepsilon}
\begin{document}

\maketitle
\thispagestyle{empty}
\addtocounter{page}{-1}

\begin{abstract}
Circuit augmentation schemes are a family of combinatorial algorithms for linear programming that generalize the simplex method. To solve the linear program, they construct a so-called \emph{monotone circuit walk}: They start at an initial vertex of the feasible region and traverse a discrete sequence of points on the boundary, while moving along certain allowed directions (circuits) and improving the objective function at each step until reaching an optimum. Since the existence of short circuit walks has been conjectured (\emph{Circuit Diameter Conjecture}), several works have investigated how well one can efficiently approximate shortest monotone circuit walks towards an optimum. A first result addressing this question was given by De Loera, Kafer, and Sanit\`{a} [SIAM J.~Opt., 2022], who showed that given as input an LP and the starting vertex, finding a $2$-approximation for this problem is \NP-hard. Cardinal and the third author [Math.~Prog.~2023] gave a stronger lower bound assuming the exponential time hypothesis, showing that even an approximation factor of $O(\frac{\log m}{\log \log m})$ is intractable for LPs defined by $m$ inequalities. Both of these results were based on reductions from highly degenerate polytopes in combinatorial optimization with high dimension. 

In this paper, we significantly strengthen the aforementioned hardness results by showing that for every fixed $\varepsilon>0$ approximating the problem on polygons with $m$ edges to within a factor of $O(m^{1-\varepsilon})$ is \NP-hard. This result is essentially best-possible, as it cannot be improved beyond $o(m)$. In particular, this implies hardness for simple polytopes and in fixed dimension. 
\end{abstract} 
\newpage

\section{Introduction}

Circuit augmentation schemes are a family of combinatorial linear programming algorithms generalizing the simplex method and taking inspiration from interior point methods.
Like the simplex method, circuit augmentation schemes start at an initial vertex of a polytope and follow a sequence of discrete steps along the boundary of the polytope until reaching the optimum.
However, unlike the simplex method and akin to interior point methods, these steps may move along the interior of the polytope and need not move from vertex to vertex.
To be explicit, for a polyhedron of the form $\{\mathbf{x}\in \mathbb{R}^d\colon A\mathbf{x} \leq \mathbf{b}\}$ for $A$ an $n \times d$ matrix, a \emph{circuit direction} is any vector $\mathbf{w}$ parallel to a line given by $\{\mathbf{x}: A_{I} \mathbf{x} = 0\}$, where $A_{I}$ is a $(d-1) \times d$ sub-matrix of $A$ of rank $d-1$. 
Then a \emph{circuit step} is any step from a point $\mathbf{p}$ on the boundary of a polytope $P$ that goes from $\mathbf{p}$ to $\mathbf{p} + \lambda^{\ast} \mathbf{w}$, where $\lambda^{\ast} = \max(\{\lambda \in \mathbb{R}_{\geq 0}: \mathbf{p} + \lambda \mathbf{w} \in P\})$ and $\mathbf{w}$ is some circuit of $P$.
Finally, a \emph{circuit walk} is a path from point to point on the boundary of the polytope consisting of circuit steps. If we are additionally given a linear objective function $\mathbf{c}^\top\mathbf{x}$, a \emph{monotone circuit walk} is any circuit walk which improves the value of the objective function at each step. See \cref{fig:circuit_walk_example} for an example. We remark that while all directions parallel to edges of a polytope are also circuits, the converse is generally not true. However, circuit directions do coincide with edge directions in the case of polygons, as later formally stated in \cref{obs:edgesarecircuits}.

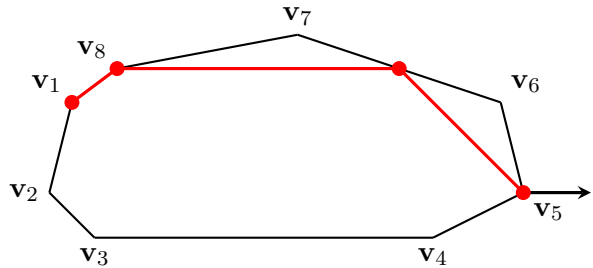
\begin{figure}[h!]
    \centering
    \begin{tikzpicture}[scale=0.3]

\begin{scope}
\coordinate (1) at (7,-5) {};
\coordinate (2) at (6,-9.00) {};
\coordinate (3) at (8,-11) {};
\coordinate (4) at (23,-11) {};
\coordinate (5) at (27,-9.0) {};
\coordinate (6) at (26,-5) {};
\coordinate (7) at (17,-2) {};
\coordinate (8) at (9.0,-3.5) {};
\end{scope}

\coordinate (inter) at (21.5,-3.5);

\begin{scope}

\node[above left] at (1) {$\mathbf{v}_1$};
\node[left] at (2) {$\mathbf{v}_2$};
\node[below] at (3) {$\mathbf{v}_3$};
\node[below] at (4) {$\mathbf{v}_4$};
\node[below right] at (5) {$\mathbf{v}_5$};
\node[above right] at (6) {$\mathbf{v}_6$};
\node[above] at (7) {$\mathbf{v}_7$};
\node[above left] at (8) {$\mathbf{v}_8$};

\end{scope}

\draw[very thick, -stealth] (5) -- ($(5) + (3,0)$);

\begin{scope}[thick]
\draw (1) -- (2);
\draw (2) -- (3);
\draw (3) -- (4);
\draw (4) -- (5);
\draw (5) -- (6);
\draw (6) -- (7);
\draw (7) -- (8);
\draw (8) -- (1);
\end{scope}

\begin{scope}[every node/.style={fill=red, minimum size=3,circle, inner sep=2pt}]
\node at (1){};
\node at (5){};
\node at (8){};
\node at (inter) {};
    
\end{scope}

\begin{scope}[very thick, red]
\draw (1) -- (8);
\draw (8) -- (inter);
\draw (inter) -- (5);

\end{scope}

\end{tikzpicture}
     \caption{Example of a monotone circuit walk from $\mathbf{v}_1$ to $\mathbf{v}_5$ for $\mathbf{c} = (1,0)^\top$.
    The circuit walk uses the directions of the edges $\{\mathbf{v}_1,\mathbf{v}_8\}$, $\{\mathbf{v}_3,\mathbf{v}_4\}$, and $\{\mathbf{v}_2,\mathbf{v}_3\}$.
    Note that going from $\mathbf{v}_1$ to $\mathbf{v}_2$ and then to $\mathbf{v}_5$ would give a shorter, but not $\mathbf{c}$-monotone, circuit walk.
    }
    \label{fig:circuit_walk_example}
\end{figure}

The choice of circuit walk for solving a linear program is not canonical and a method of choosing such a walk is called a pivot rule \cite{CircuitAugmentaionOrig}.
There are pivot rules guaranteeing a weakly polynomial run-time for circuit augmentation schemes \cite{SeanThesis, CircuitPivotRules, CircuitImbalanceBounds}.
Making circuit augmentation schemes competitive in practice is an ongoing effort in the community with recent progress made especially by Borgwardt and Viss \cite{PolyhedralModel, SteepestDescentAugmentation, VissThesis}.
Furthermore, circuit augmentation is a useful tool for analyzing both the simplex method \cite{01simplex} and interior point methods   \cite{WorsethanSimplex, StronglyPolyGenFlow}.

In contrast to self concordant barrier interior point methods for which no strongly polynomial time version is possible \cite{IPMNotPolyFirst, IPMNotPolyStrong}, circuit augmentation schemes remain a candidate solution to Smale's 9th problem \cite{Smale1998} asking for the existence of a strongly polynomial time algorithm for linear programming.
One challenge for this problem is the difficulty of the polynomial Hirsch conjecture \cite[Conjecture 1.3]{santos} and the analogous circuit diameter conjecture \cite{CircuitDiamConjecture}, which respectively ask whether a polynomial length path in the graph of the polytope and a polynomial length circuit walk always exists between a pair of vertices of the polytope.
However, even if short paths exist, the challenge still remains to provide an efficient pivot rule guaranteed to follow one.
Even for polytopes whose graph is isomorphic to that of a hyper-cube, called combinatorial hyper-cubes, there is no known strongly polynomial time linear programming algorithm for optimizing over them.
In fact, combinatorial cubes are used to construct hard instances for both the simplex method \cite{KleeMinty} and interior point methods \cite{IPMNotPolyStrong}.
Finally, \textcite{RockExtensions} showed that the general linear programming problem can be reduced in strongly polynomial time to linear programs over simple polytopes whose graphs have linear diameter.
Hence, the key question for understanding the complexity of the simplex method and circuit augmentation schemes in relation to Smale's problem is whether one can efficiently compute a short path assuming one exists. 

The computational problem of finding a shortest path between vertices of a polytope is already known to be hard in various guises both in the setting of edge walks along the graph of the polytope and circuit walks.
In the former, the first problem of this type to be studied was the combinatorial diameter, the diameter of the graph of the polytope.
In their 1994 work Frieze and Teng \cite{FriezeTeng1994} showed that computing the combinatorial diameter of a polytope is \NP-hard.
This result was much later improved by Sanit\`{a} in 2018 \cite{Sanita18} who showed that the same problem is strongly \NP-hard even for fractional matching polytopes.
For circuit walks, the analogous notion is the circuit diameter, the maximal length of a shortest circuit walk between any pair of the vertices of the polytope.
Computing the circuit diameter was recently shown to be strongly \NP-hard by \textcite{CircDiamNPHard}.
Very recently, both of these results were extended by \textcite{HarderthanNPHard}, who showed that computing the combinatorial diameter and circuit diameter are (conjecturally) harder than problems in \NP{} by showing that they are in fact $\Pi_{2}$-complete. 
However, asking for the combinatorial diameter or circuit diameter of a polytope is distinct from asking for an algorithm to find a short path or circuit walk towards an optimum, which is all that is needed for linear programming algorithms.
Finding a shortest path on graphs of polytopes is known to be \NP-hard \cite[Theorem 2]{CircuitPivotRules}.
It is furthermore \NP-hard even for graphs of several highly structured polytopes, such as alcoved polytopes and classes of generalized permutahedra \cite{GraphicalZonotopes, GraphAssociahedra, PolymatroidShortPaths}.
For circuit walks, it is known \cite[Corollary~1]{CircuitPivotRules} that even checking adjacency with the optimum of a linear program is \NP-hard, implying that $(2-\varepsilon)$-approximating shortest (monotone) circuit walks to an optimum is intractable.
Similar results are known for highly structured families of polytopes from combinatorial optimization \cite{ShortSignCompatible, ReconfigAlternatingCycles}. The strongest inapproximability results currently available, both for shortest paths in graphs of polytopes and for shortest circuit walks, are due to Cardinal and the third author~\cite[Theorem~1, Corollary~2]{cardinal_steiner_23} who showed that no polynomial-time algorithm can approximate shortest (monotone) paths or circuit walks to an optimum to within any constant factor (assuming $\P\neq \NP$) or to within a factor $O\left(\frac{\log m}{\log \log m}\right)$ (assuming the exponential time hypothesis), where $m$ is the number of inequalities in the input polytope description. 

In fixed dimension, it was already noted by Frieze and Teng \cite{FriezeTeng1994} that one can always trivially find a shortest path between any pair of vertices in the graph of the polytope in polynomial time.
For polygons\footnote{Throughout this paper, when speaking of polygons we always mean filled convex polygons.}, this is especially easy to see as there are only two paths to choose from.
However, such an observation has not been made for circuit augmentation schemes.
Can one find a shortest improving circuit walk to the optimum in polynomial time, at least if the dimension is fixed as a constant?
The following problem yields the formal setup to study this question.
\begin{mdframed}[innerleftmargin=0.5em, innertopmargin=0.5em, innerrightmargin=0.5em, innerbottommargin=0.5em, userdefinedwidth=0.95\linewidth, align=center]
	{\textsc{Monotone Circuit Distance}}
	\sloppy

	\noindent
	\textbf{Input:} A polytope $P = \{\mathbf{x}\in \mathbb{R}^d\colon Ax\leq \mathbf{b}\}$ defined by a matrix $A\in \mathbb{Q}^{m\times d}$ and a vector $\mathbf{b}\in \mathbb{Q}^m$, a vertex $\mathbf{s}$ of $P$, a cost vector $\mathbf{c}\in \mathbb{Q}^d$, and $k\in \mathbb{Z}_{\geq 0}$.

	\noindent
	\textbf{Decision:} Is there a $\mathbf{c}$-monotone circuit walk from $\mathbf{s}$ to a $\mathbf{c}$-maximal vertex of $P$ of length at most $k$?
\end{mdframed}
\subsection*{Our results.}
\paragraph*{Hardness in fixed dimension.} Our first main result in this paper answers the above question negatively, showing that (perhaps surprisingly) \textsc{Monotone Circuit Distance} is \NP-hard in fixed dimension, already for $d=2$.

\begin{theorem}\label{thm:NP-hard-polygons}
\textsc{Monotone Circuit Distance} is \NP-hard for  polygons.    
\end{theorem}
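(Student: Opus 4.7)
The plan is to prove \Cref{thm:NP-hard-polygons} via a polynomial-time reduction from a standard \NP-complete problem, such as \textsc{3-SAT} or \textsc{Subset-Sum}. Given an instance with $n$ variables/items, the reduction will produce in polynomial time a polygon $P\subseteq \mathbb{R}^2$, a starting vertex $\mathbf{s}$, a cost vector $\mathbf{c}$, and an integer $k$ such that a $\mathbf{c}$-monotone circuit walk from $\mathbf{s}$ to the $\mathbf{c}$-maximal vertex of $P$ of length at most $k$ exists if and only if the source instance is a \textsf{yes}-instance. Throughout, I would exploit the paper's observation (\cref{obs:edgesarecircuits}) that circuit directions of a polygon coincide exactly with its edge directions; fixing $\mathbf{c}=(1,0)^\top$, $\mathbf{c}$-monotone walks are precisely those that strictly increase the $x$-coordinate, and each circuit step from a boundary point shoots along an $x$-positive edge direction until deterministically meeting the boundary again.

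The polygon is built so that $\mathbf{s}$ is the unique leftmost vertex and $\mathbf{t}$ the unique rightmost vertex, with an upper and a lower chain connecting them. The boundary is subdivided into a horizontal sequence of $n$ \emph{choice gadgets} followed by a final \emph{verification region} close to $\mathbf{t}$. Each choice gadget is a narrow vertical strip whose upper and lower chains carry a constant number of short edges with carefully chosen slopes, arranged so that any $\mathbf{c}$-monotone circuit walk entering the gadget is forced to pick one of a small set of edge directions, each of which both traverses the gadget in one pivot and encodes a discrete choice (a truth value or an item selection) for the associated variable. The slopes and heights of the successive gadgets are tuned so that the accumulated vertical displacement of the $n$ chosen directions delivers the walk exactly to the entry point of the verification region, and from there the final region is configured so that the minimum monotone circuit walk to $\mathbf{t}$ has length precisely $k=k(n)$ when the sequence of choices encodes a satisfying assignment/valid subset, and strictly larger length otherwise (because then at least one extra corrective pivot is required).

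The main technical obstacle I expect is the \emph{shortcut argument}. Since a single circuit step may in principle use any of the $m$ edge directions of $P$ and may land at an arbitrary point of the boundary (not only at a vertex), one must rigorously rule out monotone walks that use a long cross-polygon step to bypass several gadgets at once, or exploit an edge direction imported from a distant part of $P$. To handle this I plan to place the gadgets in sufficiently general position, with all slopes perturbed so that they are pairwise distinct and no far-away edge direction is geometrically compatible with traversing more than a single gadget. A displacement (or counting) argument then forces any monotone walk from $\mathbf{s}$ to $\mathbf{t}$ to perform at least one pivot per gadget, yielding a matching lower bound of $k$ pivots. Combined with the forward construction, this establishes the biconditional and completes the reduction, proving \NP-hardness of \textsc{Monotone Circuit Distance} on polygons.
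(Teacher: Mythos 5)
Your high-level plan (reduce from a subset-sum-type problem, exploit \cref{obs:edgesarecircuits}, tune slopes so that the $y$-displacements of the chosen directions encode arithmetic) shares its broad spirit with the paper, but the central geometric picture is wrong, and this is not a fixable detail but a structural gap. You describe $n$ ``narrow vertical strip'' choice gadgets arranged horizontally, each traversed in a single pivot. In a polygon, however, a circuit move is \emph{maximal}: from a boundary point you shoot along an edge direction until you hit the boundary again, which means a single step necessarily crosses the entire polygon. There is no way to ``traverse one gadget and stop at its right wall'' --- the ray simply passes through any interior wall you draw and lands on the far side of $P$. Consequently a monotone circuit walk in a polygon does not sweep left-to-right through local gadgets; it necessarily ping-pongs between opposite sides of the polygon. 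The paper's construction embraces this: it uses a tall rectangle $[0,1]\times[0,S+\varepsilon]$, puts the $n$ slopes $a_1,\dots,a_n$ on the lower-right chain and the horizontal direction $(-1,0)^\top$ on the upper chain, and a walk alternates between the $x=0$ and $x=1$ sides, gaining exactly $a_i$ in height with each zigzag. This also explains why the paper must use \textsc{Subset Sum \emph{with Repetition}}: a single edge direction can be reused as many times as one likes, so you cannot enforce ``one choice per variable'' the way a sequential gadget design assumes.

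The second, equally serious, gap is the lower bound. You wave at a ``displacement/counting argument'' and say an infeasible instance forces ``at least one extra corrective pivot,'' but to prove NP-hardness you must rule out \emph{every} length-$k$ monotone walk when the instance is infeasible, and in a polygon the adversary may cheaply shortcut across the body to any almost-optimal point. The paper's key ingredient that closes this hole is \cref{thm:linear-circuit-distance}: an explicitly constructed, polynomially-encodable polygon $P_\ell$ whose two designated vertices have monotone circuit distance exactly $\ell$ from the optimum. This polygon is then scaled and glued into the upper-left corner of the rectangle, so that unless the walk hits height exactly $S$ (certifying a subset-sum solution), it is forced into this high-distance region and cannot finish in fewer than $Ck+1$ steps (see \cref{lem:monotone_construction-reaching-P}). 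Without an analogue of $P_\ell$ your outline has no mechanism to convert ``wrong arithmetic'' into a provable lower bound on walk length; ``general position perturbation'' alone does not do this. Finally, the paper also needs the \emph{exact} promise variant (any feasible $r$ has $\sum r_i = k$) so that the YES-side walk length $2k$ is well-defined; a reduction from vanilla \textsc{Subset-Sum} or \textsc{3-SAT} would not give you this control.
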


\paragraph*{Approximation hardness.} In fact, we obtain the following much stronger version of \cref{thm:approx-hardness-monotone-distance}.

\begin{theorem}\label{thm:approx-hardness-monotone-distance}
For every fixed $\varepsilon>0$ it is \NP-hard to solve \textsc{Monotone Circuit Distance} restricted to inputs $(P,\mathbf{s},\mathbf{c},k)$ with the following properties: $d=2$, $P$ is a polygon with $m$ edges, and either there exists a $\mathbf{c}$-monotone circuit walk from $\mathbf{s}$ to a $\mathbf{c}$-maximal vertex of length at most $k$, or there is no such walk of length at most $m^{1-\varepsilon}\cdot k$. 
\end{theorem}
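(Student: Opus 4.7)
My plan is to prove \cref{thm:approx-hardness-monotone-distance} by designing a tunable gadget reduction whose gap between yes and no instances can be made into any desired polynomial in the number of edges $m$.

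The starting point is the reduction underlying \cref{thm:NP-hard-polygons}, which I would adapt to reduce from a SAT-like problem (e.g., 3-SAT) on $n$ variables and then parametrize it by an integer $t = t(\varepsilon)$. In the base reduction, each variable $x_i$ is represented by a geometric gadget with two ``traversal options''; a satisfying assignment corresponds to a globally compatible choice at every gadget, yielding a short monotone walk that sews these choices together. To convert the constant-factor hardness into an $m^{1-\varepsilon}$-factor hardness, I would blow up each variable gadget by replacing its traversal segment with a \emph{zig-zag chain} of $\Theta(n^t)$ nearly-parallel short edges, while leaving the two skip-directions of the gadget intact as distinct edge directions of the polygon. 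The resulting polygon has $m = \Theta(n^{t+1})$ edges.

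The analysis should then proceed as follows. In a yes-instance, a satisfying assignment supplies the correct skip-direction at each gadget, and gluing these together yields a monotone circuit walk of length $O(n)$. In a no-instance, no globally compatible sequence of skip-directions exists, and so at some gadget the walk is forced to traverse the entire zig-zag chain, giving length $\Omega(n^t)$. This produces a gap of $\Omega(n^{t-1})$ between yes and no cases. Choosing $t = \lceil 1/\varepsilon\rceil + O(1)$ ensures $n^{t-1} \ge m^{1-\varepsilon}$ for all sufficiently large $n$, which gives the claimed inapproximability.

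The main obstacle is verifying that the zig-zag chains cannot be bypassed by unintended circuit directions. Since every edge direction of a 2D polygon is a circuit direction, I must geometrically design the polygon so that every chord parallel to a zig-zag edge lands within the same gadget and does not create a shortcut across other gadgets, and, symmetrically, so that the skip directions of one gadget do not help bypass another. Establishing this shortcut-freeness rigorously, and thereby the $\Omega(n^t)$ lower bound on walk length in no-instances, is the technical heart of the construction; the placement of the gadgets along the $\mathbf{c}$-direction and the choice of slopes for the zig-zag edges and the skip directions must be fine-tuned so that the chord-landing analysis can be carried out uniformly for all $t$.
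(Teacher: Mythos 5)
Your proposal takes a genuinely different route from the paper, and unfortunately it has a gap at its foundation.

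The paper does not reduce from 3-SAT via ``variable gadgets with two traversal options.'' A convex polygon is a very rigid object: its boundary is a single cycle, its circuits are exactly its edge directions, and a circuit move is a chord from any boundary point to the boundary in some edge direction. There is no branching or gadget-local ``traversal option'' available in this setting, so the conceptual picture you start from --- one gadget per variable, each with a skip-option, compatibility enforced by clauses --- does not map onto the geometry. The paper instead reduces from a promise version of \textsc{Subset Sum with Repetition}, exploiting the one genuinely useful arithmetic handle a polygon offers: the $y$-coordinate acts as an accumulator. The lower-right corner supplies circuit directions of slope $a_1,\dots,a_n$, each of which, from the left edge, increases $y$ by exactly $a_i$; the upper-left corner is a scaled copy of a polygon $P_{Ck}$ (Theorem~\ref{thm:linear-circuit-distance}) whose edges have tiny slopes, so that walks through that region change $y$ very little and must pay circuit-distance $Ck$ to reach $\mathbf t$. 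Feasibility of the subset-sum instance is equivalent to reaching $y=S$ exactly using the $a_i$-slopes, and the gap between $\le 2k$ and $> Ck$ is then tuned by choosing $C$ polynomially large.

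Concretely, there are two things your plan would need that are not established. First, a base reduction from a SAT-like problem to \textsc{Monotone Circuit Distance} on polygons with a constant gap --- this does not appear in the literature, and the geometric obstructions above suggest it is not the right framework; you would have to build it from scratch, which is the actual content of Theorem~\ref{thm:NP-hard-polygons}. Second, even granting such a base reduction, your ``the walk is forced to traverse the entire zig-zag chain'' claim is exactly the shortcut-freeness that you flag as the technical heart and then defer. In the paper this is the role of Lemma~\ref{lem:monotone_construction-reaching-P} together with the scaling choices in Lemma~\ref{lemma:replacement-hard-instance}: one must show, via an induction on how far along the upper edge a walk can start, that any short walk reaching the optimum must in effect simulate a valid subset-sum solution. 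Without an analogue of that argument, the $\Omega(n^t)$ lower bound in your no-instances is unsupported. Your high-level instinct --- amplify the gap by making a distance-enforcing piece of the polygon longer while keeping the number of ``decision'' edges fixed --- is in the same spirit as the paper's use of the tunable parameter $C$ in Theorem~\ref{thm:monotone-circuit-diameter}, but the mechanism you propose (per-variable zig-zags hanging off a presumed 3-SAT reduction) is not the one that works here.
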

A direct consequence of \cref{thm:approx-hardness-monotone-distance} is a significant improvement of the best known approximability lower bound for \textsc{Monotone Circuit Distance} from $\Omega\left(\frac{\log m}{\log \log m}\right)$ (\cite{cardinal_steiner_23}) to $m^{1-o(1)}$:

\begin{corollary}\label{cor:inapprox}
For every $\varepsilon>0$ the following is \NP-hard: Given as input a polygon $P$ with $m$ edges, a starting vertex $\mathbf{s}$, and a vector $\mathbf{c}\in \mathbb{Q}^2$, compute a monotone circuit walk from $\mathbf{s}$ to a $\mathbf{c}$-maximal vertex of $P$ approximating the minimum possible length of such a walk to within a factor of $m^{1-\varepsilon}$.
\end{corollary}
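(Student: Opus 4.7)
The plan is to derive \cref{cor:inapprox} directly from \cref{thm:approx-hardness-monotone-distance} by the standard reduction that converts gap-hardness into inapproximability; essentially no new ingredients are needed beyond the already stated theorem. Given $\varepsilon>0$, I will show that a hypothetical polynomial-time algorithm $\mathcal{A}$ producing, on every polygon input $(P,\mathbf{s},\mathbf{c})$ with $m$ edges, a $\mathbf{c}$-monotone circuit walk to a $\mathbf{c}$-maximal vertex of length at most $m^{1-\varepsilon}\cdot\mathrm{OPT}$, would decide the promise problem guaranteed by \cref{thm:approx-hardness-monotone-distance} in polynomial time, and hence would yield $\P=\NP$.

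The first step is to pick any $\varepsilon'$ with $0<\varepsilon'<\varepsilon$, for instance $\varepsilon':=\varepsilon/2$, and to invoke \cref{thm:approx-hardness-monotone-distance} with this $\varepsilon'$. This hands us in polynomial time an instance $(P,\mathbf{s},\mathbf{c},k)$, where $P$ is a polygon with $m$ edges, such that it is \NP-hard to decide whether there exists a $\mathbf{c}$-monotone circuit walk from $\mathbf{s}$ to a $\mathbf{c}$-maximal vertex of length at most $k$ (the YES case), under the promise that otherwise no such walk has length at most $m^{1-\varepsilon'}\cdot k$ (the NO case).

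In the second step, I would run $\mathcal{A}$ on the input $(P,\mathbf{s},\mathbf{c})$, let $L$ be the length of the returned walk, and answer YES precisely when $L\le m^{1-\varepsilon}\cdot k$. Correctness in the YES case is immediate: since $\mathrm{OPT}\le k$, the approximation guarantee gives $L\le m^{1-\varepsilon}\cdot\mathrm{OPT}\le m^{1-\varepsilon}\cdot k$, and we output YES. In the NO case, every valid walk has length strictly more than $m^{1-\varepsilon'}\cdot k$, hence $L\ge\mathrm{OPT}>m^{1-\varepsilon'}\cdot k$; because polygons have $m\ge 3$ and $1-\varepsilon'>1-\varepsilon$, we obtain $m^{1-\varepsilon'}>m^{1-\varepsilon}$, so $L>m^{1-\varepsilon}\cdot k$ and the algorithm correctly answers NO.

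I do not foresee any genuine technical obstacle: \cref{cor:inapprox} is a purely syntactic consequence of the gap formulation of \cref{thm:approx-hardness-monotone-distance}, whose proof carries all the combinatorial difficulty. The only small point to be careful about is choosing $\varepsilon'$ strictly below $\varepsilon$, so that the strict separation $m^{1-\varepsilon'}>m^{1-\varepsilon}$ survives for every polygon produced by the reduction; this is automatic since $m\ge 3$ and $\varepsilon-\varepsilon'>0$.
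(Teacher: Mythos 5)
Your argument is correct and is exactly the standard gap-to-inapproximability reduction that the paper has in mind when it calls \cref{cor:inapprox} a ``direct consequence'' of \cref{thm:approx-hardness-monotone-distance}. The detour through $\varepsilon'<\varepsilon$ is unnecessary: invoking \cref{thm:approx-hardness-monotone-distance} with the same $\varepsilon$ already yields $\mathrm{OPT}>m^{1-\varepsilon}k$ in the NO case, so the test ``$L\le m^{1-\varepsilon}k$'' separates the two promise cases directly.
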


In particular, this shows that it is \NP-hard to approximate shortest monotone circuit walks to an optimum to within a factor of $m^{1-\varepsilon}$ for linear programs defined by $m$ inequalities.
Interestingly, one can observe that the inapproximability guarantee in \cref{cor:inapprox} is essentially best-possible, as it cannot be improved to any function in $\Omega(m)$: 
\begin{remark}[$\ast$]\label{obs:bruteforce}
    For every constant $K\in \mathbb{N}$, there exists an efficient algorithm that, given as input a polygon $P$ defined by $m$ inequalities, a starting vertex $\mathbf{s}$ of $P$ and a direction $\mathbf{c}\in \mathbb{Q}^2$, computes a $\mathbf{c}$-monotone circuit walk from $\mathbf{s}$ to a $\mathbf{c}$-optimal vertex whose length is at most $\max\{\frac{m}{K}, 1\}$ times the length of a shortest such walk. 
\end{remark}
The simple proof of \cref{obs:bruteforce} can be found\footnote{Throughout this paper, statements whose proofs are deferred to the appendix are marked with a $\ast$-symbol.} in \cref{sec:missing-proofs}.

Summarizing, the above results show that finding monotone shortest circuit walks is meaningfully harder than finding monotone shortest paths in the graph of the polytope.

\paragraph*{The role of degeneracy.} In the theory surrounding the simplex method, one typically assumes without loss of generality that the feasible region forms a \emph{simple} polytope, meaning that each vertex is determined by precisely dimension many inequalities.
Otherwise, the simplex method may follow steps that do not correspond to moving along edges, called \emph{degenerate pivots}. Note that in this case, the number of steps taken in the path can potentially significantly underestimate the run-time, as it does not account for these degenerate pivots.
For circuit augmentation schemes, degeneracy is equally relevant.
Namely, for circuit augmentation algorithms as implemented by \textcite{PolyhedralModel}, one needs to find an initial feasible circuit.
For degenerate polytopes this task requires some computation, but for simple polytopes, one can simply initialize at any improving edge direction from a simplex pivot.
Hence, in both settings, the question of whether one can find shortest paths on simple polytopes in polynomial time is motivated.
In fact, the question of whether computing the diameter of a simple polytope is \NP-hard is asked in the commentary following Problem $10$ of the 2003 survey \cite{KaibelPfetsch2003}.
The computational complexity of finding shortest (monotone) paths in graphs of simple polytopes is also stated explicitly as an open question in the discussion following Theorem $2$ in \cite{CircuitPivotRules}.

In contrast to this, to the best of our knowledge, all previous hardness results for finding shortest paths or circuit walks in polytopes are only for highly degenerate instances, where many inequalities meet at a vertex. Given the above explanation for why simple polytopes should be considered particularly relevant, we would like to emphasize that since polygons are simple polytopes, \cref{thm:NP-hard-polygons} is the first result of its kind that also establishes hardness for simple polytopes.
\begin{corollary}
\textsc{Monotone Circuit Distance} is \NP-hard for simple polytopes. 
\end{corollary}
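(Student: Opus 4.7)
The plan is to derive this corollary immediately from \cref{thm:NP-hard-polygons}. The entire argument rests on the observation that polygons are simple polytopes: by definition, a $d$-dimensional polytope is \emph{simple} if every vertex lies on exactly $d$ facets, and in dimension $d=2$ the facets of a polygon are its edges, with every vertex being the intersection of precisely two of them. Hence the class of polygons is contained in the class of simple polytopes, and any \NP-hardness result for the former transfers to the latter.

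To make the transfer formal, I would take any instance $(P,\mathbf{s},\mathbf{c},k)$ produced by the reduction of \cref{thm:NP-hard-polygons} and argue that it is already a valid simple-polytope instance of \textsc{Monotone Circuit Distance}. Since that theorem explicitly describes the output polygon as having $m$ edges (defined by an inequality system of matching size), each inequality corresponds to a genuine facet and no vertex is incident to more than two active inequalities. Therefore the instance certifies \NP-hardness over the restricted class of simple polytopes.

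If one were concerned that some variant of the reduction could introduce redundant inequalities (which would spoil simplicity at the level of the description even though $P$ itself remains simple), such redundancies can be stripped in polynomial time as a preprocessing step. In fixed dimension this is trivial: for $d=2$ one can check in constant time per candidate whether a given inequality is implied by the remaining ones at each vertex. Removing redundant inequalities alters neither the polygon $P$ nor the set of $\mathbf{c}$-monotone circuit walks, so the reduction remains valid. I do not foresee any genuine obstacle; the corollary is essentially a restatement of \cref{thm:NP-hard-polygons} through the lens of the definition of a simple polytope.
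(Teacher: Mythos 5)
Your proposal is correct and follows exactly the paper's reasoning: the corollary is presented as an immediate consequence of \cref{thm:NP-hard-polygons} together with the observation that polygons are simple polytopes. Your extra care about redundant inequalities is reasonable but not needed, since the paper's construction already produces a non-redundant description.
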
  

\paragraph*{Extending to higher dimensions.}

A priori, our hardness lower bound in \cref{cor:inapprox} could disappear if one considers $d$-dimensional polytopes for $d \geq 3$.
However, using a product of the constructed polygon with a simplex we can lift our results to higher dimensions.
\begin{lemma}[$\ast$]
\label{thm:anyfixeddim}
For every $d\in \mathbb{Z}_{\geq 2}$, given as input a polygon $P\in \mathbb{R}^2$, a vector $\mathbf{c}\in \mathbb{Q}^2$, and a vertex $\mathbf{s}$ of $P$, one can efficiently determine a $d$-dimensional polytope $P_d\in \mathbb{R}^d$, a vector $\mathbf{c}_d\in \mathbb{Q}$, and a vertex $\mathbf{s}_d$ of $P_d$ such that the following holds:
The length of a shortest $\mathbf{c}$-monotone circuit walk from $\mathbf{s}$ to a $\mathbf{c}$-maximal point of $P$ agrees with the length of a shortest $\mathbf{c}_d$-monotone circuit walk from $\mathbf{s}_d$ to a $\mathbf{c}_d$-maximal point of $P_d$. 
Furthermore, if $P$ has $m$ edges, one can choose $P_d$ to have $m+d-2$ facets.
\end{lemma}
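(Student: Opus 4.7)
I would define $P_d$ as the $(d-2)$-fold iterated pyramid over $P$: set $P_2 := P$ and recursively $P_{k+1} := \mathrm{conv}(P_k \cup \{\mathbf{a}_{k+1}\}) \subset \mathbb{R}^{k+1}$, where $P_k$ is embedded into $\mathbb{R}^{k+1}$ via $\mathbf{x} \mapsto (\mathbf{x},0)$ and $\mathbf{a}_{k+1}$ is any apex with strictly positive last coordinate. Each pyramid iteration preserves the previous polytope as a facet (the ``base'') and turns each of the previous facets into the base of a new pyramidal facet, so the facet count grows by exactly one per iteration; hence $P_d$ has $m+(d-2)$ facets in total. Set $\mathbf{s}_d := (\mathbf{s},0,\ldots,0)$ and $\mathbf{c}_d := (\mathbf{c},-M,\ldots,-M)$, where $M>0$ is a sufficiently large constant (polynomial in the input bit-length; e.g., larger than twice the $\mathbf{c}$-range over vertices of $P$).

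The main idea is that the heavy penalty on the last $d-2$ coordinates confines every $\mathbf{c}_d$-monotone circuit walk from $\mathbf{s}_d$ to the base face $P \times \{\mathbf{0}\}^{d-2}$. At any point of the walk that lies in this base, a circuit step that increases some extra coordinate by $\delta>0$ incurs a $\mathbf{c}_d$-loss of at least $M\delta$, overwhelming any possible $\mathbf{c}$-gain and so violating strict monotonicity; conversely, any circuit step with a strictly negative extra-coordinate component is infeasible from a point with zero extra coordinates. By induction the walk stays in $P \times \{\mathbf{0}\}^{d-2}$ throughout. Since the unique $\mathbf{c}_d$-maximizer of $P_d$ also lies in this base (by the same argument) and $\mathbf{c}_d$ restricted to the base agrees with $\mathbf{c}$ on $P$, this yields a length-preserving bijection between $\mathbf{c}_d$-monotone shortest circuit walks from $\mathbf{s}_d$ to a $\mathbf{c}_d$-maximal point of $P_d$ and $\mathbf{c}$-monotone shortest circuit walks from $\mathbf{s}$ to a $\mathbf{c}$-maximal point of $P$.

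The principal technical obstacle is to show that the circuit directions of $P_d$ lying in the subspace $\mathbb{R}^2 \times \{\mathbf{0}\}^{d-2}$ are precisely the (embeddings of the) edge directions of $P$, so that ``in-base'' circuit steps of $P_d$ correspond one-to-one with circuit steps of $P$, which by \cref{obs:edgesarecircuits} are exactly the edge directions of $P$. This amounts to a rank analysis of the $(m+d-2)\times d$ constraint matrix of $P_d$, which inherits a natural block-pyramidal structure from the iterated construction: to realize a kernel direction concentrated in the first two coordinates, one is forced to select all $d-2$ ``base''-type facet rows together with exactly one of the $m$ iterated ``lateral'' rows, and the contribution of that lateral row in the first two coordinates is dictated by the associated facet of $P$, so the resulting kernel is spanned by the corresponding edge direction of $P$. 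Once this is verified, the remaining verifications are routine.
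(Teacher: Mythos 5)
You take a genuinely different route from the paper's. The paper sets $P_d = P \times \Delta_{d-2}$ (a product with a $(d-2)$-simplex $\Delta_{d-2}=\mathrm{conv}(\mathbf{0},e_1,\dots,e_{d-2})$), $\mathbf{s}_d = \mathbf{s}\times e_{d-2}$ and $\mathbf{c}_d = \mathbf{c}\times e_{d-2}$, and invokes the known fact (Lemma~3.9 of \cite{CircuitDiamConjecture}) that the circuits of a product $R\times Q$ are exactly the vectors $\mathbf{g}\times \mathbf{0}$ and $\mathbf{0}\times \mathbf{h}$ for circuits $\mathbf{g}$ of $R$ and $\mathbf{h}$ of $Q$. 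Since the only simplex circuits feasible at $e_{d-2}$ are $-e_{d-2}$ and $e_i-e_{d-2}$, all of which strictly decrease $\mathbf{c}_d$, every monotone walk from $\mathbf{s}_d$ is confined to $P\times e_{d-2}$ and the length-preserving bijection is immediate — no penalty parameter $M$ is needed at all. Your iterated pyramid reaches the claimed facet count $m+d-2$ exactly and the confinement strategy is the same in spirit, but you forgo the product circuit decomposition, and the step you call ``routine'' is where the real work sits. Specifically, to show that the circuits of $P_d$ supported on $\mathbb{R}^2\times\{\mathbf{0}\}^{d-2}$ are exactly the edge directions of $P$, the $(d-1)$-row subsystem of all $d-2$ ``base'' rows plus one lateral row is not the only case: you must also handle subsystems using fewer base rows together with two lateral rows whose $(x_1,x_2)$-parts are parallel, which arise whenever $P$ has parallel edges, and check that these still give an edge direction. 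That does go through — a convex polygon has at most two parallel edges, so any qualifying $(d-1)$-row subsystem must contain at least $d-3$ of the $d-2$ base rows — but it is a genuine argument, and is precisely what the product formula sidesteps. Relatedly, your bound ``$M>$ twice the $\mathbf{c}$-range over vertices of $P$'' is not obviously sufficient: the correct threshold is $\max_{\mathbf{g}}(c_1 g_1 + c_2 g_2)/\sum_{j>2}g_j$ over circuits $\mathbf{g}$ with some positive out-of-base component, which depends on your (unspecified) apex coordinates; pinning the apexes down (e.g.\ $\mathbf{a}_j=e_j$) yields a polynomial bound, but this should be made explicit. Finally, note that your $P_d$ is not a simple polytope for $m>3$, whereas the product is; this does not affect the lemma but runs against the paper's emphasis on simple polytopes elsewhere.
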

Combining \cref{cor:inapprox} with \cref{thm:anyfixeddim}, we immediately obtain as a consequence the following hardness result in any fixed dimension $d\geq 2$.
\begin{corollary}\label{cor:inapprox-any-d}
For every $\varepsilon>0$ and every $d\in \mathbb{Z}_{\geq 2}$ the following is \NP-hard:
Given as input a $d$-dimensional polytope $P$ with $m$ facets, a starting vertex $\mathbf{s}$, and a vector $\mathbf{c}\in \mathbb{Q}^d$, compute a monotone circuit walk from $\mathbf{s}$ to a $\mathbf{c}$-maximal vertex of $P$ approximating the minimum possible length of such a walk to within a factor of $(m-d)^{1-\varepsilon}$.
\end{corollary}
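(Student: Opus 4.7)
The plan is to derive \cref{cor:inapprox-any-d} as an essentially immediate consequence of \cref{cor:inapprox} combined with the dimension-lifting construction of \cref{thm:anyfixeddim}. Concretely, I would argue by reduction: suppose, for contradiction, that for some fixed $\varepsilon>0$ and some $d\ge 2$ there exists a polynomial-time algorithm $\mathcal{A}$ that, on every $d$-dimensional polytope $P'$ with $m'$ facets (and suitable $\mathbf{s}',\mathbf{c}'$), returns a $\mathbf{c}'$-monotone circuit walk from $\mathbf{s}'$ to a $\mathbf{c}'$-maximal vertex whose length is within a factor of $(m'-d)^{1-\varepsilon}$ of optimal. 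I would then use $\mathcal{A}$ to approximate shortest monotone circuit walks on polygons and contradict \cref{cor:inapprox}.

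Given an instance $(P,\mathbf{s},\mathbf{c})$ of the polygon problem with $P$ having $m$ edges, I would apply \cref{thm:anyfixeddim} to produce in polynomial time a $d$-dimensional polytope $P_d$ with $m' = m+d-2$ facets, together with $\mathbf{s}_d,\mathbf{c}_d$, such that the shortest $\mathbf{c}$-monotone circuit walk in $P$ from $\mathbf{s}$ and the shortest $\mathbf{c}_d$-monotone circuit walk in $P_d$ from $\mathbf{s}_d$ have the same length. Running $\mathcal{A}$ on $(P_d,\mathbf{s}_d,\mathbf{c}_d)$ therefore yields a walk whose length approximates the polygon optimum to within a factor of $(m'-d)^{1-\varepsilon} = (m-2)^{1-\varepsilon}$. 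If the returned walk uses circuit directions of $P_d$ that are not circuit directions inherited from $P$, I would rely on the structural correspondence implicit in the product construction of \cref{thm:anyfixeddim} to project the walk down to a monotone circuit walk in $P$ of no greater length.

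To match this against \cref{cor:inapprox}, I would pick $\varepsilon' \in (0,\varepsilon)$ so that $(m-2)^{1-\varepsilon} \le m^{1-\varepsilon'}$ for all sufficiently large $m$ (any $\varepsilon'<\varepsilon$ works, using $d$ constant). The resulting polynomial-time procedure then $m^{1-\varepsilon'}$-approximates shortest monotone circuit walks on polygons with $m$ edges, contradicting \cref{cor:inapprox} applied with parameter $\varepsilon'$. Only finitely many small values of $m$ need to be handled separately, which can be done by brute force in constant time. The only non-routine ingredient is checking that the walk returned by $\mathcal{A}$ on $P_d$ can be converted back to a monotone circuit walk in $P$ of equal length; this is expected to be the main (though minor) obstacle and should follow directly from how the lifting of \cref{thm:anyfixeddim} is defined.
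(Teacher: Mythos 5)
Your plan is correct and matches the paper's intended argument, which likewise just combines \cref{cor:inapprox} with the product construction of \cref{thm:anyfixeddim}. Two small simplifications are possible: the $\varepsilon'$ detour is unnecessary, since $(m-2)^{1-\varepsilon}\le m^{1-\varepsilon}$ already, so \cref{cor:inapprox} with the same $\varepsilon$ gives the contradiction; and the worry about projecting the walk back is resolved by the proof of \cref{thm:anyfixeddim}, which shows that \emph{every} $\mathbf{c}_d$-monotone circuit walk from $\mathbf{s}_d$ lives in the facet $P\times e_{d-2}$ and uses only lifted circuits of $P$ (alternatively one can bypass projection entirely by noting that \cref{thm:anyfixeddim} preserves the optimal value exactly, so the gap from \cref{thm:approx-hardness-monotone-distance} carries over to $P_d$ with factor $m^{1-\varepsilon}\ge (m'-d)^{1-\varepsilon}$).
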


\paragraph*{Organization.} In \cref{sec:overview} we discuss the necessary definitions and proceed to explain the main ideas of our reductions with the goal of conveying the intuition, without going into all technical details of the proofs. These details are then later supplied in \cref{sec:monotone-circuit-diameter-proof}, where the full proofs of our main technical lemmas are given. 
\section{Overview of the proof}\label{sec:overview}

In this section we will give an overview of the proof of \cref{thm:NP-hard-polygons} and \cref{thm:approx-hardness-monotone-distance}.
We begin by recalling the necessary formal definitions related to circuits.

\begin{definition}[\cite{CircuitPivotRules}]\label{def:circuitmove}
Let $P=\{\mathbf{x}\in \mathbb{R}^d|A\mathbf{x}\le \mathbf{b}\}$ with $A\in \mathbb{R}^{m\times d}, b \in \mathbb{R}^m$ be a polyhedron of dimension $d$.

\begin{enumerate}
    \item A \emph{circuit}\footnote{We remark that in the literature a different (but equivalent) definition of circuits in terms of minimal supports is more prevalent, but for our purposes the definition via submatrices given here is more convenient. The equivalence of our definition and the standard definition as given in~\cite{CircuitPivotRules} can be easily checked, and follows for instance from Lemma~13 in~\cite{BORGWARDT2022}.} of $P$ is a vector $\mathbf{g} \in \mathbb{R}^d\setminus \{\mathbf{0}\}$ for which there exists an index set $I\subseteq [m]$ of size $(d-1)$ such that the $(d-1)\times d$-submatrix $A_I$ of $A$ has rank $d-1$, and such that $A_I\mathbf{g}=0$. 
\item Given a point $\mathbf{x} \in P$, a \emph{circuit move} at $\mathbf{x}$ consists of selecting a circuit $\mathbf{g}$ of $P$ and moving to a new point $\mathbf{x}'=\mathbf{x}+\alpha \mathbf{g}$, where $\alpha > 0$ is \emph{maximal} w.r.t.\ $\mathbf{x}+\alpha \mathbf{g} \in P$.
\item A \emph{circuit walk} of length $k$ is a sequence $(\mathbf{x}_0,\mathbf{x}_1,\ldots,\mathbf{x}_k)$ of points in $P$ such that for every $i=1,\ldots,k$, we have that $\mathbf{x}_i$ is obtained from $\mathbf{x}_{i-1}$ by a circuit move.
\item Given a cost vector $\mathbf{c}\in \mathbb{R}^d$, we say a circuit walk $(\mathbf{x}_0,\mathbf{x}_1,\ldots,\mathbf{x}_k)$ is \emph{$\mathbf{c}$-monotone} if $\mathbf{c}^\top\mathbf{x}_0 < \mathbf{c}^\top\mathbf{x}_1< \ldots<\mathbf{c}^\top\mathbf{x}_k$.
\item Given a point $\mathbf{x} \in P$ and a cost vector $\mathbf{c}\in \mathbb{R}^d$, the \emph{$\mathbf{c}$-monotone circuit distance} $d^P_\mathbf{c}(\mathbf{x})$ from $\mathbf{x}$ is the length of a shortest $\mathbf{c}$-monotone circuit walk that starts in $\mathbf{x}$ and ends in a $\mathbf{c}$-maximal point of $P$.
\end{enumerate}
\end{definition}
As alluded to in the introduction, we will repeatedly use the fact that circuits coincide with edge-directions for polygons.
\begin{observation}[$\ast$]\label{obs:edgesarecircuits}
   Let $A \in \mathbb{R}^{m\times 2}$ and $\mathbf{b}\in \mathbb{R}^m$.
   Let $P = \{\mathbf{x} \in \mathbb{R}^2| A\mathbf{x} \leq \mathbf{b}\}$ be a non-empty polygon.
   If no inequality of $A\mathbf{x}\leq \mathbf{b}$ is redundant, then the circuits of $P$ correspond precisely to the vectors parallel to some edge of $P$.
\end{observation}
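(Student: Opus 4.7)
The plan is to exploit the low-dimensional setting: since $d=2$, a circuit is simply a nonzero vector $\mathbf{g}$ orthogonal to a single nonzero row $A_i$ of $A$, and since $P$ is a polygon (hence $2$-dimensional) with no redundant inequalities, each row $A_i$ corresponds to a genuine edge of $P$ lying on the line $\{A_i\mathbf{x}=b_i\}$. I would first record the preliminary observation that $A_i\neq \mathbf{0}$ for every $i\in [m]$: otherwise the row would encode either a trivially satisfied inequality (redundant, contradicting the hypothesis) or an infeasible one (contradicting $P\neq \emptyset$). This also matches the rank-$1$ requirement in the definition of a circuit.

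Next, I would show the bijection between edges of $P$ and the indices $i\in [m]$. Since $P$ is $2$-dimensional and $A_i\mathbf{x}\le b_i$ is non-redundant, the face $F_i := P\cap \{\mathbf{x}\in \mathbb{R}^2\mid A_i\mathbf{x}=b_i\}$ must be a facet (and, as a proper face of a polygon, a $1$-dimensional edge); otherwise removing the $i$-th inequality would not alter the feasible region, a standard fact from polyhedral theory (e.g.\ \cite{zieglerbook}). Conversely, every edge of $P$ arises this way, because a polygon has no facets outside the bounding lines $\{A_i\mathbf{x}=b_i\}$.

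With these two ingredients in place, the equivalence follows by direction-counting. On the one hand, if $\mathbf{g}$ is a circuit of $P$, then by definition there is an index $i\in [m]$ with $A_i\neq \mathbf{0}$ and $A_i\mathbf{g}=0$, so $\mathbf{g}$ is parallel to the line $\{A_i\mathbf{x}=b_i\}$, and hence to the edge $F_i$ contained in it. On the other hand, if $\mathbf{g}$ is parallel to some edge $e$ of $P$, then $e\subseteq F_i$ for some $i\in [m]$, so $\mathbf{g}$ lies in the kernel of the rank-$1$ row $A_i\neq \mathbf{0}$, making $\mathbf{g}$ a circuit.

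The only step that requires any care is the identification of non-redundant inequalities with facets of $P$, but this is a direct consequence of $P$ being full-dimensional (which follows from $P$ being a polygon, i.e., a $2$-dimensional convex body) together with the standard polyhedral theory fact that facet-defining inequalities are exactly those that cannot be removed without altering $P$. No genuinely new obstacle arises beyond this routine verification.
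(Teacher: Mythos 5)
Your proposal is correct and takes essentially the same route as the paper's proof: both observe that since $d=2$ a $(d-1)\times d$-submatrix of $A$ is a single row, and that non-redundancy forces each row to be facet-defining, so circuits are exactly vectors orthogonal to one of the rows, which are exactly vectors parallel to some edge. You are slightly more careful in explicitly verifying $A_i\neq\mathbf{0}$ (the rank-$1$ requirement in the definition of a circuit), which the paper leaves implicit, but there is no substantive difference.
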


As the next ingredient we need to specify the problem we want to reduce to \textsc{Monotone Circuit Distance}.
We will use a reduction from a certain promise variant of subset sum in which we are allowed to use an element more than once.

\begin{mdframed}[innerleftmargin=0.5em, innertopmargin=0.5em, innerrightmargin=0.5em, innerbottommargin=0.5em, userdefinedwidth=0.95\linewidth, align=center]
	{\textsc{Exact Subset Sum with Repetition}}
	\sloppy

	\noindent
	\textbf{Input:} A sequence of distinct non-negative integers $a_1, \dots, a_n, S, k\in \mathbb{Z}_{\geq 0}$, such that $k\leq n$ and 
		$\sum_{i=1}^n r_i a_i = S$ for $r\in \mathbb{Z}^n_{\geq 0}$ implies $\sum_{i=1}^n r_i = k$.

	\noindent
	\textbf{Decision:} Is there a vector $r\in \mathbb{Z}^n_{\geq 0}$ such that
	\[
		\sum_{i=1}^n r_i a_i = S \, \, \, \, \, ?
	\]
\end{mdframed}

Let us motivate the two restrictions of the \textsc{Exact Subset Sum with Repetition} problem a bit more.
The reduction we will present is geometric in nature.
As such, we have to pay attention to the encoding length of the polygons we construct.
It will turn out that the encoding length of the polygon that we construct depends polynomially on $k$, and thus we must enforce $k$ to be polynomial for the instances we consider. 

Additionally, the construction will be designed in such a way that there is a monotone circuit walk of length $2\sum_{i=1}^n r_i$, whenever $\sum_{i=1}^n r_i a_i = S$.
Thus, by adding the promise that $\sum_{i=1}^n r_i = k$ in this case, we will be able to tie the length of the shortest monotone circuit walk to the feasibility of the subset sum instance.
Hardness of the \textsc{Exact Subset Sum with Repetition} problem follows from a standard reduction which we provide in \cref{sec:missing-proofs}.

\begin{theorem}[$\ast$]\label{thm:subset-sum-special-hardness}
	The \textsc{Exact Subset sum with Repetition} problem is \NP-hard.
\end{theorem}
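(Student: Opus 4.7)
My plan is to reduce from the classical \textsc{Subset Sum} problem (\NP-hard by \textcite{karp2010reducibility}): given positive integers $b_1,\ldots,b_n$ and a target $T$, decide whether some subset $I \subseteq [n]$ satisfies $\sum_{i\in I} b_i = T$. The goal will be to encode, within a single unbounded non-negative integer combination, both the arithmetic constraint $\sum_{i\in I} b_i = T$ and the ``each element either chosen or not'' constraint, while simultaneously producing an instance whose feasible multisets all have the same cardinality $k$ (so that the required promise holds automatically).

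The approach is to use a single large base $B$ to separate three disjoint ``digit positions'' in which distinct pieces of information are encoded. I would set $B := 2 + 2n + T + \sum_i b_i$ (of polynomial bit-length) and, for each $i \in [n]$, introduce a pair of \emph{dual} elements
\[
    a_i^+ := B^{2n+2} + b_i \cdot B^{n+1} + B^i, \qquad a_i^- := B^{2n+2} + B^i,
\]
representing ``include'' and ``exclude'' $b_i$ respectively. The $2n$ values so produced are pairwise distinct. I would then set
\[
S := n \cdot B^{2n+2} + T \cdot B^{n+1} + \sum_{i=1}^n B^i \quad \text{and} \quad k := n,
\]
and output the instance $(a_1^+, a_1^-, \ldots, a_n^+, a_n^-, S, k)$, which clearly has polynomial encoding length and satisfies $k \le 2n$.

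To prove correctness, I would take any $r_i^\pm \in \mathbb{Z}_{\geq 0}$ with $\sum_i (r_i^+ a_i^+ + r_i^- a_i^-) = S$ and first bound the multiplicities using $a_i^\pm \geq B^{2n+2}$, which yields $r_i^\pm \leq S/B^{2n+2} \leq n+1 < B$. Regrouping the sum by powers of $B$, the equation becomes
\[
\Bigl(\sum_i (r_i^+ + r_i^-)\Bigr) B^{2n+2} + \Bigl(\sum_i r_i^+ b_i\Bigr) B^{n+1} + \sum_{i=1}^n (r_i^+ + r_i^-) B^i = n B^{2n+2} + T B^{n+1} + \sum_{i=1}^n B^i.
\]
The bounds $r_i^+ + r_i^- < B$, $T < B$, and $\sum_i r_i^+ b_i \leq (n+1)\sum_i b_i < B^{n+1}$ will guarantee that no carries occur between the three ``active'' positions $B^i$, $B^{n+1}$, and $B^{2n+2}$ in the base-$B$ expansion. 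Matching coefficients then forces $r_i^+ + r_i^- = 1$ for every $i$ and $\sum_i r_i^+ b_i = T$, so $I := \{i : r_i^+ = 1\}$ is a valid \textsc{Subset Sum} solution. The converse direction is immediate: any solution $I$ to \textsc{Subset Sum} yields the tuple $r_i^+ = \mathbf{1}_{i \in I}$, $r_i^- = 1 - r_i^+$. Finally, every feasible solution automatically satisfies $\sum_i (r_i^+ + r_i^-) = n = k$, certifying that the output obeys the promise.

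The main obstacle will be the careful execution of the ``no carry'' verification: one must simultaneously bound the individual multiplicities $r_i^\pm$ and the weighted sum $\sum_i r_i^+ b_i$ so that each piece of information stays confined to its intended positional window. Once this is done, equality at each power of $B$ becomes a genuine scalar identity, establishing a bijective correspondence between solutions of the produced \textsc{Exact Subset Sum with Repetition} instance and subsets of $[n]$ summing to $T$, and thereby transferring the \NP-hardness of \textsc{Subset Sum}.
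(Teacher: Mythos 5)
Your proposal is correct, and it takes a genuinely different route from the paper. The paper reduces from \textsc{3-Dimensional Matching}: for each triple $e=(x_i,y_j,z_h)\in E$ it creates one number $B^i+B^{j+N}+B^{h+2N}+B^{3N}$ in base $B=N+1$, and the target forces exactly $N$ of them to be used exactly once, encoding a perfect matching. You instead reduce from ordinary \textsc{Subset Sum} and build, for each original item $b_i$, a dual \emph{include}/\emph{exclude} pair $a_i^\pm$ that both contribute to a shared ``tally'' digit at position $B^{2n+2}$ and a shared ``presence'' digit at position $B^i$, while only $a_i^+$ contributes $b_i$ to the middle window at position $B^{n+1}$. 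The underlying trick — a dominant high-order digit whose coefficient in the target pins down $\sum_i r_i$ and thereby makes the cardinality promise hold automatically — is the same, but the surrounding machinery differs. One advantage of the paper's 3DM-based reduction is that the feasibility condition $k\le n$ comes from the benign assumption $|E|\ge N$; your construction gets it for free since $k=n$ and there are $2n$ items. A minor edge you should mention explicitly when writing this up is that the $a_i^\pm$ are pairwise distinct even when the $b_i$ repeat, because of the $B^i$ ``presence'' term (and $a_i^+\neq a_j^-$ because $b_i\ge 1$ forces a nonzero digit in the middle window of every $a_i^+$). I have checked the no-carry bookkeeping you flagged as the main obstacle: with $B=2+2n+T+\sum_i b_i$ one has $\sum_i(r_i^++r_i^-)\le n<B$, each $r_i^++r_i^-<B$, $\sum_i r_i^+b_i\le n\sum_i b_i<nB<B^{n+1}$, and $T<B$, so the three windows at positions $\{1,\ldots,n\}$, $\{n+1,\ldots,2n+1\}$, and $\{2n+2\}$ stay disjoint in base $B$ and coefficient-matching gives $r_i^++r_i^-=1$ and $\sum_i r_i^+b_i=T$, as you intended.
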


In the following we construct polygons with specific monotone circuit distances.
In later constructions we will then modify these polygons using affine transformations to achieve certain desirable properties.
In order to keep control of monotone circuit distances while performing these transformations, it will be useful to note the following fact. Roughly speaking, it states that affine transformations map monotone circuit walks to monotone circuit walks, and thus also monotone circuit distances (with respect to the mapped objective direction) are preserved under these transformations.

\begin{observation}[$\ast$]\label{obs:transform}
	Let $P = \{\mathbf{x}\in \mathbb{R}^2\colon A\mathbf{x}\leq \mathbf{b}\}$ be a polygon defined by $A\in \mathbb{Q}^{m\times 2}$ and $\mathbf{b}\in \mathbb{Q}^m$.
  Consider an affine transformation defined by an invertible matrix $H\in \mathbb{Q}^{2\times 2}$ and a translation vector $d\in \mathbb{Q}^2$.
	Let $W = (\mathbf{x}_1, \dots, \mathbf{x}_n)$ be a circuit walk in $P$. Then $W':=(H\mathbf{x}_1 + \mathbf{d}, \dots, H\mathbf{x}_n + \mathbf{d})$ is a circuit walk in the transformed polytope $HP + \mathbf{d} = \{\mathbf{x}\in \mathbb{R}^2\colon AH^{-1}\mathbf{x} \leq \mathbf{b} + AH^{-1}\mathbf{d}\}$. Furthermore, if $W$ is $\mathbf{c}$-monotone for some $\mathbf{c}\in \mathbb{R}^2$, then $W'$ is $\mathbf{c}'$-monotone for $\mathbf{c}'\coloneqq (H^\top)^{-1}\mathbf{c}$.
\end{observation}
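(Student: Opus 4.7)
The plan is to verify each part of the statement directly from the definitions, exploiting the fact that the map $\mathbf{x}\mapsto H\mathbf{x}+\mathbf{d}$ is an affine bijection and thus preserves the incidence structure of $P$. First I would confirm the stated $H$-description of the image polytope: since $\mathbf{y}=H\mathbf{x}+\mathbf{d}$ is equivalent to $\mathbf{x}=H^{-1}(\mathbf{y}-\mathbf{d})$, substituting into $A\mathbf{x}\leq \mathbf{b}$ yields $AH^{-1}\mathbf{y}\leq \mathbf{b}+AH^{-1}\mathbf{d}$, which establishes the claimed description of $HP+\mathbf{d}$.

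Next I would verify that circuits transform covariantly under $H$, namely that whenever $\mathbf{g}\in \mathbb{R}^2\setminus\{\mathbf{0}\}$ is a circuit of $P$, the vector $H\mathbf{g}$ is a circuit of $HP+\mathbf{d}$. By \cref{def:circuitmove}, there exists $I\subseteq[m]$ of size $d-1=1$ such that the row $A_I$ has rank $1$ (is nonzero) and $A_I\mathbf{g}=0$. Taking the same index set for the transformed constraint matrix $A'\coloneqq AH^{-1}$, the submatrix $A'_I=A_IH^{-1}$ is still nonzero because $H^{-1}$ is invertible, and $A'_I(H\mathbf{g})=A_IH^{-1}H\mathbf{g}=A_I\mathbf{g}=0$. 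This gives the required certificate for $H\mathbf{g}$. I would then use this to show circuit moves are preserved: if $\mathbf{x}_i=\mathbf{x}_{i-1}+\alpha\mathbf{g}$ with $\alpha>0$ maximal in $P$, then $H\mathbf{x}_i+\mathbf{d}=(H\mathbf{x}_{i-1}+\mathbf{d})+\alpha(H\mathbf{g})$; since the affine bijection sends $P$ to $HP+\mathbf{d}$, the set of $\beta\geq 0$ with $\mathbf{x}_{i-1}+\beta\mathbf{g}\in P$ agrees with the set of $\beta\geq 0$ with $(H\mathbf{x}_{i-1}+\mathbf{d})+\beta(H\mathbf{g})\in HP+\mathbf{d}$. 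Therefore maximality of $\alpha$ in $P$ transfers to maximality in $HP+\mathbf{d}$, and $W'$ is a genuine circuit walk.

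Finally, to handle the monotonicity claim I would set $\mathbf{c}'\coloneqq(H^\top)^{-1}\mathbf{c}$, so that $(\mathbf{c}')^\top=\mathbf{c}^\top H^{-1}$ and hence $(\mathbf{c}')^\top H=\mathbf{c}^\top$. For every $\mathbf{x}\in P$ this gives the identity
\[
(\mathbf{c}')^\top(H\mathbf{x}+\mathbf{d}) \;=\; \mathbf{c}^\top\mathbf{x}+(\mathbf{c}')^\top\mathbf{d},
\]
so the objective value along $W'$ differs from that along $W$ by the constant $(\mathbf{c}')^\top\mathbf{d}$. Strict inequalities between consecutive values are unaffected by an additive constant, so $\mathbf{c}$-monotonicity of $W$ translates into $\mathbf{c}'$-monotonicity of $W'$.

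There is no serious obstacle in this argument; the only point deserving care is the verification that the rank-one row certifying the circuit property for $\mathbf{g}$ remains rank-one after the substitution $A\mapsto AH^{-1}$, which is precisely where the invertibility of $H$ is used. Everything else is a direct consequence of the bijectivity of the affine map between $P$ and $HP+\mathbf{d}$.
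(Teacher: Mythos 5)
Your proposal is correct and takes essentially the same route as the paper: circuits transform as $\mathbf{g}\mapsto H\mathbf{g}$ (verified row-by-row using invertibility of $H$), and $\mathbf{c}'=(H^\top)^{-1}\mathbf{c}$ makes objective differences along the walk invariant. You are slightly more explicit than the paper in spelling out that maximality of the step length $\alpha$ is preserved under the affine bijection, but this is a minor elaboration rather than a different argument.
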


In our construction, we use as a building block a polygon with large monotone circuit distances.
\textcite{borgwardt2014edges} already showed that there are polygons with large circuit distances (linear in the number of edges). 
However, their proof is of existential nature and does not directly guarantee an efficient construction or a polynomial bound on the encoding length, which are both required for the purposes of our reduction.

Here, we present a new constructive proof for the case of monotone circuit distances which achieves both of these requirements.
\begin{theorem}\label{thm:linear-circuit-distance}
	Given any $\ell\in \mathbb{Z}_{\geq 0}$ one can efficiently determine a matrix $A_\ell\in \mathbb{Z}^{(2\ell+1)\times 2}$ and a vector $\mathbf{b}_\ell\in \mathbb{Z}^{2\ell+1}$, giving a non-redundant description of a polygon $P_\ell = \{\mathbf{x}\in \mathbb{R}^2\colon A_\ell\mathbf{x}\leq \mathbf{b}_\ell\}$ with the following properties:
\begin{enumerate}[label=\textnormal{(\roman*)}]
        \item\label{item:linear-polygon-vertices} The points $\mathbf{u}_\ell \coloneqq (0,1)^\top$ and $\mathbf{w}_\ell \coloneqq (0,-1)^\top$ are vertices of $P_\ell$ spanning an edge of $P_\ell$.
		\item\label{item:linear-polygon-distance} 
        Set $\mathbf{c}_0\coloneqq (1,0)^\top$.
        Then $P_\ell$ has a unique $\mathbf{c}_0$-maximal vertex $\mathbf{t}_\ell$ and the $\mathbf{c}_0$-monotone circuit distance from $\mathbf{u}_\ell$ and $\mathbf{w}_\ell$ to $\mathbf{t}_\ell$ equals $\ell$ each.
		\item\label{item:linear-polygon-encoding} The entries of $A_\ell$ and $\mathbf{b}_\ell$ are each at most $(8\ell+1)^\ell$ in absolute value. In particular, the encoding length of $A_\ell$ and $\mathbf{b}$ is polynomially bounded in $\ell$.
        \item\label{item:linear-polygon-edges} The number of edges of $P_\ell$ is $2\ell + 1$.
	\end{enumerate}
    Additionally, $P_\ell\setminus\{\mathbf{u}_\ell, \mathbf{w}_\ell\}$ lies completely within $\mathbb{R}_{\geq 0}\times (-1, 1)$.
\end{theorem}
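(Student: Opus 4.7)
The plan is to build $P_\ell$ as a strictly convex polygon symmetric about the $x$-axis, with $\ell+1$ top-chain vertices $\mathbf{u}^{(0)}=\mathbf{u}_\ell,\mathbf{u}^{(1)},\dots,\mathbf{u}^{(\ell)}=\mathbf{t}_\ell$ and their reflections $\mathbf{w}^{(0)}=\mathbf{w}_\ell,\dots,\mathbf{w}^{(\ell)}=\mathbf{t}_\ell$ on the bottom chain (sharing the endpoint $\mathbf{t}_\ell$), giving $2\ell+1$ vertices and, together with the vertical edge $\{\mathbf{u}_\ell,\mathbf{w}_\ell\}$, exactly $2\ell+1$ edges. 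I would place $\mathbf{u}^{(i)}=(x_i,y_i)$ with integer coordinates, $x_0=0$, $y_0=1$, $y_\ell=0$, $0<x_1<x_2<\dots<x_\ell$ and $1>y_1>\dots>y_\ell=0$, and choose $(x_i,y_i)$ so that the slopes $s_i=(y_{i+1}-y_i)/(x_{i+1}-x_i)$ of the top edges $e_i$ are strictly decreasing in $i$ (ensuring strict convexity) and grow in absolute value by at least a fixed geometric factor. A concrete recursive choice where the horizontal span $x_{i+1}-x_i$ grows like a factor of roughly $8\ell$ at each step (and $y_i$ is determined to keep all coordinates integral) produces integers bounded by $(8\ell+1)^\ell$, yielding properties \ref{item:linear-polygon-vertices}, \ref{item:linear-polygon-encoding}, and \ref{item:linear-polygon-edges} immediately, and confines all non-endpoint vertices to $\mathbb{R}_{>0}\times(-1,1)$.

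For property \ref{item:linear-polygon-distance}, the $\mathbf{c}_0$-maximal vertex is $\mathbf{t}_\ell$ by construction. The upper bound $d^{P_\ell}_{\mathbf{c}_0}(\mathbf{u}_\ell)\le\ell$ is exhibited by the walk $\mathbf{u}^{(0)}\to\mathbf{u}^{(1)}\to\dots\to\mathbf{u}^{(\ell)}=\mathbf{t}_\ell$: each step uses the direction of $e_i$, lands at the next top vertex (the next intersection with $\partial P_\ell$), and strictly increases $x$; the symmetric walk along the bottom chain works from $\mathbf{w}_\ell$. By \cref{obs:edgesarecircuits} these are circuit moves, and this argument never uses that we start at $\mathbf{u}_\ell$ other than that it is a vertex.

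For the matching lower bound, I would prove by induction on the number $k$ of $\mathbf{c}_0$-monotone circuit moves taken from $\mathbf{u}_\ell$ (resp.\ $\mathbf{w}_\ell$) that the current point $\mathbf{p}$ satisfies $p_1 \le x_k$. The base case $k=0$ is clear. For the induction step, observe that by \cref{obs:edgesarecircuits} the direction $\mathbf{g}$ of the move is (up to sign) the direction of some edge $e_j^{\mathrm{top}}$ or $e_j^{\mathrm{bot}}$, oriented so $g_1>0$; one then has to show that for every such $\mathbf{g}$, the maximal extension $\mathbf{p}+\alpha \mathbf{g}$ inside $P_\ell$ satisfies $(\mathbf{p}+\alpha\mathbf{g})_1\le x_{k+1}$. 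This reduces to a clean elementary inequality on the slopes $s_j$: using the current point lies in the strip $\{x\le x_k\}$ on the boundary of a strictly convex polygon whose edge slopes decrease in $i$, one shows that a line through $\mathbf{p}$ with slope $s_j$ (for any $j$) crosses the strip $\{x\le x_{k+1}\}$ before leaving $P_\ell$ on the opposite chain. The geometric-growth condition on the $s_i$'s is tuned precisely so that this crossing happens in time, which gives $p_1\le x_{k+1}$ as required. Applied inductively, this shows $\mathbf{t}_\ell$ (with first coordinate $x_\ell$) cannot be reached in fewer than $\ell$ moves.

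The main obstacle is finding and verifying the explicit slope recursion for the $(x_i,y_i)$ that simultaneously secures (a) strict convexity, (b) the geometric-growth inequality needed for the inductive lower bound, and (c) integer coordinates of size at most $(8\ell+1)^\ell$. Once the recursion is fixed, (a) follows from monotonicity of the slope sequence, (b) is the single inequality described above (which ultimately reduces to verifying that each consecutive slope ratio is bounded below by some constant greater than $1$), and (c) follows by clearing denominators along the recursion and bounding the growth of the numerators. All remaining properties of the statement (non-redundancy of the inequality description, that $\mathbf{u}_\ell,\mathbf{w}_\ell$ span an edge, and the containment $P_\ell\setminus\{\mathbf{u}_\ell,\mathbf{w}_\ell\}\subseteq \mathbb{R}_{\geq 0}\times(-1,1)$) are direct consequences of the explicit description.
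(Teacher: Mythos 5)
Your overall plan is closely analogous to the paper's in spirit — a strictly convex polygon symmetric about the $x$-axis, $2\ell+1$ vertices, geometrically growing edge slopes, upper bound via the edge walk along the top chain — but the lower-bound argument as you have sketched it has a genuine gap, and it is precisely the part you flag as the "main obstacle."

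Your inductive claim is that after $k$ circuit moves the current point $\mathbf{p}$ satisfies $p_1\le x_k$, and you argue the induction step by showing that \emph{any} circuit move from \emph{any} boundary point $\mathbf{p}$ with $p_1\le x_k$ lands at $p_1'\le x_{k+1}$. That per-step bound is false for polygons of the shape you describe. Here is the problem: take $\mathbf{p}$ on the bottom-chain edge $e_{k-1}^{\mathrm{bot}}$ with $p_1=x_k-\delta$ for a small $\delta>0$, and move in the direction of the \emph{next} bottom edge $e_k^{\mathrm{bot}}$. That ray is parallel to $e_k^{\mathrm{bot}}$ but offset slightly into the interior, so it grazes past the vertex $\mathbf{w}^{(k+1)}$ and only exits on $e_{k+1}^{\mathrm{bot}}$, at $x$-coordinate
\[
x_{k+1}+\delta\cdot \frac{s_{k-1}-s_k}{s_k-s_{k+1}}>x_{k+1}.
\]
To make this concrete, take the polygon $P_3$ from the paper's recursion, with top-chain slopes $-\tfrac12,-4,-32$, vertex $x$-coordinates $0,\,1,\,\tfrac{17}{16},\,\tfrac{137}{128}$, and reflect for the bottom chain. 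For $\mathbf{p}=(1-\delta,\,-\tfrac12-\tfrac{\delta}{2})$ on $e_0^{\mathrm{bot}}$ the move in direction $(1,4)$ exits at $x=\tfrac{17}{16}+\tfrac{\delta}{8}>x_2$. So the statement "$p_1\le x_k\Rightarrow$ any one circuit move gives $p_1'\le x_{k+1}$" is simply not true. You would need to either (a) carry a much stronger induction hypothesis that records \emph{where} on the boundary a reachable point can lie after $k$ moves (not just a bound on $p_1$), or (b) re-parameterize the thresholds $x_k$ so they are not the vertex $x$-coordinates — but then your upper-bound argument (reaching $\mathbf{t}_\ell$ in exactly $\ell$ moves along the top chain) no longer matches up with the lower-bound ladder. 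Neither fix is straightforward, and it is exactly the delicate point that your sketch leaves open.

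The paper avoids this issue by building $P_\ell$ recursively (the interior of $P_{\ell+1}$ is a squashed and translated copy $T_\ell(P_\ell)$) and arguing \emph{globally} rather than step by step: all inner edge slopes have absolute value $\ge 2\ell$, so $\ell$ circuit moves using only inner directions increase $x$ by at most $\ell\cdot\tfrac1\ell=1$ and can never leave the outer strip; hence at least one move must use the slope-$\pm\tfrac12$ outer edge, which, by monotonicity, can only be fired from a point on that outer edge and necessarily lands at $T_\ell(\mathbf{u}_\ell)$ or $T_\ell(\mathbf{w}_\ell)$; the remaining suffix is a walk in $T_\ell(P_\ell)$, so the inductive hypothesis on $P_\ell$ finishes. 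This "total-displacement plus one unavoidable bottleneck edge" structure is what replaces your (unsalvageable as stated) per-step bound, and it is the missing idea in your plan.
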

Let us briefly motivate the final condition on $P_\ell$.
As mentioned we will use $P_\ell$ as a building block.
In the construction later we want to glue it to another polygon along the edge between $\mathbf{u}_{\ell}$ and $\mathbf{w}_{\ell}$.
The final condition will ensure that this operation yields a convex body.

We postpone the formal proof to \cref{sec:monotone-circuit-diameter-proof}.
Instead we only give an intuition of the construction.
\begin{proof}[Sketch of proof]
We define the polygon $P_\ell$ recursively, using a scaled version of $P_{\ell-1}$ as a building block.
First we define $P_1$ as the triangle $\text{conv}\{(0,1)^\top, (0,-1)^\top, (1,0)^\top\}$.
See \cref{fig:long-monotone-distance-basecase} for a visualization of the construction.
The unique $\mathbf{c}_0$-maximal vertex of $P_1$ is $(1,0)^\top$ and the monotone circuit distance from $\mathbf{u}_1$ and $\mathbf{w}_1$ to $(1,0)^\top$ is one.

	\begin{figure}[ht]
		\begin{subfigure}{0.33\textwidth}
			\begin{center}
				\begin{tikzpicture}
	\begin{scope}[thin, gray]
		\draw[-stealth] (-0.2, 0) -- (2.7,0);
		\node[below, gray, font=\small] at (2.7, 0) {$x$};
		\draw[-stealth] (0,-2.7) -- (0, 2.7);
		\node[left, gray, font=\small] at (0,2.7) {$y$};

		\draw[dashed, gray!60!white,line width=1pt] (2,-2.7) -- (2, 2.7);
		\node[below, gray, font=\small] at (2,-2.7) {$x=1$};
	\end{scope}

	\begin{scope}[radius=2pt, circle, black]
		\coordinate (a1) at (0,2) {};
		\coordinate (b1) at (0,-2) {};
		\coordinate (opt) at (2,0) {};
	\end{scope}

	\begin{scope}[every node/.style={thick,draw=black,fill=black,circle,minimum size=1, inner sep=1}]
		\node at (a1)  {};
		\node at (b1)  {};
		\node at (opt){};
	\end{scope}

	\begin{scope}
		\node[left] at (a1) {$\mathbf{u}_1$};
		\node[left] at (b1) {$\mathbf{w}_1$};
		\node[above right=-3pt] at (opt) {$\mathbf{t}_1$};
	\end{scope}

	\begin{scope}[thick]
		\draw (a1)--(opt);
		\draw (opt) -- (b1);
		\draw (b1) -- (a1);
	\end{scope}
\end{tikzpicture}
 			\end{center}
			\caption{The base case $P_1$.}
			\label{fig:long-monotone-distance-basecase}
		\end{subfigure}
		\begin{subfigure}{0.33\textwidth}
			\begin{center}
				\begin{tikzpicture}
	\begin{scope}[thin, gray]
		\draw[-stealth] (-0.2, 0) -- (2.7,0);
		\node[below, gray, font=\small] at (2.7, 0) {$x$};
		\draw[-stealth] (0,-2.7) -- (0, 2.7);
		\node[left, gray, font=\small] at (0,2.7) {$y$};

		\draw[dashed, gray!60!white,line width=1pt] (2,-2.7) -- (2, 2.7);
		\node[below, font=\small] at (2,-2.7) {$x=1$};
	\end{scope}

	\begin{scope}[radius=2pt, circle, black]
		\coordinate (a2) at (0,2) {};
		\coordinate (a1) at (2,1) {};
		\coordinate (b2) at (0,-2) {};
		\coordinate (b1) at (2,-1) {};
		\coordinate (opt) at (2.25,0) {};
	\end{scope}

	\begin{scope}[every node/.style={thick,draw=black,fill=black,circle,minimum size=1, inner sep=1}]
		\node at (a1)  {};
		\node at (b1)  {};
		\node at (a2)  {};
		\node at (b2)  {};
		\node at (opt){};
	\end{scope}

	\begin{scope}
		\node[above right] at (a1) {$\mathbf{u}_1$};
		\node[left] at (a2) {$\mathbf{u}_2$};
		\node[below right] at (b1) {$\mathbf{w}_1$};
		\node[left] at (b2) {$\mathbf{w}_2$};
		\node[above right=-3pt] at (opt) {$\mathbf{t}_2$};
	\end{scope}

	\begin{scope}[thick]
		\draw (a2) -- (a1) --(opt);
		\draw (opt) -- (b1)-- (b2);
		\draw (b2) -- (a2);
	\end{scope}
\end{tikzpicture}
 			\end{center}
			\caption{Visualization of $P_2$.}
		\end{subfigure}
		\begin{subfigure}{0.33\textwidth}
		\begin{center}
			\begin{tikzpicture}
	\begin{scope}[thin, gray]
		\draw[-stealth] (-0.2, 0) -- (2.7,0);
		\node[below, gray, font=\small] at (2.7, 0) {$x$};
		\draw[-stealth] (0,-2.7) -- (0, 2.7);
		\node[left, gray, font=\small] at (0,2.7) {$y$};

		\draw[dashed, gray!60!white,line width=1pt] (2,-2.7) -- (2, 2.7);
		\node[below, gray, font=\small] at (2,-2.7) {$x=1$};
	\end{scope}

	\begin{scope}[radius=2pt, circle, black]
		\coordinate (a3) at (0,2) {};
		\coordinate (a2) at (2,1) {};
		\coordinate (a1) at (2.25,.3) {};
		\coordinate (b3) at (0,-2) {};
		\coordinate (b2) at (2,-1) {};
		\coordinate (b1) at (2.25,-.3) {};
		\coordinate (opt) at (2.275,0) {};
	\end{scope}

	\begin{scope}[every node/.style={thick,draw=black,fill=black,circle,minimum size=1, inner sep=1}]
		\node at (a1)  {};
		\node at (b1)  {};
		\node at (a2)  {};
		\node at (b2)  {};
		\node at (a3)  {};
		\node at (b3)  {};
		\node at (opt){};
	\end{scope}

	\begin{scope}
		\node[above right=-3pt] at (a1) {$\mathbf{u}_1$};
		\node[above right] at (a2) {$\mathbf{u}_2$};
		\node[left] at (a3) {$\mathbf{u}_3$};
		\node[below right=-3pt] at (b1) {$\mathbf{w}_1$};
		\node[below right] at (b2) {$\mathbf{w}_2$};
		\node[left] at (b3) {$\mathbf{w}_3$};
\end{scope}

	\begin{scope}[thick]
		\draw (a3) -- (a2) -- (a1) --(opt);
		\draw (opt) -- (b1)-- (b2) -- (b3);
		\draw (b3) -- (a3);
	\end{scope}
\end{tikzpicture}
 		\end{center}
		\caption{Visualization of $P_3$.}
		\end{subfigure}
		\caption{Visualization of the polytopes $P_1$, $P_2$, and $P_3$.
$P_{\ell+1}$ is obtained by scaling and shifting $P_\ell$ and adding the vertices $\mathbf{u}_{\ell+1}=(0,1)^\top$ and $\mathbf{w}_{\ell+1}=(0,-1)^\top$.
        For the sake of presentation we denote by $\mathbf{u}_i$ and $\mathbf{w}_i$ the image of $\mathbf{u}_i$ and $\mathbf{w}_i$ under $T_{\ell-1}\circ \dots \circ T_i$ in this figure.
}
		\label{fig:long-monotone-distance-construction}
	\end{figure}
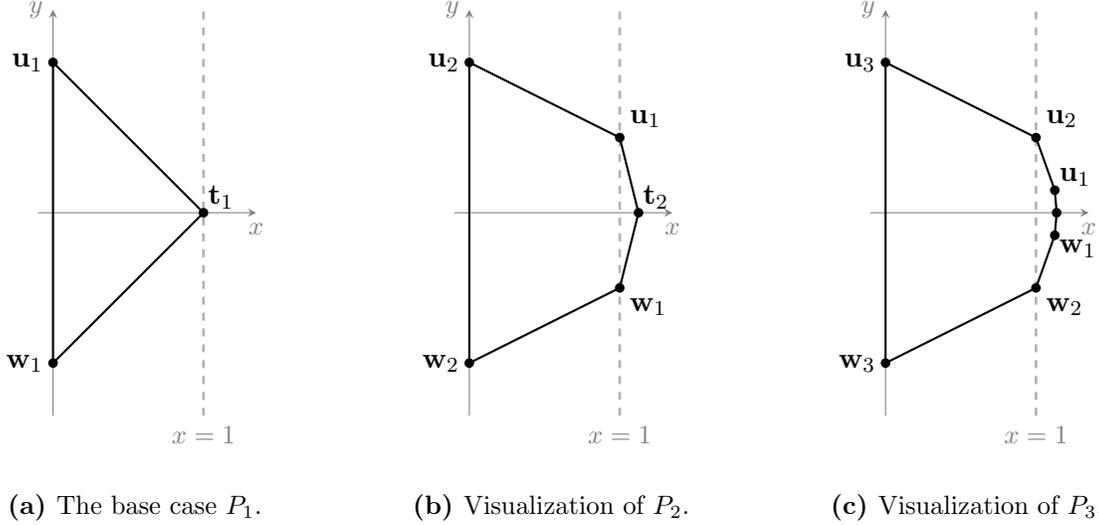
Next we show how to construct $P_{\ell + 1}$ given that we already constructed $P_\ell$.
First we scale $P_\ell$ around the origin by a factor of $\frac{1}{8\ell}$ in the $x$-direction and by a factor of $\frac{1}{2}$ in the $y$-direction.
Next we translate it by $(1,0)^\top$, i.e., by one unit in the $x$-direction.
This moves the vertex $\mathbf{u}_\ell$ to $(1, 0.5)^\top$ and $\mathbf{w}_\ell$ to $(1, -0.5)^\top$.
Let the overall affine transformation be denoted $T_\ell$ and let $T_\ell(P_\ell)$ denote the image of $P_\ell$ under $T_\ell$.

We define $P_{\ell+1}$ as the convex hull of the points $\mathbf{u}_{\ell+1}:=(0,1)^\top$, $\mathbf{w}_{\ell+1}:=(0,-1)^\top$, and the polygon $T_\ell(P_\ell)$.
Note that the non-vertical edge directions of $P_\ell$ have a slope of at least $0.5$ in absolute value.
Thus, all non-vertical edge directions of $T_\ell(P_\ell)$ have a slope of at least $2\ell$ in absolute value.
A visualization of this construction can be seen in \cref{fig:long-monotone-distance-construction}.

We now give the intuition of why $P_{\ell+1}$ satisfies the $\mathbf{c}_0$-monotone circuit distance claimed in \cref{item:linear-polygon-distance}.
The main idea is the following.
Starting from $\mathbf{u}_{\ell+1}$ or $\mathbf{w}_{\ell+1}$ we first show that a $\mathbf{c}_0$-monotone circuit walk of length at most $\ell$ reaching a $\mathbf{c}_0$-maximal vertex has to visit $\mathbf{u}_{\ell}$ or $\mathbf{w}_{\ell}$.
Indeed, the edges of $T_\ell(P_\ell)$ have a slope of at least $2\ell$ in absolute value.
Using that $P_{\ell+1}$ lies in the strip $\mathbb{R}\times [-1,1]$, we observe that a circuit move in $P_{\ell+1}$ using one of these directions changes the $x$-coordinate by at most $\frac{1}{\ell}$.
Hence, using $\ell$ of these moves we cannot reach a point outside of the edge between $\mathbf{u}_{\ell+1}$ and $\mathbf{u}_\ell$ and the edge between $\mathbf{w}_{\ell+1}$ and $\mathbf{w}_\ell$.
Call these edges $e$ and $f$, respectively.
Thus, at some point we need to use the direction of $e$ or $f$ for a circuit move starting at a point on $e$ or $f$.
Now the $\mathbf{c}_0$-increasing orientation of these directions lead to $\mathbf{u}_{\ell}$ or $\mathbf{w}_\ell$, proving that we have to visit one of the two vertices.

To finish the proof observe that starting from points on the boundary of $P_{\ell+1}$ coming from $T_\ell(P_\ell)$ we can only take circuits parallel to edges of $T_\ell(P_\ell)$.
The other edges do not provide directions that are both $\mathbf{c}_0$-monotone and feasible at any point under consideration.
Thus, applying $T_\ell^{-1}$ to any $\mathbf{c}_0$-monotone circuit walk starting at $\mathbf{u}_{\ell}$ or $\mathbf{w}_\ell$ gives rise to a $\mathbf{c}_0$-monotone circuit walk in $P_\ell$.
This allows us to finish as we thus have $d^{P_{\ell+1}}_{\mathbf{c}_0}(\mathbf{u}_\ell) = d^{P_{\ell+1}}_{\mathbf{c}_0}(\mathbf{w}_\ell) = \ell$.
Here we used $d^{P_\ell}_\mathbf{c}(\mathbf{u}_\ell) = d^{P_\ell}_\mathbf{c}(\mathbf{w}_\ell) = \ell$.
\end{proof}

Our main results, \cref{thm:NP-hard-polygons} and \cref{thm:approx-hardness-monotone-distance}, follow from the following theorem.
The theorem shows that we can encode feasibility of an instance of the \textsc{Exact Subset Sum with Repetition} problem using the monotone circuit distance of a certain polygon.
In the construction we can choose a constant $C$ that encodes the gap in monotone circuit distance we can achieve between a feasible and an infeasible instance.
The number of edges of the polygon depends on $C$ as well.
Maximizing $C$ while maintaining a polynomial encoding length and constructibility of the associated polygon then yields the $m^{1-\varepsilon}$-inapproximability stated in \cref{thm:approx-hardness-monotone-distance}.

\begin{theorem}\label{thm:monotone-circuit-diameter}
	Let $(a_1, \dots, a_n, S, k)$ be an instance of the \textsc{Subset sum with Repetition} problem.
	Additionally, assume we are given $C \in \mathbb{Z}_{\geq 0}$.
	There is a polygon $P$ with a vertex $\mathbf{s}$ and a cost vector $\mathbf{c}$ such that the following holds:
	\begin{enumerate}[label=\textnormal{(\roman*)}]
		\item\label{item:main-short-walk} If the \textsc{Subset sum with Repetition} instance is feasible, then $d^P_{\mathbf{c}}(\mathbf{s}) \leq 2k$.
		\item\label{item:main-long-walk} If the \textsc{Subset sum with Repetition} instance is infeasible, then $d^P_{\mathbf{c}}(\mathbf{s}) > Ck$.
		\item\label{item:main-construct} We can construct $P$, $\mathbf{s}$, and $\mathbf{c}$ in time polynomial in $n$, $\log S$, $k$, and $C$.
		\item\label{item:main-encoding} The encoding length of $P$, $\mathbf{s}$, and $\mathbf{c}$ is polynomial in $n$, $\log S$, $k$, and $C$.
		\item\label{item:main-edges} The number of edges of $P$ is bounded by $2Ck + 2n$.
	\end{enumerate}
\end{theorem}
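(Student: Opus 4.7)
The plan is to build $P$ by gluing a suitable affine copy of the long-distance polygon from \cref{thm:linear-circuit-distance}, instantiated with parameter $L := Ck$, to a narrow \emph{choice gadget} that encodes the subset-sum arithmetic. Fix $\mathbf{c} = (1,0)^\top$. The gadget will occupy the horizontal strip $x \in [0,S]$ with small vertical extent, and the transformed copy of $P_L$ will be positioned so that its designated edge $\mathbf{u}_L\mathbf{w}_L$ lies on the vertical line $x = S$, with $P_L \setminus \{\mathbf{u}_L, \mathbf{w}_L\}$ extending into the half-plane $x > S$. By \cref{obs:transform} together with the final assertion of \cref{thm:linear-circuit-distance}, this gluing preserves convexity of $P$ as well as the internal $\mathbf{c}$-monotone circuit distance $L$ from each of $\mathbf{u}_L,\mathbf{w}_L$ to the unique $\mathbf{c}$-maximum $\mathbf{t}_L$. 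The starting vertex $\mathbf{s}$ will be the leftmost vertex of the gadget at $x=0$.

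The choice gadget is designed with at most $2n$ edges, one pair per $a_i$: the two edges representing $a_i$ appear symmetrically on the upper and lower chains of the gadget, with rational vertical perturbations small enough that the circuit move in the $a_i$-direction from a current boundary vertex advances the $x$-coordinate by exactly $a_i$ (using the maximality of circuit moves and \cref{obs:edgesarecircuits}). The edges are ordered by slope on each chain so that the gadget is convex, achieved by choosing pairwise-distinct rational perturbations over a common denominator polynomial in $n$, $k$, and $\log S$. By construction, any $\mathbf{c}$-monotone circuit walk starting at $\mathbf{s}$ and remaining in the gadget corresponds to a sequence of $a_i$-advances, and its $x$-coordinate after $t$ moves equals the partial sum of the $t$ chosen $a_i$'s with multiplicities.

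For \cref{item:main-short-walk}, given $r_1,\dots,r_n$ with $\sum r_i a_i = S$ and $\sum r_i = k$, I would explicitly construct a $\mathbf{c}$-monotone circuit walk that makes exactly $k$ zig-zag gadget advances (in any order alternating between upper and lower chain) to reach the gateway edge at $x=S$, followed by at most $k$ further moves inside $P_L$ that reach $\mathbf{t}_L$; this gives total length at most $2k$. For \cref{item:main-long-walk}, assume the instance is infeasible. Any $\mathbf{c}$-monotone walk from $\mathbf{s}$ to $\mathbf{t}_L$ must enter $P_L$ through either $\mathbf{u}_L$ or $\mathbf{w}_L$, these being the only boundary points of the combined polygon on the line $x=S$ reachable from the gadget without first passing through an impossible partial sum. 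Then \cref{item:linear-polygon-distance} of \cref{thm:linear-circuit-distance} forces at least $L = Ck$ additional moves to reach $\mathbf{t}_L$, together with at least one gadget move, giving total length at least $L+1 > Ck$.

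The main obstacle I anticipate is the gadget engineering itself, specifically guaranteeing that (a) every circuit move in an $a_i$-direction lands precisely on the intended gadget vertex rather than overshooting, (b) no unintended shortcut direction allows a monotone walk to reach $\mathbf{t}_L$ in fewer than $L$ steps in the infeasible case, and (c) all slopes and coordinates remain rational with polynomial bit-size. Points (a) and (b) I plan to secure by taking vertical perturbations small enough that the slope ordering of edges is dictated by the numerical ordering of the $a_i$'s and by convexity, so that the geometry of maximal circuit moves mirrors the additive structure of subset sum; (c) follows from the common-denominator choice above. Counting edges gives $(2L+1)+2n-1 = 2Ck+2n$ after identifying the shared gluing edge, yielding \cref{item:main-edges}. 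Finally, \cref{item:main-construct} and \cref{item:main-encoding} follow from the polynomial bit-size guarantee in \cref{item:linear-polygon-encoding} of \cref{thm:linear-circuit-distance} combined with the polynomial bit-size of the gadget parameters.
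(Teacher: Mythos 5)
Your high-level architecture --- gluing the long-distance polygon $P_L$ with $L = Ck$ onto the far end of a horizontal subset-sum gadget, accumulating partial sums in the $x$-coordinate under $\mathbf{c}=(1,0)^\top$ --- is a genuine variant of the paper's construction, which instead embeds a shrunk affine copy of $P_{Ck}$ in the upper-left corner of a width-$1$ rectangle and accumulates partial sums in the $y$-coordinate. Unfortunately the variant breaks on the feasible side. By \cref{thm:linear-circuit-distance}~\ref{item:linear-polygon-distance}, the $\mathbf{c}$-monotone circuit distance from $\mathbf{u}_L$ (and from $\mathbf{w}_L$, and in fact from every point of the gateway edge) to $\mathbf{t}_L$ inside $P_L$ equals $L=Ck$. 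So once your walk reaches the gateway edge $\mathbf{u}_L\mathbf{w}_L$ it needs at least $Ck$ more moves, not ``at most $k$'' as you claim, giving total length at least $k+Ck$ and violating \ref{item:main-short-walk} for every $C\geq 2$. The paper avoids this by placing $T(\mathbf{t}_{Ck})$ at $y$-coordinate exactly $S$ and having the feasible walk hit $\mathbf{t}$ with a single horizontal move that cuts through the interior from the right edge; the large internal circuit distance of $P_{Ck}$ is used only as a barrier in the infeasible case and is never traversed by the short feasible walk. Your setup gives no analogue of this interior shortcut, so the long distance of $P_L$ also penalizes the feasible instances.

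A second, independent gap is that your infeasible-case argument asserts that ``any monotone walk must enter $P_L$ through either $\mathbf{u}_L$ or $\mathbf{w}_L$'' without any enforcing mechanism, and this is simply not a consequence of convexity: a circuit move from a gadget boundary point, taken in an edge-direction of $P_L$ or in a gadget direction, may pass through the interior of $P$ and land on an arbitrary point of $\partial P_L$, bypassing the gateway vertices entirely. This is exactly the shortcut phenomenon that the paper's \cref{lem:monotone_construction-reaching-P} (the recursive $p_i$-scaling induction, powered by the $\left(\frac{s_1}{a_n}\right)^{\lceil Ck/2\rceil+1}$ shrinking in \cref{lemma:replacement-hard-instance}) is designed to kill, and nothing in your ``small vertical perturbations'' plan substitutes for it. There is also a local problem with the gadget itself: in a thin convex strip, a maximal circuit move in a near-horizontal direction crosses the whole polygon rather than advancing a prescribed distance, so ``advances the $x$-coordinate by exactly $a_i$'' requires a stopping wall that your description does not supply. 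The paper sidesteps this by using a width-$1$ rectangle, where a move with slope $a_i$ from the left edge to the right edge changes the $y$-coordinate by exactly $a_i$ for free.
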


Before proving \cref{thm:monotone-circuit-diameter}, let us directly demonstrate how it implies \cref{thm:approx-hardness-monotone-distance}.

\begin{proof}[Proof of \cref{thm:approx-hardness-monotone-distance} assuming \cref{thm:monotone-circuit-diameter}]
	Let $(a_1, \dots, a_n, S, k)$ be an instance of the \textsc{Exact Subset sum with Repetition} problem.
    Note that by definition of the problem, we then have $k\le n$. 
	We can construct a polygon $P$ with a vertex $\mathbf{s}$ and a cost vector $\mathbf{c}$ as in \cref{thm:monotone-circuit-diameter}, where we set $C = \left\lceil\max\left\{ 8^\frac{1}{\varepsilon}k^{\frac{1-\varepsilon}{\varepsilon}}, 8n^{1-\varepsilon} \right\}\right\rceil$.
	Note that for any fixed $\varepsilon > 0$ we have that $C$ is polynomially bounded in terms of $n$. 
    Hence, by \cref{thm:monotone-circuit-diameter}, $P$ can be constructed in time polynomial in $n$, $\log S$, and also its encoding length is polynomially bounded in $n$ and $\log S$. In particular this implies that construction time and encoding length of $P$ are polynomially bounded in terms of the description length of the input $(a_1,\ldots,a_n,S,k)$ to the \textsc{Exact Subset Sum with Repetition} problem.
    By \cref{thm:monotone-circuit-diameter} \ref{item:main-edges}, the number $m$ of edges of $P$ is at most $2Ck + 2n$.

    Note that we chose $C$ in a way that we have
	\[
		\left(\frac{k^{1-\varepsilon}}{C^{\varepsilon}} + \frac{n^{1 - \varepsilon}}{C}\right) \leq \frac{1}{4}\enspace . 
	\]
	Indeed, the first term in the maximum defining $C$ ensures that the first summand is at most $\frac{1}{8}$ and the second term ensures that the second summand is at most $\frac{1}{8}$.
    Thus, we in particular have
	\[
        2\left((Ck)^{1-\varepsilon} + n^{1 - \varepsilon}\right) (2k) \leq Ck\enspace .
	\]
    Now note that the left hand side is larger than $(2Ck + 2n)^{1-\varepsilon} (2k)$, as $(x + y)^{1-\varepsilon}\leq  x^{1 - \varepsilon} + y^{1- \varepsilon}$ for any $x,y\geq 0$.
    In particular, we have $m^{1-\varepsilon} (2k) \leq Ck$.

    This inequality, combined with \cref{thm:monotone-circuit-diameter} \ref{item:main-short-walk} and \ref{item:main-long-walk} now implies that $(P,\mathbf{s},\mathbf{c}, 2k)$ is a polynomial-size instance of \textsc{Monotone Circuit Distance}, which satisfies either $d^P_\mathbf{c}(\mathbf{s})\le 2k$ (if the \textsc{Subset Sum with Repetition} instance $(a_1,\ldots,a_n,S)$ is feasible) or $d^P_\mathbf{c}(\mathbf{s})>m^{1-\varepsilon}(2k)$ (if $(a_1,\ldots,a_n,S)$ is infeasible).
    Hence, this provides a polynomial reduction of \textsc{Exact Subset Sum with Repetition} to the special case of the \textsc{Monotone Circuit Distance} problem with the additional constraint on instances specified in the statement of \cref{thm:approx-hardness-monotone-distance}.
    Since \textsc{Exact Subset Sum with Repetition} is \NP-hard by \cref{thm:subset-sum-special-hardness}, this concludes the proof of \cref{thm:approx-hardness-monotone-distance}.
\end{proof}

The remainder of this article is dedicated to the proof of \cref{thm:monotone-circuit-diameter}.
In \cref{sec:reduction-overview} we give an overview of the reduction and the intuition behind it.
This is followed by a detailed proof of \cref{thm:linear-circuit-distance} and \cref{thm:monotone-circuit-diameter} in \cref{sec:monotone-circuit-diameter-proof}.

\subsection{Overview of the reduction}\label{sec:reduction-overview}

Before getting into the technical details, let us describe the intuition behind the proof of \cref{thm:monotone-circuit-diameter}.
Consider an instance $(a_1, \dots, a_n, S, k)$ of the \textsc{Exact Subset sum with Repetition} problem.

We start with the rectangle $[0,1]\times [0,S+\varepsilon]$, slightly taller than the target number.
The precise value of $\varepsilon$ will be determined later.
We replace the upper left and the lower right corner of the rectangle with two polygonal chains.
The rough idea is that the lower right replacement gives rise to a circuit direction of slope $a_i$ for every $i\in [n]$.
The replacement of the upper left corner will give a vertex $\mathbf{t}$ at height $S$.
Let $\mathbf{c}\in \mathbb{R}^2$ be a vector such that $\mathbf{t}$ is the unique $\mathbf{c}$-maximum.
The idea of the construction is to ensure that, starting at $\mathbf{s}=(0,0)^\top$, $\mathbf{t}$ will essentially only be reachable by a short $\mathbf{c}$-monotone circuit walk, if we use the circuit directions with slope $a_i$ to reach height precisely $S$.
This will in turn give rise to a solution to the \textsc{Exact Subset sum with Repetition} problem.
See \cref{fig:monotone-construction-overview} for a visualization of the instance we will construct.

Let $r\in \mathbb{Z}_{\geq 0}^n$ be a solution to $(a_1,\dots, a_n, S,k)$, i.e., $\sum_{i=1}^{n} r_i=k$ and $\sum_{i=1}^{\ell} r_i a_i = S$.
Consider the circuit walk starting at $\mathbf{s}$ in which we alternatingly take a circuit direction with slope $a_i$ and then circuit direction $(-1,0)^\top$.
Here, we take the direction corresponding to $a_i$ precisely $r_i$ times.
We perform the replacement of the upper left corner in such a way that it does not modify points with a $y$ coordinate below $S-0.5$.
Then the circuit moves with slope $a_i$ all start at the left edge and change the $y$-coordinate by precisely $a_i$.
In total, we change the $y$-coordinate by $\sum_{i=1}^{n} r_i a_i = S$, and thus reach the point $(1, S)^\top$ after $2k-1$ steps.
In particular, we can reach $\mathbf{t}$ in $2k$ circuit moves, by using the circuit direction $(-1,0)^\top$ once more.
The main difficulty is to construct the instance, in particular the replacement of the upper left corner, in such a way that every monotone circuit walk of length at most $Ck$ gives rise to a solution of the subset sum problem.

\begin{figure}[h!]
	\begin{subfigure}[t]{0.5\textwidth}
		\begin{center}
			\begin{tikzpicture}
	\newcommand{\s}{3}
	\newcommand{\e}{0.3}
	\newcommand{\x}{2}

	\begin{scope}[thin, gray]
	\draw[-stealth] (-0.2, 0) -- (\x + 0.7,0);
	\node[below, gray, font=\small] at (\x + 0.7, 0) {$x$};
	\draw[-stealth] (0,-.2) -- (0, \s+1.2);
	\node[left, gray, font=\small] at (0,\s+1.2) {$y$};

	\draw[dashed, gray!60!white,line width=1pt] (\x,-.2) -- (\x, \s+.7);
	\node[below, gray, font=\small] at (\x,-.2) {$1$};

	\draw[dashed, gray!60!white,line width=1pt] (-.1,\s) -- (\x+.6, \s);
	\node[right, gray, font=\small] at (\x+.6,\s) {$S$};

	\draw[dashed, gray!60!white,line width=1pt] (-.1,\s+\e) -- (\x+.6, \s+\e);
	\node[right, gray, font=\small] at (\x+.6,\s+\e) {$S+\varepsilon$};

	\draw[dashed, gray!60!white,line width=1pt] (-.1,\s-\e) -- (\x+.6, \s-\e);
	\node[right, gray, font=\small] at (\x+.6,\s-\e) {$S-\varepsilon$};

	\draw[dashed, gray!60!white,line width=1pt] (-.1,.2) -- (\x+.6, .2);
	\node[right, gray, font=\small] at (\x+.6,.2) {};
	\end{scope}

	\begin{scope}
		\coordinate (s) at (0,0);
		\coordinate (u) at (\x,\s + \e);

		\coordinate (v0) at (0   , \s - \e);
		\coordinate (t0) at (0.1 , \s - \e + .15);
		\coordinate (t)  at (0.25, \s      );
		\coordinate (t1) at (0.5, \s + 0.2);
		\coordinate (v1) at (0.8 , \s + \e);

		\coordinate (a0) at (\x      , 0.2);
		\coordinate (a1) at (\x - 0.1, 0.1);
		\coordinate (a2) at (\x - 0.3, 0.05);
		\coordinate (a3) at (\x - 0.7, 0);
	\end{scope}

	\begin{scope}[thick]
		\draw (s) -- (v0) -- (t0) -- (t) -- (t1) -- (v1) -- (u) -- (a0) -- (a1) -- (a2) -- (a3) -- (s);
	\end{scope}

	\node[xshift = -1.5cm, yshift=-.2cm] (ulabel) at (v0) {$T(\mathbf{u}_{Ck})$};
	\draw[-stealth,shorten >=1pt] (ulabel) -- (v0);
	\node[above left = 0.5cm] (tlabel) at (t) {$\mathbf{t}$};
	\draw[-stealth,shorten >=1pt] (tlabel) -- (t);
	\node[xshift = .2cm, yshift = 1cm] (wlabel) at (v1) {$T(\mathbf{w}_{Ck})$};
	\draw[-stealth,shorten >=1pt] (wlabel) -- (v1);
	\node[below left] at (s) {$\mathbf{s}$};
\end{tikzpicture}
 		\end{center}
		\caption{The construction used in the reduction.
		Starting from a rectangle we replace the upper left and lower right corners.}
	\end{subfigure}
	\begin{subfigure}[t]{0.5\textwidth}
		\begin{center}
			\begin{tikzpicture}
	\newcommand{\s}{3}
	\newcommand{\e}{0.3}
	\newcommand{\x}{2}

	\begin{scope}[thin, gray]
		\draw[-stealth] (-0.2, 0) -- (\x + 0.7,0);
		\node[below, gray, font=\small] at (\x + 0.7, 0) {$x$};
		\draw[-stealth] (0,-.2) -- (0, \s+1.2);
		\node[left, gray, font=\small] at (0,\s+1.2) {$y$};

		\draw[dashed, gray!60!white,line width=1pt] (\x,-.2) -- (\x, \s+.7);
		\node[below, gray, font=\small] at (\x,-.2) {$1$};

		\draw[dashed, gray!60!white,line width=1pt] (\x,\s) -- (\x+.6, \s);
		\node[right, gray, font=\small] at (\x+.6,\s) {$S$};
	\end{scope}

	\begin{scope}
		\coordinate (s) at (0,0);
		\coordinate (u) at (\x,\s + \e);

		\coordinate (v0) at (0   , \s - \e);
		\coordinate (t0) at (0.1 , \s - \e + .15);
		\coordinate (t)  at (0.25, \s      );
		\coordinate (t1) at (0.5, \s + 0.2);
		\coordinate (v1) at (0.8 , \s + \e);

		\coordinate (a0) at (\x      , 0.2);
		\coordinate (a1) at (\x - 0.1, 0.1);
		\coordinate (a2) at (\x - 0.3, 0.05);
		\coordinate (a3) at (\x - 0.7, 0);
	\end{scope}

	\begin{scope}[red, thick]
		\draw (s) -- (\x, 0.8);
		\draw (\x, 0.8) -- (0,0.8);
		\draw (0, 0.8) -- (\x, 1.6);
		\draw (\x, 1.6) -- (0, 1.6);
		\draw (0, 1.6) -- (\x, \s);
		\draw (\x, \s) -- (t);
	\end{scope}

	\newcommand{\braceshift}{.1}
	\begin{scope}[thick, shorten <=1pt, shorten >= 1pt]
		\draw[decorate,decoration={brace,amplitude=3pt,mirror,raise=2pt}] (\x+\braceshift, 0) --node[right=3pt]{$a_1$} (\x+\braceshift, 0.8);
		\draw[decorate,decoration={brace,amplitude=3pt,mirror,raise=2pt}] (\x+\braceshift, 0.8) --node[right=3pt]{$a_1$} (\x+\braceshift, 1.6);
		\draw[decorate,decoration={brace,amplitude=3pt,mirror,raise=2pt}] (\x+\braceshift, 1.6) --node[right=3pt]{$a_2$} (\x+\braceshift, \s);
	\end{scope}

	\begin{scope}[thick]
		\draw (s) -- (v0) -- (t0) -- (t) -- (t1) -- (v1) -- (u) -- (a0) -- (a1) -- (a2) -- (a3) -- (s);
	\end{scope}

	\node[above left = 0.5cm] (tlabel) at (t) {$\mathbf{t}$};
	\draw[-stealth,shorten >=1pt] (tlabel) -- (t);
	\node[below left] at (s) {{$\mathbf{s}$}};
\end{tikzpicture}
 		\end{center}
		\caption{Visualization of a circuit walk of length $2k$ which can be obtained from a solution to the Subset Sum instance.
		In this case, $r_1 = 2, r_2 = 1$, and $r_1 a_1 + r_2 a_2 = S$.}
	\end{subfigure}
	\caption{Visualization of the idea behind the reduction.
		Starting from a rectangle we replace the upper left corner by an affine transformation $T$ of the polygon $P_{Ck}$ constructed in \cref{thm:linear-circuit-distance}.
		Additionally, we replace the lower right corner by edges with slopes corresponding to the elements of the subset sum instance.
		We want to find a circuit walk from the lower left corner to the middle vertex $\mathbf{t}$ of the upper left vertices.
		Note that in order to increase readability the scale of polytope is not the same as in the actual construction.}
	\label{fig:monotone-construction-overview}
\end{figure}
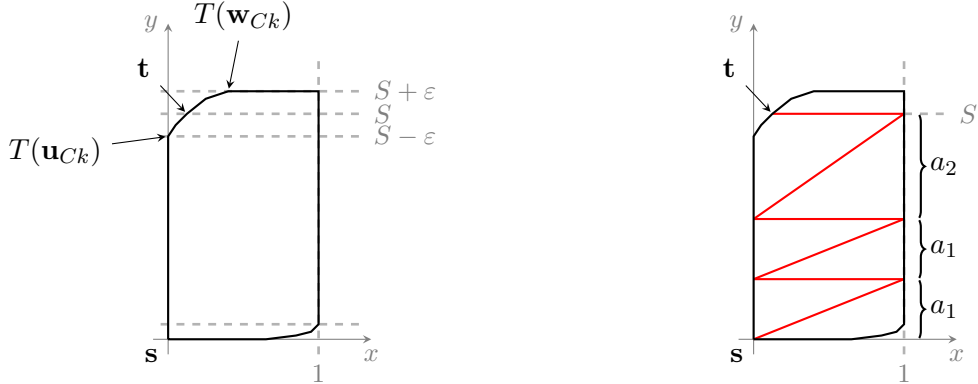

We use the polygon $P_{Ck}$ constructed in \cref{thm:linear-circuit-distance}, where we set $\ell = Ck$.
First, consider the polygonal chain obtained from the boundary of $P_{Ck}$ by removing the edge between $u_{Ck}$ and $w_{Ck}$.
We replace the upper left corner of the rectangle by the image of this polygonal chain under an affine transformation $T$.
Here, $\mathbf{u}_{Ck}$ is mapped to a point on the $x=0$ line and $\mathbf{w}_{Ck}$ is mapped to a point on the $y=S+\varepsilon$ line.
The remainder of the polygonal chain is mapped above the line through the images of $\mathbf{u}_{Ck}$ and $\mathbf{w}_{Ck}$.
Write $T$ as $T(\mathbf{x}) = H\mathbf{x} + \mathbf{b}$, for an invertible matrix $H\in \mathbb{R}^{2\times 2}$ and a vector $\mathbf{b}\in \mathbb{R}^2$.
Set the cost vector $\mathbf{c}$ to $(H^{-1})^\top \mathbf{c}_0$.
This reflects the change in cost identified in \cref{obs:transform}, allowing us to later transform $\mathbf{c}$-monotone circuit walks in $P$ to $\mathbf{c}_0$-monotone circuit walks in $P_{Ck}$.
The value of $\varepsilon$ is determined by the transformation, with the details following later.
In order for the reduction to work, we have to ensure that the transformation $T$ has the following properties:
First, the unique $\mathbf{c}_0$-maximal vertex $\mathbf{t}_{Ck}$ of $P_{Ck}$ shall be mapped to a unique $\mathbf{c}$-maximal vertex $\mathbf{t}$ of the new polygon with a $y$-value of $S$.
Second, we will need that the edges of $T(P_{Ck})$ have sufficiently small slope.
To be precise, their slope shall be smaller than $\frac{1}{2Ck}$.
Third, all vertices on $T(P_{Ck})$ shall be close to the point $(0, S)^\top$, which will be made precise later.

The first condition ensures that a solution to the \textsc{Exact Subset sum with Repetition} instance gives rise to a $\mathbf{c}$-monotone circuit path from $\mathbf{s}$ to $\mathbf{t}$ of length $Ck$ through the interior of the polytope.
On the other hand, if the subset sum instance is infeasible, then the properties of $P_{Ck}$ will guarantee that every circuit walk from $\mathbf{s}$ to $\mathbf{t}$ has length at least $Ck+1$.

The second condition ensures that taking $Ck$ circuit directions corresponding to the edges of $P_{Ck}$ can change the $y$-coordinate by at most $\frac{1}{2}$ in total.
In particular we cannot ``cheat'' in height by taking directions that do not correspond to elements of the subset sum instance.
The third condition is necessary for the following reason.
Assume we could find a short circuit walk that reaches a point $\mathbf{p}$ lying on the ``left'' edge, slightly below $T(\mathbf{u}_{Ck})$.
We cannot exclude that $\mathbf{t}$ is reachable in a single circuit move from $\mathbf{p}$.
This would circumvent the circuit distance guarantee we get from $P_{Ck}$.
So, in order to prohibit ``short-cutting'', we will choose the scaling of $P_{Ck}$ carefully.
In the end, every point $\mathbf{p}$ that allows us to take a shortcut will have a $y$-coordinate close to $S$.
Then any short circuit walk reaching $\mathbf{p}$ still gives rise to a solution of the subset sum instance.
The precise scaling will be determined later.
Roughly speaking we want to ensure that every $\mathbf{c}$-monotone circuit walk of length at most $Ck$ starting from a point on the upper edge with large $x$-coordinate that reaches the image of $P_{Ck}$ visits $T(\mathbf{u}_{Ck})$ or $T(\mathbf{w}_{Ck})$.

Finally, we have to introduce circuit directions of slopes $a_1, \dots, a_n$ by replacing the lower right corner of the rectangle with a certain polygonal chain.
Without loss of generality, assume $a_1 < a_2 < \dots < a_n$.
We replace the lower right corner of the rectangle by $n+1$ vertices, $\mathbf{v}_0, \dots, \mathbf{v}_{n}$.
The slope of the edge between $\mathbf{v}_{i-1}$ and $\mathbf{v}_{i}$ will be precisely $a_i$.
Additionally, we choose the scaling small enough, such that the replacement does not impact monotone walks.
To be precise, we will make sure that $\mathbf{c}^\top\mathbf{v}_i \leq \mathbf{c}^\top\mathbf{s}$ for all $i\in \{0,\dots,n\}$, so we can never visit any of the newly constructed edges in a monotone circuit walk starting at $\mathbf{s}$.

Using this construction we will show that if the subset sum instance is feasible, then there exists a $\mathbf{c}$-monotone circuit walk from $\mathbf{s}$ to $\mathbf{t}$ of length at most $2k$.
Conversely, if the subset sum instance is infeasible, then every $\mathbf{c}$-monotone circuit walk from $\mathbf{s}$ to $\mathbf{t}$ has length at least $Ck + 1$.
Let us briefly repeat the argument for the first of these two statements more explicitly.
Consider an $r\in \mathbb{Z}_{\geq 0}^n$ that is a solution to the \textsc{Exact Subset sum with Repetition} instance, i.e., $\sum_{i=1}^{n} r_i =k$ and $\sum_{i=1}^{n} r_i a_i = S$.
Then, we can construct a circuit walk from $\mathbf{s}$ to $\mathbf{t}$ of length $2k$.
To be precise, let $b_1, \dots, b_k$ be a sequence of elements containing each $a_i$ exactly $r_i$ times.
Then the following is a $\mathbf{c}$-monotone circuit walk from $\mathbf{s}$ to $\mathbf{t}$: \[(0,0)^\top \rightarrow (1, b_1)^\top \rightarrow (0, b_1)^\top \rightarrow (1, b_1 + b_2)^\top \rightarrow (0, b_1 + b_2)^\top \rightarrow \dots \rightarrow \left(1, \sum_{i=1}^k b_i\right)^\top \rightarrow \mathbf{t}\enspace .\]
In the last step we used that the $y$-coordinate of $\mathbf{t}$ is precisely $S = \sum_{i=1}^{k} b_i$.

We will now give a rough intuition for the second statement.
Consider a $\mathbf{c}$-monotone circuit walk $W$ from $\mathbf{s}$ to $\mathbf{t}$ of length at most $Ck$.
The circuit directions corresponding to the $a_i$'s cannot be used in the upper left corner.
Thus, by the distance lower bound of $P_{Ck}$, $W$ cannot visit $\mathbf{u}_{Ck}$ or $\mathbf{w}_{Ck}$.
Hence, $W$ has to ``shortcut'' into the upper left corner.
Our choice of the affine transformation applied to $P_{Ck}$ will enforce that $W$ must contain a point with a $y$-coordinate in the interval $(S-\frac{1}{2}, S+\varepsilon)$, before visiting any point on the upper edge.
Then $W$ allows us to find a feasible solution to the \textsc{Exact Subset sum with Repetition} instance.
This is due to the slopes of the edges of $T(P_{Ck})$ being small, i.e., smaller than $\frac{1}{2Ck}$.
So, in total, circuit moves corresponding to these edges can change the $y$-coordinate by at most $\frac{Ck}{2Ck} = \frac{1}{2}$.
Now consider the circuit moves of $W$ that have a slope of $a_i$ for some $i\in [n]$.
In order to reach a point with $y$-coordinate in $(S-\frac{1}{2}, S+\varepsilon)$, the change in $y$-coordinate due to these circuit moves has to be precisely $S$.
Let $r_i$ be the number of times $W$ uses a circuit move with slope $a_i$.
As we do not visit the ``upper'' edge, a move with slope $a_i$ changes the $y$-coordinate by precisely $a_i$.
Rephrasing the above, we have $\sum_{i=1}^{n} r_i a_i = S$. 
Hence, $r$ gives a solution to the \textsc{Exact Subset Sum with Repetition} instance.
In particular, we have $\sum_{i=1}^{n} r_i = k$ and $d^P_{\mathbf{c}}(\mathbf{s}) \leq 2k$.

\section{Formal construction and proofs}\label{sec:monotone-circuit-diameter-proof}

In this section we formalize the intuition given before and prove \cref{thm:linear-circuit-distance} and \cref{thm:monotone-circuit-diameter}.
We begin with a formal proof of \cref{thm:linear-circuit-distance}.

\begin{proof}[Proof of \cref{thm:linear-circuit-distance}]
    For a visualization of the proof we refer back to \cref{fig:long-monotone-distance-construction}.
    We define the polygon $P_\ell$ recursively.
    Additionally, we maintain the following invariant:
	All edge directions (and by Observation~\ref{obs:edgesarecircuits} all circuits) of $P_\ell$ are either parallel to $(0,1)^\top$ or their slope is at least $0.5$ in absolute value.

	First we define $P_1$ as the triangle $\text{conv}\{(0,1)^\top, (0,-1)^\top, (1,0)^\top\}$.
    Recall that $\mathbf{c}_0=(1,0)^\top$.
	See \cref{fig:long-monotone-distance-basecase} for a visualization of the construction.
	Then the unique $\mathbf{c}_0$-maximal vertex of $P_1$ is $(1,0)^\top$ and the monotone circuit distance from $\mathbf{u}_1$ and $\mathbf{w}_1$ to $(1,0)^\top$ is one. One furthermore easily checks that the remaining items and the invariant hold with this definition of $P_1$, encoded by 
    $$A_1\coloneqq \begin{pmatrix}
-1 & 0\\
1 & 1 \\
1 & -1
\end{pmatrix}, \mathbf{b}_1\coloneqq \begin{pmatrix}
0\\
1 \\
1
\end{pmatrix}.$$

	Next we will show how to construct $P_{\ell + 1}$ given that we already constructed $P_\ell$.
    We assume that $P_\ell$ satisfies the invariant.
	Define $P_{\ell+1}$ the following way:
	Scale $P_\ell$ around the origin by a factor of $\frac{1}{8\ell}$ in the $x$-direction and by a factor of $\frac{1}{2}$ in the $y$-direction.
	Next translate it by $(1,0)^\top$, i.e., by one unit in the $x$-direction.
	This moves the vertex $\mathbf{u}_\ell$ to $(1, 0.5)^\top$ and $\mathbf{w}_\ell$ to $(1, -0.5)^\top$.
	Let the overall affine transformation be denoted $T_\ell$ and let $T_\ell(P_\ell)$ denote the image of $P_\ell$ under $T_\ell$.

	We define $P_{\ell+1}$ as the convex hull of the points $\mathbf{u}_{\ell+1}:=(0,1)^\top$, $\mathbf{w}_{\ell+1}:=(0,-1)^\top$, and the polygon $T_\ell(P_\ell)$. It follows directly by this definition that $\mathbf{u}_{\ell+1}, \mathbf{w}_{\ell+1}$ are vertices of $P_{\ell+1}$ that span an edge, verifying that \cref{item:linear-polygon-vertices} is satisfied. Also note that $P_{\ell+1}$ has a unique $\mathbf{c}_0$-maximal vertex, namely the image of the $\mathbf{c}_0$-maximal vertex of $P_\ell$.
    
	By our invariant, all non-vertical edge directions of $P_\ell$ have a slope of at least $0.5$ in absolute value.
	Thus, all non-vertical edge directions of $T_\ell(P_\ell)$ have a slope of at least $2\ell>1$ in absolute value.
    Additionally, the slopes from $\mathbf{u}_{\ell+1}=(0,1)^\top$ to $T_\ell(\mathbf{u}_\ell)=(1,0.5)^\top$ and from $\mathbf{u}_{\ell+1}=(0,-1)^\top$ to $T_\ell(\mathbf{w}_{\ell})=(1,-0.5)^\top$ are $-0.5$ and $0.5$, respectively.
	This implies that $P_{\ell+1}$ contains all vertices of $T_\ell(P_\ell)$ as vertices, and that $\mathbf{u}_\ell$ and $T_\ell(\mathbf{u}_\ell)$ as well as $\mathbf{w}_\ell$ and $T_\ell(\mathbf{w}_\ell)$ are connected by edges of $P_{\ell+1}$. It follows that our invariant remains satisfied for $P_{\ell+1}$.

    Finally, let $P_\ell=\{\mathbf{x}\in \mathbb{R}^2|A_\ell \mathbf{x}\le \mathbf{b}_\ell\}$ be the inequality description of $P_\ell$. With out loss of generality, let the first row of $A_\ell$ be $(-1,0)$ and the first entry of $\mathbf{b}_\ell$ be $0$, corresponding to the edge-defining inequality $x\ge 0$ of $P_\ell$.
    Then we have
    \[P_{\ell+1}=\{\mathbf{x}\in \mathbb{R}^2|A_{\ell+1}\mathbf{x}\le \mathbf{b}_{\ell+1}\},\]
    where 
    \[A_{\ell+1}\in \mathbb{Z}^{(2\ell+3)\times 2}\]
    is obtained from $A_\ell\in \mathbb{Z}^{(2\ell+1)\times 2}$ by multiplying all entries of the first column but the first by $8\ell$, multiplying all entries of the second column but the first by $2$, and then adding two new last rows $(1,2)$ and $(1,-2)$ at the bottom.
    Similarly, $b_{\ell+1}\in \mathbb{Z}^{2\ell+3}$ is obtained from $b_\ell\in \mathbb{Z}^{2\ell+1}$ by keeping the first entry as is (i.e., $0$), then adding for every $i\in [2\ell+1]$ the first entry of $A_{\ell+1}$ in the $i$-th row to the $i$-th entry, and finally inserting two new last entries, the first of which is $1$ and the second of which is $-1$.
    It is not hard to check that $A_{\ell+1}$ and $\mathbf{b}_{\ell+1}$ indeed describe $P_{\ell+1}$.

    One checks that the maximum absolute value of an entry of $A_{\ell+1}, b_{\ell+1}$ defined in this way is by at most a factor $8\ell+1$ larger than the maximum absolute value among entries in $A_\ell, \mathbf{b}_\ell$, which was assumed to be at most $(8\ell+1)^\ell$ by \cref{item:linear-polygon-encoding}.
    Thus, the maximum absolute value among entries in $A_{\ell+1}, \mathbf{b}_{\ell+1}$ is at most $(8\ell+1)^{\ell+1}<(8(\ell+1)+1)^{\ell+1}$, showing that \cref{item:linear-polygon-encoding} is also satisfied by this linear description of $P_{\ell+1}$. 
    
	A visualization of this construction can be seen in \cref{fig:long-monotone-distance-construction}.
    
	Let us now prove that $P_{\ell+1}$ satisfies \cref{item:linear-polygon-distance}.
    To show this, it suffices to prove $d^{P_{\ell+1}}_{\mathbf{c}_0}(\mathbf{u}_{\ell+1}) = \ell + 1$ as the proof for $d^{P_{\ell+1}}_{\mathbf{c}_0}(\mathbf{w}_{\ell+1}) = \ell + 1$ can be obtained the same way, exploiting symmetry along the $y=0$ axis.
	We first show $d^{P_{\ell+1}}_{\mathbf{c}_0}(\mathbf{u}_{\ell+1}) \ge \ell + 1$.
    Towards a contradiction, assume there was a $\mathbf{c}_0$-monotone circuit walk $W$ in $P_{\ell+1}$ from $\mathbf{u}_{\ell+1}$ to a $\mathbf{c}_0$-maximal vertex of length at most $\ell$ (as $\mathbf{c}_0=(1,0)^\top$, this circuit walk strictly increases the $x$-coordinate at every step).
    Note that by \cref{obs:edgesarecircuits} the circuits of $P_{\ell+1}$ are given by the following: (1) The vectors parallel to non-vertical edge-directions of $T_\ell(P_\ell)$ and $(2)$ vectors parallel to $(0,1)^\top$ or to $(1,\pm 0.5)^\top$. 
    Recall further that the non-vertical edges of $T_\ell(P_\ell)$ have a slope of at least $2\ell$ in absolute value.
	As $P_{\ell+1}$ lies in the strip $\mathbb{R}\times [-1,1]$, any circuit move in $P_{\ell+1}$ using one of these directions changes the $x$-coordinate by at most $\frac{1}{\ell}$.
    This implies that $W$ must use one of the edge directions $(1,\pm 0.5)$ at least once before reaching a point with $x$-coordinate bigger than $1$: Otherwise, the maximum $x$-coordinate reached by $W$ would be at most $\ell\cdot \frac{1}{\ell}=1$, contradicting that $W$ ends in the unique $\mathbf{c}_0$-maximal vertex who has an $x$-coordinate bigger than one.
    Note that since $W$ is monotone in the $x$-coordinate, it does not visit any point on the ``left'' edge between $\mathbf{u}_{\ell+1},\mathbf{w}_{\ell+1}$ except for $\mathbf{u}_{\ell+1}$.
    Thus, $W$ must contain a circuit move in direction $(1, \pm 0.5)^\top$ starting from a point on one of the edges spanned between $\mathbf{u}_{\ell+1}$ and $(1,0.5)^\top$ or between $\mathbf{w}_{\ell+1}$ and $(1,-0.5)^\top$.
	In particular $W$ must visit $T_\ell(\mathbf{u}_\ell) = (1, 0.5)^\top$ or $T_\ell(\mathbf{w}_\ell) = (1, -0.5)^\top$, as any circuit move with the above property ends in one of $T_\ell(\mathbf{u}_\ell)$,  $T_\ell(\mathbf{w}_\ell)$.
    
	Consider now the suffix $W'$ of $W$ starting from $T_\ell(\mathbf{u}_\ell)$ or $T_\ell(\mathbf{w}_\ell)$.
    Note that the length of $W'$ is less than~$\ell$.
	By $\mathbf{c}_0$-monotonicity $W'$ visits only points on $T_\ell(P_\ell)$.
	The slope of the edges of $T_\ell(P_\ell)$ is smaller in absolute value than $1$, so a circuit step in direction $(1, \pm 0.5)^\top$ is not feasible at any point on the part of the boundary of $P_{\ell+1}$ that is contained in $T_\ell(P_\ell)$.
    Thus, $W'$ only uses circuit directions parallel to non-vertical edges of $T_\ell(P_\ell)$.
    By \cref{obs:transform}, scaling $W'$ then gives rise to a circuit walk in $P_\ell$.
    The latter starts at $\mathbf{u}_\ell$ or $\mathbf{w}_\ell$ and reaches the unique $\mathbf{c}_0$-maximal vertex of $P_\ell$ with less than $\ell$ steps.
    This contradicts that the $\mathbf{c}_0$-monotone circuit distance from $\mathbf{u}_\ell$ and $\mathbf{w}_\ell$ in $P_\ell$ equals $\ell$ (by our assumptions on $P_\ell$).
    Hence, our assumption that a circuit walk $W$ from $\mathbf{u}_{\ell+1}$ to a $\mathbf{c}_0$-maximal vertex of length at most $\ell$ exists was wrong, proving that $d^{P_{\ell+1}}_{\mathbf{c}_0}(\mathbf{u}_{\ell+1})\ge \ell+1$.
    In the other direction, observe that one can reach $T_\ell(\mathbf{u}_\ell)$ from $\mathbf{u}_{\ell+1}$ with a single circuit move, so $d^{P_{\ell+1}}_{\mathbf{c}_0}(\mathbf{u}_{\ell+1} )\leq d^{P_{\ell+1}}_{\mathbf{c}_0}(T_\ell(\mathbf{u}_{\ell})) + 1 \leq \ell +1 $.
    Here we used \cref{obs:transform} and the properties of $P_{\ell}$ for the last inequality.
    Thus, we indeed have $d^{P_{\ell+1}}_{\mathbf{c}_0}(\mathbf{u}_{\ell+1})=\ell+1$, as desired, finishing the proof of \cref{item:linear-polygon-distance}.
    
	Finally, observe that the construction consists of a sequence of $\ell$ linear transformations of polynomial size in $\ell$, giving \cref{item:linear-polygon-encoding}.
	Hence, it can be done in time polynomial in $\ell$ and the encoding length stays polynomial as well.
    Furthermore, $P_1$ has three edges and we add two edges when constructing $P_{\ell+1}$ from $P_\ell$.
    Thus, the number of edges is also as claimed in \cref{item:linear-polygon-edges}.
    By construction we also have $P_{\ell + 1}\setminus\{\mathbf{u}_{\ell + 1}, \mathbf{w}_{\ell + 1}\}\subseteq \mathbb{R}_{\geq 0}\times (-1,1)$.
\end{proof}

In the remainder of the article we will prove \cref{thm:monotone-circuit-diameter}.
In order to do so, we first need to introduce some notation.
For a point $\mathbf{v} = (a,b)^\top\in \mathbb{R}^2$ we denote by $\mathbf{v}^x := a, \mathbf{v}^y := b$ its $x$- and $y$-coordinate, respectively.

We explicitly construct a polygon $P$ with a vertex $\mathbf{s}$ and a cost vector $\mathbf{c}$ satisfying the conditions of \cref{thm:monotone-circuit-diameter}.
To start the construction of the polytope $P$, we begin with the replacement of the upper left corner.
In order to use the replacement in the overall construction, we summarize below the key properties we are going to use.

\begin{lemma}\label{lemma:replacement-hard-instance}
    Let $(a_1, \dots, a_n, S, k)$ be an instance of the \textsc{Exact Subset sum with Repetition} problem with $0\leq a_1 < a_2 < \dots < a_n$ and let $C\in \mathbb{Z}_{\geq 0}$ be given.
    We can efficiently determine an affine transformation $T\colon \mathbb{R}^2 \rightarrow \mathbb{R}^2$ with the following properties.
    \begin{enumerate}[label=\textnormal{(\roman*)}]
        \item\label{item:hard-slopes} For every edge of $T(P_{Ck})$ its slope $s$ satisfies $0 < s < \frac{1}{2Ck}$.
        \item\label{item:hard-coordinates} 
        Let $\mathbf{t}_{Ck}$ denote the unique $\mathbf{c}_0$-maximal vertex of $P_{Ck}$.
        Then $T(\mathbf{t}_{Ck})^y = S$ and $T(\mathbf{u}_{Ck})^x = 0$.
        Furthermore, for every $\mathbf{p}\in T(P_{Ck})$ we have 
        \[
            0 \leq \mathbf{p}^x < \left(\frac{s_1}{a_n}\right)^{\left\lceil\frac{Ck}{2}\right\rceil+1}
            \qquad \text{and} \qquad
            S - \frac{1}{2}\left(\frac{s_1}{a_n}\right)^{\left\lceil\frac{Ck}{2}\right\rceil+1} < \mathbf{p}^y < S + \frac{1}{2}\left(\frac{s_1}{a_n}\right)^{\left\lceil\frac{Ck}{2}\right\rceil+1}\enspace.
        \]
        \item\label{item:hard-vertex} Every vertex $\mathbf{v}$ of $T(P_{Ck})$ is $\mathbf{c}_\mathbf{v}$-maximal for a vector $\mathbf{c}_\mathbf{v}\in \mathbb{R}^2$ with $\mathbf{c}_\mathbf{v}^x < 0$ and $\mathbf{c}_\mathbf{v}^y > 0$.
        \item\label{item:hard-encoding} The encoding length of $T(P_{Ck})$ is polynomial in $a_1, \dots, a_n$, $k$, $\log S$, and $C$.
    \end{enumerate}
\end{lemma}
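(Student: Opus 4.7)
The plan is to construct $T$ as an explicit affine map $T(\mathbf{x}) := H\mathbf{x} + \mathbf{d}$ with
\[
H := \begin{pmatrix} -\sigma\gamma & -\sigma \\ 0 & -\tau \end{pmatrix}, \qquad \mathbf{d} := \begin{pmatrix} \sigma \\ S \end{pmatrix},
\]
for a fixed $\gamma := 1/4$ and positive rational parameters $\sigma, \tau$ to be tuned. A short calculation yields $T(\mathbf{u}_{Ck})=(0, S-\tau)^\top$, $T(\mathbf{w}_{Ck})=(2\sigma, S+\tau)^\top$, and $T(\mathbf{t}_{Ck}) = (\sigma(1-\gamma x_t), S)^\top$, where $x_t\in (0,2)$ denotes the $x$-coordinate of $\mathbf{t}_{Ck}$ (which lies on the $x$-axis by the symmetry of the recursive construction of $P_{Ck}$). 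The required identities $T(\mathbf{u}_{Ck})^x = 0$ and $T(\mathbf{t}_{Ck})^y = S$ of \ref{item:hard-coordinates} are therefore immediate.

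To verify \ref{item:hard-slopes}, I would compute images of edge directions. The vertical left edge direction $(0,1)^\top$ maps to $(-\sigma,-\tau)^\top$, of slope $\tau/\sigma$. Any non-vertical edge direction $(u,v)^\top$ of $P_{Ck}$ has slope $s=v/u$ satisfying $|s|\ge 1/2$ by the invariant from the proof of \cref{thm:linear-circuit-distance}, and its image has slope $\tau s/\bigl(\sigma(s+\gamma)\bigr)$. Since $\gamma = 1/4 < 1/2\le|s|$, the quantities $s$ and $s+\gamma$ share the same sign, so the image slope is strictly positive; a short monotonicity analysis over $|s|\ge 1/2$ then bounds its magnitude by $2\tau/\sigma$. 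Setting $\tau := \sigma/(8Ck)$ places every transformed slope in $(0, 1/(4Ck))$, establishing \ref{item:hard-slopes}. The location bounds in \ref{item:hard-coordinates} follow from the explicit formula $T(x,y)=(\sigma(1-\gamma x-y),S-\tau y)$: using $\gamma<1/2\le|s|$ along both boundary chains of $P_{Ck}$, one checks that the linear function $\gamma x+y$ attains its maximum $1$ at $\mathbf{u}_{Ck}$ and its minimum $-1$ at $\mathbf{w}_{Ck}$, so $T^x\in[0,2\sigma]$ and $|T^y-S|\le\tau$. Choosing $\sigma:=\tfrac14(s_1/a_n)^{\lceil Ck/2\rceil+1}$ then yields the required enclosure. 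Item \ref{item:hard-encoding} is then immediate, since $\sigma$ and $\tau$ are rationals of encoding length polynomial in $Ck$ and $\log\max(s_1,a_n)$, and $P_{Ck}$ itself has polynomial encoding length by \cref{thm:linear-circuit-distance} \ref{item:linear-polygon-encoding}.

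The most delicate step is \ref{item:hard-vertex}. Since $\det H = \sigma\gamma\tau > 0$, the map $T$ preserves orientation, and the counterclockwise traversal of $P_{Ck}$ (starting at $\mathbf{u}_{Ck}$, going down the vertical left edge to $\mathbf{w}_{Ck}$, through the lower boundary to $\mathbf{t}_{Ck}$, and along the upper boundary back to $\mathbf{u}_{Ck}$) lifts to a CCW traversal of $T(P_{Ck})$. A direct calculation shows that the image of the left edge (the ``chord'' from $T(\mathbf{u}_{Ck})$ to $T(\mathbf{w}_{Ck})$) has direction $(\sigma,\tau)$ in the open first quadrant, while the image of every other edge of $P_{Ck}$ has direction in the open third quadrant; the negative $x$-component on images of the upper boundary edges of $P_{Ck}$ again relies on $\gamma<1/2\le|s|$. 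Rotating directions by $-\pi/2$, the outward normals of all non-chord edges lie in the open second quadrant, while the chord's outward normal lies in the open fourth quadrant. Consequently, the normal cone at any intermediate chain vertex is entirely contained in the open second quadrant and any interior direction provides a suitable $\mathbf{c}_\mathbf{v}$, while at $T(\mathbf{u}_{Ck})$ and $T(\mathbf{w}_{Ck})$ the outward normal of the unique adjacent chain edge is already in the open second quadrant and lies in the vertex's normal cone, so it serves as $\mathbf{c}_\mathbf{v}$. The main obstacle I anticipate is precisely this sign issue: the naive positive choice of $H_{11}$ would place $\mathbf{t}$ \emph{below} the chord, moving the chain to the lower boundary of $T(P_{Ck})$ and flipping all chain outward normals into the fourth quadrant, which would invalidate \ref{item:hard-vertex} at every intermediate vertex.
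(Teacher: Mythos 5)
Your proposal takes a genuinely different route from the paper. The paper builds $T$ as a composition of several elementary transformations (scale in $x$ to fit inside a triangle, rotate to force bounded positive slopes, scale in $y$ to shrink those slopes, scale uniformly to shrink the polygon, then translate), and establishes item~(iii) by explicitly writing down the normal vectors in terms of the transformed slopes $s_1 < \dots < s_{2Ck}$. You instead write down a single explicit lower-triangular shear $H$ with two free parameters $\sigma, \tau$, exploit the slope invariant $|s|\ge 1/2$ directly to bound the transformed slopes, and deduce item~(iii) geometrically via orientation preservation and normal cones. Your route is shorter, avoids the irrational rotation, and makes the encoding-length claim in~(iv) somewhat cleaner. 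One subtlety you did not spell out but which does hold: there is no circularity in letting $\sigma$ depend on $s_1$, since the transformed slopes depend only on the ratio $\tau/\sigma = 1/(8Ck)$, which you fix before choosing $\sigma$.

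The one genuine (though easily fixable) gap is at the end of your treatment of~(iii). At $T(\mathbf{u}_{Ck})$ and $T(\mathbf{w}_{Ck})$ you take $\mathbf{c}_\mathbf{v}$ to be the outward normal of the adjacent chain edge, which lies on the boundary ray of the normal cone at that vertex. This vector is maximized along the entire adjacent edge, not uniquely at the vertex. While the wording of~(iii) only asks for $\mathbf{c}_\mathbf{v}$-maximality, the lemma is invoked later (in the claim identifying the vertices of the full polygon $P$) with the understanding that $\mathbf{c}_\mathbf{v}$ makes $\mathbf{v}$ the \emph{unique} maximizer, and the paper's own proof of~(iii) carefully establishes strict inequalities. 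You can close the gap by perturbing into the interior of the normal cone: replace the chain-edge normal by it plus a small positive rational multiple of the chord normal. Since the chain-edge normal lies strictly in the open second quadrant, the perturbed vector remains there for a sufficiently small perturbation, and it is now in the open normal cone, giving uniqueness.
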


\begin{proof}
Let $P_{Ck}$ be the polygon constructed in \cref{thm:linear-circuit-distance} with vertices $\mathbf{u}_{Ck}$ and $\mathbf{w}_{Ck}$.
Furthermore, let $\mathbf{t}_{Ck}$ be the unique $\mathbf{c}_0$-maximal vertex of $P_{Ck}$.
In order to make the construction more digestible, we will proceed in several steps.
The first step roughly orients $P_{Ck}$ and ensures that all slopes are positive and bounded.
The second step guarantees that the slopes satisfy \cref{item:hard-slopes}.
The third step ensures that all vertices are close to each other.
Finally, we align the polytope with the left edge and the image of $\mathbf{t}_{Ck}$ with the $y=S$ line, to establish \cref{item:hard-coordinates}.
In the end we will check that \cref{item:hard-vertex} and \cref{item:hard-encoding} hold as well.
A visualization of the affine transformations can be seen in \cref{fig:linear-transformation}.

\begin{figure}[h]
	\centering
    \begin{subfigure}[t]{0.32\textwidth}
       \begin{center}
       \begin{tikzpicture}[scale=.8]
	\begin{scope}[thin, gray]
		\draw[-stealth] (-0.2, 0) -- (2.7,0);
		\node[below, gray, font=\small] at (2.7, 0) {$x$};
		\draw[-stealth] (0,-2.7) -- (0, 2.7);
		\node[left, gray, font=\small] at (0,2.7) {$y$};

		\draw[dashed, gray!60!white,line width=1pt] (2,-2.7) -- (2, 2.7);
		\node[above right, gray, font=\small] at (2,-2.7) {$x=1$};
	\end{scope}

	\begin{scope}[radius=2pt, circle, black]
		\coordinate (a2) at (0,2) {};
		\coordinate (a1) at (2,1) {};
		\coordinate (b2) at (0,-2) {};
		\coordinate (b1) at (2,-1) {};
		\coordinate (opt) at (2.25,0) {};
	\end{scope}

	\begin{scope}[every node/.style={thick,draw=black,fill=black,circle,minimum size=1, inner sep=1}]
		\node at (a1)  {};
		\node at (b1)  {};
		\node at (a2)  {};
		\node at (b2)  {};
		\node at (opt){};
	\end{scope}

	\begin{scope}[thick]
		\draw (a2) -- (a1) --(opt);
		\draw (opt) -- (b1)-- (b2);
		\draw (b2) -- (a2);
	\end{scope}
\end{tikzpicture}        \end{center}
       \caption{We start with the polygon constructed in \cref{thm:linear-circuit-distance}.}
    \end{subfigure}
    \begin{subfigure}[t]{0.32\textwidth}
       \begin{center}
       \begin{tikzpicture}[scale=.8]
	\begin{scope}[thin, gray]
		\draw[-stealth] (-.2, 0) -- (2.7,0);
		\node[below, gray, font=\small] at (2.7, 0) {$x$};
		\draw[-stealth] (0,-2.7) -- (0, 2.7);
		\node[left, gray, font=\small] at (0,2.7) {$y$};
\node[yshift = 3ex, xshift = 1ex, gray, font=\small] at (1,0) {$\begin{psmallmatrix}
		    0.5 \\ 0
		\end{psmallmatrix}$};
		\draw[dashed, gray!60!white,line width=1pt] (0,-2) -- (1, 0);
		\draw[dashed, gray!60!white,line width=1pt] (0, 2) -- (1, 0);
	\end{scope}

\begin{scope}[xscale=0.2]
	\begin{scope}[radius=2pt, circle, black]
		\coordinate (a2) at (0,2) {};
		\coordinate (a1) at (2,1) {};
		\coordinate (b2) at (0,-2) {};
		\coordinate (b1) at (2,-1) {};
		\coordinate (opt) at (2.25,0) {};
	\end{scope}
\end{scope}

	\begin{scope}[every node/.style={thick,draw=black,fill=black,circle,minimum size=1, inner sep=1}]
		\node at (a1)  {};
		\node at (b1)  {};
		\node at (a2)  {};
		\node at (b2)  {};
		\node at (opt){};
	\end{scope}

	\begin{scope}[thick]
		\draw (a2) -- (a1) --(opt);
		\draw (opt) -- (b1)-- (b2);
		\draw (b2) -- (a2);
	\end{scope}
\end{tikzpicture}
        \end{center}
       \caption{We scale in $x$-direction such that the polygon lives inside the given triangle.}
    \end{subfigure}
    \begin{subfigure}[t]{0.32\textwidth}
       \begin{center}
       
\begin{tikzpicture}[scale=.8]
	\begin{scope}[thin, gray]
		\draw[-stealth] (-2.7, 0) -- (2.7,0);
		\node[below, gray, font=\small] at (2.7, 0) {$x$};
		\draw[-stealth] (0,-2.7) -- (0, 2.7);
		\node[left, gray, font=\small] at (0,2.7) {$y$};
		\draw[dashed, gray!60!white,line width=1pt] (-2.2,-2.2) -- (2.2, 2.2);
		\node[below, gray, font=\small] at (-2.2,-2.2) {$x=y$};
	\end{scope}

\begin{scope}[shift={(0,0)}, rotate=135]
\draw[dashed, gray!60!white,line width=1pt] (0,-2) -- (1, 0);
\draw[dashed, gray!60!white,line width=1pt] (0, 2) -- (1, 0);
\node[above left, gray] at (1,0) {
$\begin{psmallmatrix}    
-\sqrt{2}/4 \\    \sqrt{2}/4
\end{psmallmatrix}$

}; 
\begin{scope}[xscale=0.2]
	\begin{scope}[radius=2pt, circle, black]
		\coordinate (a2) at (0,2) {};
		\coordinate (a1) at (2,1) {};
		\coordinate (b2) at (0,-2) {};
		\coordinate (b1) at (2,-1) {};
		\coordinate (opt) at (2.25,0) {};
	\end{scope}
\end{scope}
\end{scope}

	\begin{scope}[every node/.style={thick,draw=black,fill=black,circle,minimum size=1, inner sep=1}]
		\node at (a1)  {};
		\node at (b1)  {};
		\node at (a2)  {};
		\node at (b2)  {};
		\node at (opt){};
	\end{scope}

	\begin{scope}[thick]
		\draw (a2) -- (a1) --(opt);
		\draw (opt) -- (b1)-- (b2);
		\draw (b2) -- (a2);
	\end{scope}
\end{tikzpicture}        \end{center}
       \caption{We rotate, so that all slopes are in $[1/3, 3]$.}
    \end{subfigure}
    \begin{subfigure}[t]{0.32\textwidth}
        \begin{center}
       
\begin{tikzpicture}[scale=.8]
	\begin{scope}[thin, gray]
		\draw[-stealth] (-1.7, 0) -- (2.7,0);
		\node[below, gray, font=\small] at (2.7, 0) {$x$};
		\draw[-stealth] (0,-.7) -- (0, 4.2);
		\node[left, gray, font=\small] at (0,4.2) {$y$};
	\end{scope}
		\node[left, font=\small] at (0,5.2) {\phantom{$y$}};

\begin{scope}[yscale=.2]
\begin{scope}[shift={(0,0)}, rotate=135]
\begin{scope}[xscale=0.2]
	\begin{scope}[radius=2pt, circle, black]
		\coordinate (a2) at (0,2) {};
		\coordinate (a1) at (2,1) {};
		\coordinate (b2) at (0,-2) {};
		\coordinate (b1) at (2,-1) {};
		\coordinate (opt) at (2.25,0) {};
	\end{scope}
\end{scope}
\end{scope}

\end{scope}

\newcommand{\x}{-0.7071}
\newcommand{\y}{-0.1414}
\node[left, gray, font=\small] at (\x,\y+1.2) {\phantom{y}};
	\begin{scope}[every node/.style={thick,draw=black,fill=black,circle,minimum size=1, inner sep=1}]
		\node at (a1)  {};
		\node at (b1)  {};
		\node at (a2)  {};
		\node at (b2)  {};
		\node at (opt){};
	\end{scope}

	\begin{scope}[thick]
		\draw (a2) -- (a1) --(opt);
		\draw (opt) -- (b1)-- (b2);
		\draw (b2) -- (a2);
	\end{scope}
\end{tikzpicture}        \end{center}
       \caption{Next we scale down in $y$-direction so that all slopes are small.}
    \end{subfigure}
    \begin{subfigure}[t]{0.32\textwidth}
        \begin{center}
       
\begin{tikzpicture}[scale=.8]
	\begin{scope}[thin, gray]
		\draw[-stealth] (-1.7, 0) -- (2.7,0);
		\node[below, gray, font=\small] at (2.7, 0) {$x$};
		\draw[-stealth] (0,-.7) -- (0, 4.2);
		\node[left, gray, font=\small] at (0,4.2) {$y$};
	\end{scope}

		\node[left, font=\small] at (0,5.2) {\phantom{$y$}};

\begin{scope}[scale=.25]
\begin{scope}[yscale=.2]
\begin{scope}[shift={(0,0)}, rotate=135]
\begin{scope}[xscale=0.2]
	\begin{scope}[radius=2pt, circle, black]
		\coordinate (a2) at (0,2) {};
		\coordinate (a1) at (2,1) {};
		\coordinate (b2) at (0,-2) {};
		\coordinate (b1) at (2,-1) {};
		\coordinate (opt) at (2.25,0) {};
	\end{scope}
\end{scope}
\end{scope}
\end{scope}

\end{scope}

\newcommand{\x}{-0.7071}
\newcommand{\y}{-0.1414}

	\begin{scope}[every node/.style={thick,draw=black,fill=black,circle,minimum size=1, inner sep=1}]
		\node at (a1)  {};
		\node at (b1)  {};
		\node at (a2)  {};
		\node at (b2)  {};
		\node at (opt){};
	\end{scope}

    \begin{scope}[scale = 0.5]
        
    \draw[decorate,decoration={brace,amplitude=2pt,raise=1ex, mirror}] (-\x, -\y) --node[midway, yshift=3ex, xshift=1ex]{$a$} (\x, -\y);
    \draw[decorate,decoration={brace,amplitude=2pt,raise=1ex, mirror}] (-\x, \y) --node[midway, xshift= 3ex]{$b$} (-\x, -\y);
    \end{scope}

	\begin{scope}[thick]
		\draw (a2) -- (a1) --(opt);
		\draw (opt) -- (b1)-- (b2);
		\draw (b2) -- (a2);
	\end{scope}
\end{tikzpicture}        \end{center}
       \caption{We scale the whole polygon so that its width $a$ and height $b$ are both bounded by $\left(\frac{s_1}{a_n}\right)^{\left\lceil\frac{Ck}{2}\right\rceil + 1}$.}
       \label{subfig:transformation_5}
    \end{subfigure}
    \begin{subfigure}[t]{0.32\textwidth}
        \begin{center}
       \begin{tikzpicture}[scale=.8]

\begin{scope}[shift={(-0.3535, 0.0707)}]
\begin{scope}[scale=.25]
\begin{scope}[yscale=.2]
\begin{scope}[shift={(0,0)}, rotate=135]
\begin{scope}[xscale=0.2]
	\begin{scope}\coordinate (a2) at (0,2) {};
		\coordinate (a1) at (2,1) {};
		\coordinate (b2) at (0,-2) {};
		\coordinate (b1) at (2,-1) {};
		\coordinate (opt) at (2.25,0) {};
	\end{scope}
\end{scope}
\end{scope}
\end{scope}

\end{scope}
\end{scope}

\newcommand{\x}{-0.7071}
\newcommand{\y}{-0.1414}

		\draw[white] (\x,\y-3.7) -- (\x, \y +1.2);
	\begin{scope}[thin, gray]
\draw[-stealth] (\x -.2, \y - 3) -- (\x + 2.7,\y-3);
		\node[below, gray, font=\small] at (\x + 2.7, \y - 3) {$x$};
		\draw[-stealth] (\x,\y-3.7) -- (\x, \y +1.2);
		\node[left, gray, font=\small] at (\x,\y+1.2) {$y$};
	\end{scope}

    \coordinate (upper-right) at (1, -\y);
    \coordinate (s) at (\x, \y - 3);
    \coordinate (lower-right) at (1, \y-3); 
    \coordinate (l3) at (1, \y-2.6); 
    \coordinate (l2) at (.9, \y-2.9); 
    \coordinate (l1) at (.7, \y-3); 

	\begin{scope}[every node/.style={thick,draw=black,fill=black,circle,minimum size=1, inner sep=1}]
		\node at (a1)  {};
		\node at (b1)  {};
		\node at (a2)  {};
		\node at (b2)  {};
		\node at (opt){};
	\end{scope}

	\begin{scope}[thick]
		\draw (a2) -- (a1) --(opt);
		\draw (opt) -- (b1)-- (b2);

		\draw (b2) -- (a2);
        \begin{scope}[gray]
            
        \draw (b2) -- (upper-right);
        \draw (upper-right) -- (l3);
        \draw (l3) -- (l2);
        \draw (l2) -- (l1);
        \draw (l1) -- (s);
        \draw (s) -- (a2);
        \end{scope}
	\end{scope}
\end{tikzpicture}        \end{center}
       \caption{Finally, we translate the polygon.
       In the following we will use it as part of the sketched polygon.}
    \end{subfigure}
    \caption{Visualization of the affine transformation described in \cref{lemma:replacement-hard-instance}.
    To increase visibility, the scaling in \cref{subfig:transformation_5} is not as in the actual construction.
    }
    \label{fig:linear-transformation}
\end{figure}
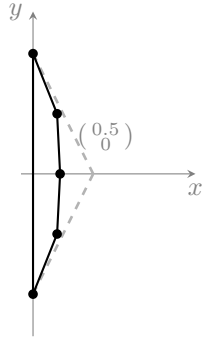
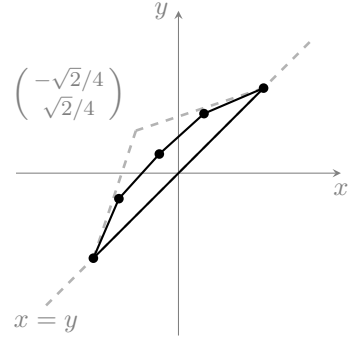
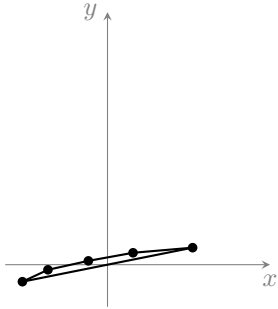
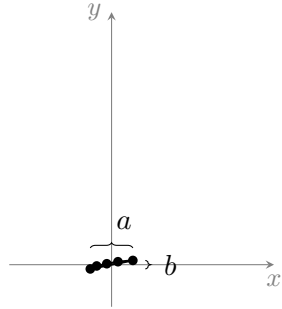
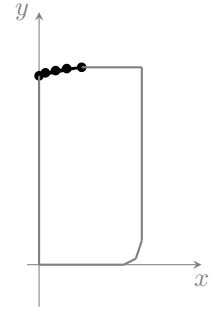
We will now describe the affine transformation we are using.
Recall that $P_{Ck}\setminus \{\mathbf{u}_{Ck}, \mathbf{w}_{Ck}\}\subseteq \mathbb{R}_{\geq 0}\times (-1, 1)$.
We first use a scaling $S$ along the $x$-axis, i.e., a function of the form $S((x,y)^\top) = (\alpha x, y)^\top$ for an $\alpha\in \mathbb{R}_{\geq 0}$.
We chose $\alpha$ such that $S(P_{Ck}) \subseteq \text{conv}((0,1)^\top, (0,-1)^\top, (0.5,0)^\top)$.
This can be done by setting $\alpha \coloneqq \frac{1}{2}\max\{\frac{\mathbf{p}^x}{1-|\mathbf{p}^y|}| \mathbf{p}\in \text{vertices}(P_{Ck})\setminus\{\mathbf{u}_{Ck},\mathbf{w}_{Ck}\}\}$, ensuring polynomial encoding length of the image.
Next consider the rotation $R$ around the origin with $R(u_{Ck}) =\left(-{\sqrt{2}}/{2}, -{\sqrt{2}}/{2}\right)^\top$.
Note that $R(\mathbf{w}_{Ck})=\left({\sqrt{2}}/{2}, {\sqrt{2}}/{2}\right)^\top$ and $R((0.5, 0)^\top) = (-\sqrt{2}/4, \sqrt{2}/4)^\top$.
Set $T_1 = R \circ S$.
Note that we have
\[T_1(P_{Ck})\subseteq \text{conv}\left(
\left(-\frac{\sqrt{2}}{2}, -\frac{\sqrt{2}}{2}\right)^\top,
\left(\frac{\sqrt{2}}{2}, \frac{\sqrt{2}}{2}\right)^\top,
\left(-\frac{\sqrt{2}}{4}, \frac{\sqrt{2}}{4}\right)^\top
\right)\enspace .\]

Furthermore, any slope of $T_1(P_{Ck})$ is upper-bounded by the slope between 
$\left(-{\sqrt{2}}/{2}, -\sqrt{2}/{2}\right)^\top$
and
$\left(-\sqrt{2}/{4}, \sqrt{2}/{4}\right)^\top$
and lower-bounded by the slope between
$\left(-{\sqrt{2}}/{4}, {\sqrt{2}}/{4}\right)^\top$
and
$\left({\sqrt{2}}/{2}, {2}/{2}\right)^\top$.
The former edge has a slope of $3$ and the latter edge has a slope of $1/3$.
Hence the slopes of $T_1(P_{Ck})$ all lie in the interval $[1/3, 3]$.

The next affine transformation $T_2$ scales along the $y$-axis, i.e., we set $T_2((x,y)^\top) = (x, \beta y)^\top$ for $\beta\in \mathbb{R}_{\geq 0}$.
We choose $\beta$ such that after the transformation all edges have a slope in the interval $\left(0,\frac{1}{2Ck}\right)$, e.g., by setting $\beta \coloneqq \frac{1}{6Ck}$.
Here we use that all slopes of $T_1(P_{Ck})$ are bounded by $3$.
Note that the encoding length of $\beta$ and hence the encoding length of of $T_2(T_1(P_{Ck}))$ is polynomial in the input.

In the next step, we scale the polytope as a whole, i.e., we determine a transformation $T_3$ with $T_3((x,y)^\top) = (\gamma x, \gamma y)^\top$ for some $\gamma \in \mathbb{R}_{\geq 0}$.
Let $s_1$ be the smallest slope of $T_2(T_1(P_{Ck}))$ and let $a_n$ be the largest element of the subset sum instance.
We choose $\gamma$ such that the length of $T_3(T_2(T_1(P_{Ck})))$ in $x$ and $y$ direction is bounded by $\left(\frac{s_1}{a_n}\right)^{\left\lceil\frac{Ck}{2}\right\rceil+1}$ each.
This can be done by setting $\gamma = \frac{1}{2}\left(\frac{s_1}{a_n}\right)^{\left\lceil\frac{Ck}{2}\right\rceil+1}$, as $T_2(T_1(P_{Ck}))\subseteq [-1,1]^2$.
Since $a_n \leq S$ and since $s_1$ has polynomial encoding length, the encoding length stays polynomial in $n$, $\log S$, $k$, and $C$.

Finally, translate the polygon (uniquely) in such a way that $T_2(T_1(\mathbf{u}_{Ck}))$ gets mapped to a point on the ($x=0$)--axis and $T_2(T_1(\mathbf{t}))$ to a point on the ($y=S$)--axis.
This keeps the encoding length of the polygon polynomially bounded.
Let $T$ denote the combined affine transformation.

We check that $T$ satisfies all the properties of the lemma.
First, by definition of $T_2$ we know that all slopes of $T_2(T_1(P_{Ck}))$ are between $0$ and $\frac{1}{2Ck}$.
As $T$ is obtained from $T_2\circ T_1$ by composing with a scaling and a translation, the slopes of $T(P_{Ck})$ agree with the slopes of $T_2(T_1(P_{Ck}))$, proving~\cref{item:hard-slopes}.
\cref{item:hard-coordinates} is satisfied by construction of $T_3$ and the definition of the final translation.
We already argued that the encoding length remains polynomial, showing \cref{item:hard-encoding}.
Finally, consider \cref{item:hard-vertex}.
Consider the edges of $T(P_{Ck})$ but the edge spanned by $T(\mathbf{u}_{Ck})$ and $T(\mathbf{w}_{Ck})$.
Note that by \cref{thm:linear-circuit-distance} \ref{item:linear-polygon-edges} there are $2Ck$ such edges.
Let $s_1<\dots< s_{2Ck}$ denote the slopes of these edges.
By \cref{item:hard-slopes} we have $0<s_1$ and $s_{2Ck} < \frac{1}{2Ck}$
Furthermore, in order to simplify notation for the next argument, let the vertices of $T(P_{Ck})$ be denoted $\mathbf{p}_{2Ck}, \dots, \mathbf{p}_{0}$, sorted by increasing $x$-coordinate.
We choose the indices in reverse so that for $i\in [2Ck-1]$ the vertex $\mathbf{p}_i$ is incident to the edge with slope $s_{i+1}$ to its left and the edge with slope $s_{i}$ to its right.
Additionally, $\mathbf{p}_{2Ck} = T(\mathbf{u}_{Ck})$ and $\mathbf{p}_{0} = T(\mathbf{w}_{Ck})$.
We now set
\[
  \mathbf{c}_{\mathbf{p}_i} = \begin{cases}
    (-\frac{s_1}{2}, 1)^\top & \text{if } i=0,\\
    (-2{s_{2Ck}}, 1)^\top& \text{if } i=2Ck,\\
    \left(-\frac{s_i + s_{i+1}}{2},1\right)^\top & \text{else.}
  \end{cases}
\]
The vector $\mathbf{p}_{j}-\mathbf{p}_{j-1}$ is for every $j\in [2Ck]$ obtained from $(-1, -s_j)^\top$ by multiplying with a positive scalar.
Since we have
\[
    \mathbf{c}_{\mathbf{p}_i}^\top (1, s_j)^\top = \frac{s_i + s_{i+1}}{2} - s_j \begin{cases}
        > 0 & \text{for } j \leq i, \\
        < 0 & \text{for } j > i,
    \end{cases}
\]
for every $i\in [2Ck-1]$, it follows that
\[
   \mathbf{c}_{\mathbf{p}_i}^\top \mathbf{p}_0 < \mathbf{c}_{\mathbf{p}_i}^\top \mathbf{p}_1 < \dots < \mathbf{c}_{\mathbf{p}_i}^\top \mathbf{p}_{i-1} < \mathbf{c}_{\mathbf{p}_i}^\top \mathbf{p}_i > \mathbf{c}_{\mathbf{p}_i}^\top \mathbf{p}_{i+1} > \dots > \mathbf{c}_{\mathbf{p}_i}^\top \mathbf{p}_{Ck}\enspace .
\]
for all $i\in [2Ck-1]$, as desired.
A similar computation for $\mathbf{c}_{\mathbf{p}_0}$ and $\mathbf{c}_{\mathbf{p}_{2Ck}}$ shows that $\mathbf{c}_{\mathbf{p}_0}^\top \mathbf{p}_0>\mathbf{c}_{\mathbf{p}_0}^T\mathbf{p}_i$ for all $i\in [2Ck]$ and $\mathbf{c}_{\mathbf{p}_{2Ck}}^\top \mathbf{p}_{2Ck}>\mathbf{c}_{\mathbf{p}_{2Ck}}^\top \mathbf{p}_i$ for all $i\in \{0,\ldots,2Ck-1\}$, thus establishing the statement of \cref{item:hard-vertex} in all cases.
This concludes the proof of the lemma.
\end{proof}

Next, we define the replacement of the lower right corner of the rectangle.
We ensure that no $\mathbf{c}$-monotone circuit walk visits the vertices in the lower right corner for a $\mathbf{c}$ making $T(\mathbf{t})$ maximal.
To do so, we ensure that their $\mathbf{c}$-objective value is worse than the $\mathbf{c}$-objective value of $\mathbf{s}$.
Here we only use that we have $\mathbf{c}^x < 0$ and $\mathbf{c}^y>0$.
Again, we summarize all needed properties in a lemma.

\begin{lemma}\label{lemma:replacement-correct-slopes}
    Let $(a_1, \dots, a_n, S, k)$ be an instance of the \textsc{Exact Subset sum with Repetition} problem, with $0\leq a_1 < a_2 < \dots < a_n$.
    Additionally, let $\mathbf{c}\in\mathbb{R}^2$ with $\mathbf{c}^x < 0$ and $\mathbf{c}^y > 0$ be given.
    Then, given these inputs, we can efficiently determine a set of points $\mathcal{V} =\{\mathbf{v}_0, \dots, \mathbf{v}_n\}\subseteq \mathbb{R}^2$ with the following properties:
    \begin{enumerate}[label=\textnormal{(\roman*)}]
        \item\label{item:A-coordinates} We have $\mathbf{v}_0^y = 0$ and $\mathbf{v}_n^x = 1$.
        Furthermore, for every $\mathbf{v}\in \mathcal{V}$ it holds that 
        \[
            0 < \mathbf{v}^x \leq 1 \qquad \text{and} \qquad 0 \leq \mathbf{v}^y < 1\enspace.
        \]
        \item\label{item:A-slopes} The slope of the line segment from $\mathbf{v}_{i-1}$ to $\mathbf{v}_i$ is $a_i$, for every $i\in \{0,\dots, n\}$. 
        \item\label{item:A-cost} We have $\mathbf{c}^\top \mathbf{v} \leq 0$, for every $\mathbf{v} \in \mathcal{V}$.
        \item\label{item:A-vertex} For every $i\in \{0,\dots,n\}$ there exists a $\mathbf{c}_i\in \mathbb{R}^2$ such that $\mathbf{c}_1^x > 0$, $\mathbf{c}_i^y < 0$, and $\mathbf{v}_i$ is the unique element in  $\mathrm{argmax}\{\mathbf{c}_i^\top \mathbf{v}|\mathbf{v}\in \mathcal{V}\}.$
        \item\label{item:A-encoding} The encoding length of $\mathbf{v}_0, \dots, \mathbf{v}_n$ is polynomial in $a_1, \dots, a_n$, and the encoding length of $\mathbf{c}$.
    \end{enumerate}
\end{lemma}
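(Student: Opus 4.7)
The plan is to place the vertices $\mathbf{v}_0, \dots, \mathbf{v}_n$ along a convex polygonal chain with the prescribed slopes, squeezed into a tiny neighborhood of the point $(1, 0)^\top$. Concretely, I would set $\mathbf{v}_0 := (1 - n\varepsilon, 0)^\top$ and then recursively $\mathbf{v}_i := \mathbf{v}_{i-1} + \varepsilon (1, a_i)^\top$ for $i = 1, \dots, n$, where $\varepsilon > 0$ is a small rational to be determined. By construction $\mathbf{v}_0^y = 0$, $\mathbf{v}_n = \bigl(1, \varepsilon \sum_{j=1}^n a_j\bigr)^\top$ so $\mathbf{v}_n^x = 1$, and the segment from $\mathbf{v}_{i-1}$ to $\mathbf{v}_i$ has slope exactly $a_i$. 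This immediately establishes \ref{item:A-slopes} together with the equality parts of \ref{item:A-coordinates}.

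The choice of $\varepsilon$ must simultaneously enforce the strict coordinate bounds in \ref{item:A-coordinates}, the cost condition \ref{item:A-cost}, and polynomial encoding \ref{item:A-encoding}. Using the explicit formula $\mathbf{v}_i = \bigl(1 - (n-i)\varepsilon,\ \varepsilon \sum_{j \le i} a_j\bigr)^\top$ together with $\mathbf{c}^x < 0$ and $\mathbf{c}^y > 0$, a direct calculation yields
\[
\mathbf{c}^\top \mathbf{v}_i \;=\; \mathbf{c}^x + \varepsilon \Bigl((n-i)|\mathbf{c}^x| + \mathbf{c}^y \textstyle\sum_{j \le i} a_j\Bigr),
\]
which is non-positive as soon as $\varepsilon \le |\mathbf{c}^x|/\bigl(n(|\mathbf{c}^x| + \mathbf{c}^y a_n)\bigr)$. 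The coordinate bounds $0 < \mathbf{v}_i^x$ and $\mathbf{v}_i^y < 1$ similarly translate into the upper bounds $\varepsilon < 1/n$ and $\varepsilon < 1/(n a_n)$, respectively. Taking $\varepsilon$ to be any positive rational strictly below the minimum of these three thresholds then satisfies all conditions at once, and the threshold itself has polynomial encoding length in $a_1, \dots, a_n$ and $\mathbf{c}$.

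For the extreme-point property \ref{item:A-vertex}, I would exploit that the strictly increasing slopes $a_1 < a_2 < \dots < a_n$ make the chain $\mathbf{v}_0, \dots, \mathbf{v}_n$ strictly convex. For $0 < i < n$, set $\mathbf{c}_i := (s_i, -1)^\top$ with $s_i := (a_i + a_{i+1})/2$; then $\mathbf{c}_i^\top(\mathbf{v}_j - \mathbf{v}_{j-1}) = \varepsilon(s_i - a_j)$ is strictly positive for $j \le i$ and strictly negative for $j > i$, so $\mathbf{v}_i$ is the unique $\mathbf{c}_i$-maximizer. For the boundary indices, set $\mathbf{c}_0 := (a_1/2, -1)^\top$ and $\mathbf{c}_n := (a_n + 1, -1)^\top$; the same calculation establishes uniqueness, assuming $a_1 > 0$. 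This assumption is without loss of generality: if $a_1 = 0$ then $r_1$ is unconstrained by the subset-sum equation, making the promise of the \textsc{Exact Subset Sum with Repetition} instance either vacuous or trivially violated, so such inputs can be discarded in preprocessing. In every case $\mathbf{c}_i^x > 0$ and $\mathbf{c}_i^y = -1 < 0$ as required. The main technical point is the careful bookkeeping of thresholds in the choice of $\varepsilon$, which must absorb the adversarial direction $\mathbf{c}$ while keeping the encoding polynomial; once this is in place, all four remaining conditions follow by direct substitution.
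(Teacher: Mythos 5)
Your construction is essentially identical to the paper's: with $\varepsilon = \beta/f_n$ where $\beta = \min\{-\mathbf{c}^x/\mathbf{c}^y, 1\}$ and $f_n = \sum_j a_j$, your formula $\mathbf{v}_i = (1-(n-i)\varepsilon,\ \varepsilon\sum_{j\le i}a_j)^\top$ matches the paper's $\mathbf{v}_i = (1-(n-i)\beta/f_n,\ f_i\beta/f_n)^\top$ exactly, and the midpoint-slope certificates $\mathbf{c}_i = ((a_i+a_{i+1})/2,-1)^\top$ are the same (with only a cosmetic difference at the endpoints). Your explicit handling of the $a_1 = 0$ edge case is a nice touch that the paper leaves implicit — the paper's proof of \ref{item:A-cost} also quietly invokes $a_1,\ldots,a_n \ge 1$ — but otherwise the argument is the same.
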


\begin{proof}
Set $\beta = \min\{-\frac{\mathbf{c}^x}{\mathbf{c}^y}, 1\}$.
By assumption on $\mathbf{c}$ we have $\beta \in (0,1]$, and the encoding length of $\beta$ is polynomial in the encoding length of $\mathbf{c}$. 
Denote by $f_i \coloneqq \sum_{j=1}^i a_j$ for $i\in \{0, 1,\dots, n\}$ the partial sums of $a_1,\ldots,a_n$.
Define $\mathbf{v}_i \coloneqq \left(1 - \frac{(n-i) \beta}{f_n}, \frac{f_i \beta}{f_n}\right)^\top$ for $i\in \{0, 1, \dots, n\}$.
Note that $f_n \geq \frac{n(n-1)}{2}$, as $0 \leq a_1 < a_2 < \dots < a_n$.
Using this, one easily checks that \ref{item:A-coordinates} is satisfied.
Furthermore, note that with this definition, for every $i$ the point $\mathbf{v}_i$ has polynomial encoding length in terms of the encoding length of $(a_1,\ldots,a_n)$ and $\mathbf{c}$, certifying that \ref{item:A-encoding} holds. 

Using the above coordinates we can compute the slope of the line segment spanned by $\mathbf{v}_{i-1}$ and $\mathbf{v}_{i}$ as $\frac{(f_i - f_{i-1})\beta}{f_n}\frac{f_n}{\beta} = f_i - f_{i-1} = a_i$, establishing \ref{item:A-slopes}.
Note that the definition of $\beta$ implies $\beta\le 1$ and $\mathbf{c}^x\le -\beta \mathbf{c}^y$.
Hence, we obtain for every $i\in \{0,\ldots,n\}$:

\begin{align*}
\mathbf{c}^\top \mathbf{v}_i&=\mathbf{c}^x\cdot\left(1-\frac{(n-i)\beta}{f_n}\right)+\mathbf{c}^y\cdot \frac{f_i\beta}{f_n}\\
&\le -\beta \mathbf{c}^y\cdot\left(1-\frac{(n-i)\beta}{f_n}\right)+\mathbf{c}^y\cdot \frac{f_i\beta}{f_n}\\
&=\beta \mathbf{c}^y\left(\frac{(n-i)\beta+f_i}{f_n}-1\right)\le 0 \enspace ,
\end{align*}
where the last inequality follows since $(n-i)\beta+f_i\le n-i+f_i\le f_n$.
Here we used that $a_1,\ldots,a_n\ge 1$, implies $f_n-f_i\ge n-i$.
This shows that \ref{item:A-cost} is satisfied.

Next, let us define the directions $\mathbf{c}_0,\ldots,\mathbf{c}_n$ for \ref{item:A-vertex}.
The definition and the proof will be almost identical to the proof of \cref{item:hard-vertex} of \cref{lemma:replacement-hard-instance}.
We set 
\[
\mathbf{c}_i \coloneqq 
\begin{cases}
    \left(\frac{a_1}{2}, -1\right)^\top & \text{if } i = 0,\\
    (2a_n, -1)^\top & \text{if } i = n,\\ 
    \left(\frac{a_{i} + a_{i+1}}{2}, -1\right)^\top & \text{else.}
\end{cases}
\]
We claim that $\{\mathbf{v}_i\}=\mathrm{argmax}\{\mathbf{c}_i^\top \mathbf{v}|\mathbf{v}\in \mathcal{V}\}$.
By \cref{item:A-slopes}, for all $j\in [n]$ the vector $\mathbf{v}_j-\mathbf{v}_{j-1}$ is obtained from $(1, a_j)^\top$ by multiplying with a positive scalar.
Since we have 
\[
    \mathbf{c}_i^\top (1, a_j)^\top = \frac{a_i + a_{i+1}}{2} - a_j \begin{cases}
        > 0 & \text{for } j \leq i, \\
        < 0 & \text{for } j > i,
    \end{cases}
\]
for every $i\in [n-1]$, it follows that
\[
   \mathbf{c}_i^\top \mathbf{v}_0 < \mathbf{c}_i^\top \mathbf{v}_1 < \dots < \mathbf{c}_i^\top \mathbf{v}_{i-1} < \mathbf{c}_i^\top \mathbf{v}_i > \mathbf{c}_i^\top \mathbf{v}_{i+1} > \dots > \mathbf{c}_i^\top \mathbf{v}_n\enspace .
\]
for all $i\in [n-1]$, as desired.
A similar computation for $\mathbf{c}_0$ and $\mathbf{c}_n$ shows that $\mathbf{c}_0^\top \mathbf{v}_0>\mathbf{c}_0^\top\mathbf{v}_i$ for all $i\in [n]$ and $\mathbf{c}_n^\top \mathbf{v}_n>\mathbf{c}_n^\top \mathbf{v}_i$ for all $i\in \{0,\ldots,n-1\}$, thus establishing the statement of \cref{item:A-vertex} in all cases.
This concludes the proof of the lemma.
\end{proof}

We are now all set to finish the construction.
Let $(a_1, \dots, a_n, S, k)$ be an instance of the \textsc{Exact Subset sum with Repetition} problem.
Assume, without loss of generality, that $0\leq a_1 < a_2 < \dots < a_n$.
Let $T$ be the affine transformation we obtain from applying \cref{lemma:replacement-hard-instance} to this instance.
Note that $T(\mathbf{w}_{Ck})$ is the point of $\mathcal{P}\coloneqq T(P_{Ck})$ with the largest $y$ coordinate.
Set $\varepsilon \coloneqq T(\mathbf{w}_{Ck})^y - S$ and note that $0<\varepsilon<\frac{1}{2}(s/a_n)^{Ck/2+1}$ by \cref{lemma:replacement-hard-instance}.
As before we can write $T(\mathbf{x}) = H\mathbf{x} + \mathbf{b}$, for an invertible matrix $H\in \mathbb{R}^{2\times 2}$ and a vector $\mathbf{b}\in \mathbb{R}^2$.
Set the cost vector $\mathbf{c}$ to $(H^{-1})^\top\mathbf{c}_0$.
Note that one can efficiently determine the inverse of $H$, e.g., by using the explicit formula for the inverse of a $2\times 2$ matrix: 
$\begin{psmallmatrix}
 a & b \\ c & d    
\end{psmallmatrix}^{-1} =
\frac{1}{ad-bc}\begin{psmallmatrix}
    d & -b \\ - c & a
\end{psmallmatrix}.
$
In particular, we can determine $\mathbf{c}$ efficiently and its encoding length is polynomial.

Now $T(\mathbf{t}_{Ck})$ is the unique $\mathbf{c}$-maximal vertex of $\mathcal{P}$.
Note that this implies $\mathbf{c}^x<0$ and $\mathbf{c}^y>0$.
Set $\mathbf{u} \coloneqq T(\mathbf{u}_{Ck})$, $\mathbf{w}\coloneqq T(\mathbf{w}_{Ck})$, and $\mathbf{t}\coloneqq T(\mathbf{t}_{Ck})$.
Additionally, let $\mathcal{V}$ be the set of points obtained by applying \cref{lemma:replacement-correct-slopes} to $(a_1,\dots, a_n, S, k)$ and $\mathbf{c}$.
We are now finally ready to define the polygon $P$ that we use in the proof of \cref{thm:monotone-circuit-diameter} as $P\coloneq\mathrm{conv}\left(\{(0,0)^\top, (1,S+\varepsilon)^\top\}\cup \mathcal{V}\cup \mathcal{P}\right)$. 
Set $\mathbf{s} = (0,0)^\top$.

In the following we analyze the properties of $P$.
We begin by explicitly describing all its vertices.
\begin{claim}\label{claim:monotone-construction-vertices}
    The vertices of $P$ are precisely $\mathbf{s}$, $(1, S+\varepsilon)^\top$, the points of $\mathcal{V}$, and the vertices of $\mathcal{P}$.
\end{claim}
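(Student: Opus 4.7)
By construction $P$ is the convex hull of the finite point set $\mathcal{F} := \{\mathbf{s}, (1, S+\varepsilon)^\top\} \cup \mathcal{V} \cup V(\mathcal{P})$, where $V(\mathcal{P})$ denotes the vertex set of the polygon $\mathcal{P}$. In particular every vertex of $P$ automatically lies in $\mathcal{F}$, so it suffices to prove the converse inclusion: each point of $\mathcal{F}$ is in fact an extreme point of $P$. The plan is to do this by exhibiting, for each candidate, an explicit linear functional under which it is the unique maximizer over $P$.

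The two ``corner'' candidates are handled by an elementary coordinate comparison, using the estimates from \cref{lemma:replacement-hard-instance} \ref{item:hard-coordinates} and \cref{lemma:replacement-correct-slopes} \ref{item:A-coordinates} (and the harmless assumption $S\geq 1$). Every point in $\mathcal{F}\setminus\{\mathbf{s}\}$ has $x+y>0$, so $\mathbf{s}$ uniquely maximizes the direction $(-1,-1)^\top$. Symmetrically, $(1,S+\varepsilon)^\top$ uniquely maximizes $(1,1)^\top$: it is the only candidate simultaneously attaining the maximal $x$-coordinate $1$ (otherwise attained only by $\mathbf{v}_n$, whose $y$-coordinate is $<1\leq S+\varepsilon$) and the maximal $y$-coordinate $S+\varepsilon$ (otherwise attained only by $T(\mathbf{w}_{Ck})$, whose $x$-coordinate is strictly less than $1$).

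For the remaining candidates, the two auxiliary lemmas already supply most of what is needed. \cref{lemma:replacement-correct-slopes} \ref{item:A-vertex} gives for every $\mathbf{v}_i\in\mathcal{V}$ a direction $\mathbf{c}_i$ with $\mathbf{c}_i^x>0$ and $\mathbf{c}_i^y<0$ making $\mathbf{v}_i$ the unique maximizer over $\mathcal{V}$; analogously, \cref{lemma:replacement-hard-instance} \ref{item:hard-vertex} gives for every $\mathbf{v}\in V(\mathcal{P})$ a direction $\mathbf{c}_\mathbf{v}$ with $\mathbf{c}_\mathbf{v}^x<0$ and $\mathbf{c}_\mathbf{v}^y>0$ making $\mathbf{v}$ the unique maximizer over $V(\mathcal{P})$. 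Geometrically $\mathcal{V}$ sits in the lower-right of $P$ while $\mathcal{P}$ sits near the upper-left corner, so the opposing sign patterns of $\mathbf{c}_i$ and $\mathbf{c}_\mathbf{v}$ already suffice to separate these two groups from each other; what is left to check is that each of these directions also strictly dominates the two corner candidates $\mathbf{s}$ and $(1,S+\varepsilon)^\top$.

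I expect the main technical subtlety to lie precisely in this last domination step: for instance, $\mathbf{c}_i^\top(1, S+\varepsilon)^\top$ could a priori be comparable to $\mathbf{c}_i^\top\mathbf{v}_i$ depending on the value of $S$. I plan to resolve this by replacing each $\mathbf{c}_i$ with a small perturbation of the form $\mathbf{c}_i+\mu(0,-1)^\top$ for a sufficiently small (but polynomially encodable) $\mu>0$, penalising large $y$-coordinates enough to beat the points in $V(\mathcal{P})$ and the corner $(1,S+\varepsilon)^\top$ while preserving the strict separation inside $\mathcal{V}$ guaranteed by \cref{lemma:replacement-correct-slopes} \ref{item:A-vertex}. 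An entirely analogous upward $y$-perturbation of each $\mathbf{c}_\mathbf{v}$ handles the corner comparisons for vertices of $\mathcal{P}$. Once these perturbed directions are verified using the quantitative coordinate estimates of \cref{lemma:replacement-hard-instance} \ref{item:hard-coordinates} and \cref{lemma:replacement-correct-slopes} \ref{item:A-coordinates}, extremality of every point in $\mathcal{F}$, and hence the claim, follows.
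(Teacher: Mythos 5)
Your overall strategy, namely exhibiting for each candidate a linear functional making it the unique maximizer over $P$, matches the paper. The handling of the two corner vertices $\mathbf{s}$ and $(1,S+\varepsilon)^\top$ (via $(-1,-1)^\top$ and $(1,1)^\top$) and the observation that the opposing sign patterns of the directions from \cref{lemma:replacement-hard-instance}~\ref{item:hard-vertex} and \cref{lemma:replacement-correct-slopes}~\ref{item:A-vertex} automatically separate $\mathcal{V}$ from $\mathcal{P}$ are both present in the paper's proof.

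Where you diverge is the final ``corner domination'' step. Your worry that, say, $\mathbf{c}_i^\top(1,S+\varepsilon)^\top$ might compete with $\mathbf{c}_i^\top\mathbf{v}_i$ is a legitimate concern if one tries to compare these two quantities directly, but the perturbation $\mathbf{c}_i \mapsto \mathbf{c}_i + \mu(0,-1)^\top$ is not needed and creates a genuine verification burden you have not discharged: you must simultaneously keep $\mu$ small enough to preserve all the strict inequalities within $\mathcal{V}$ and still argue that the resulting penalty dominates whatever deficit might exist against $(1,S+\varepsilon)^\top$, which requires quantitative bounds in terms of $a_1,\ldots,a_n,S$ that you do not supply. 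The paper avoids the direct comparison entirely by routing it through an intermediate vertex that \emph{shares one coordinate} with the corner. Concretely: $(1,S+\varepsilon)^\top$ and $\mathbf{v}_n$ both have $x$-coordinate $1$ (\cref{lemma:replacement-correct-slopes}~\ref{item:A-coordinates}), and $\mathbf{v}_n^y<S+\varepsilon$, so $\mathbf{c}_i^y<0$ alone gives $\mathbf{c}_i^\top(1,S+\varepsilon)^\top<\mathbf{c}_i^\top\mathbf{v}_n\le\mathbf{c}_i^\top\mathbf{v}_i$; likewise $\mathbf{s}$ and $\mathbf{v}_0$ share the $y$-coordinate $0$, so $\mathbf{c}_i^x>0$ gives $\mathbf{c}_i^\top\mathbf{s}<\mathbf{c}_i^\top\mathbf{v}_0\le\mathbf{c}_i^\top\mathbf{v}_i$. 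The analogous chaining on the $\mathcal{P}$ side goes through $\mathbf{u}=T(\mathbf{u}_{Ck})$ (sharing $x=0$ with $\mathbf{s}$) and $\mathbf{w}=T(\mathbf{w}_{Ck})$ (sharing $y=S+\varepsilon$ with $(1,S+\varepsilon)^\top$). No perturbation, no extra quantitative estimates, and the sign conditions from the two lemmas already suffice. I would recommend replacing the perturbation step with these intermediate comparisons.
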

\begin{proof}
It directly follows from the definition of $P$ that the set of vertices of $P$ is a subset of the list claimed above. 
Hence, it only remains to show that each of the claimed vertices is indeed a vertex of $P$.
We will certify this by proving that for each point in the list there exists $\mathbf{d}\in \mathbb{R}^2$ such that the respective point is the unique maximizer of the linear functional $\mathbf{d}^\top \mathbf{x}$ in $P$. 

By \cref{lemma:replacement-hard-instance} \ref{item:hard-vertex}, every vertex of $\mathcal{P}$ is the unique maximizer of a linear functional $\mathbf{d}^\top \mathbf{x}$ over $\mathcal{P}$ for some $\mathbf{d}\in \mathbb{R}^2$ with $\mathbf{d}^x < 0$ and $\mathbf{d}^y > 0$. 

Note that for a $\mathbf{d}$ with these properties we have that $\mathbf{d}^\top(0,0)^\top < \mathbf{d}^\top\mathbf{u}$, as $\mathbf{u}^x = 0 = \mathbf{s}^x$ and $\mathbf{u}^y > 0 = \mathbf{s}^y$. 
The same way, $(1, S+\varepsilon)^\top$ has worse value than $\mathbf{w}$, by using $\mathbf{w}^x < 1$ and $\mathbf{w}^y = S +\varepsilon$. 
This implies that also the unique optimizer of $\mathbf{d}^\top \mathbf{x}$ in $\mathcal{P}$ has strictly better $\mathbf{d}$-value than $(0,0)^\top$ and $(1, S+\varepsilon)^\top$.

 Since $\mathbf{d}^x<0, \mathbf{d}^y>0$ and since all points in $\mathcal{V}$ have bigger $x$-coordinate and smaller $y$-coordinate than all points in $\mathcal{P}$, we have that $\mathbf{d}^\top \mathbf{v}\le \mathbf{d}^\top \mathbf{p}$ for every $\mathbf{v}\in \mathcal{V}$ and $\mathbf{p}\in \mathcal{P}$. All in all, this shows that every vertex of $\mathcal{P}$ is indeed also a vertex of $P$.

Next, consider any point $\mathbf{v}_i \in \mathcal{V}$.
Then by \cref{lemma:replacement-correct-slopes} \ref{item:A-vertex} $\mathbf{v}_i$ is the unique maximizer of $\mathbf{d}^\top \mathbf{x}$ in $\mathcal{V}$, for some $\mathbf{d}\in \mathbb{R}^2$ with $\mathbf{d}^x> 0$ and $\mathbf{d}^y < 0$.
For these $\mathbf{d}$, $\mathbf{d}^\top \mathbf{s}$ is smaller than $\mathbf{d}^\top \mathbf{v}_0$, as $\mathbf{s}^x < \mathbf{v}_0^x$ and $\mathbf{s}^y = \mathbf{v}_0^y$.
Similarly, $\mathbf{d}^\top (1,S+\varepsilon)^\top$ is smaller than $\mathbf{d}^\top \mathbf{v}_n$, as $\mathbf{v}_n^x = 1$ and $\mathbf{v}_n^y < S +\varepsilon$.
Furthermore, $\mathbf{d}^x>0, \mathbf{d}^y<0$ and the fact that all points in $\mathcal{V}$ have bigger $x$- and smaller $y$-coordinates than all points in $\mathcal{P}$ implies that we have $\mathbf{d}^\top \mathbf{v}_i\ge \mathbf{d}^\top \mathbf{p}$ for every $\mathbf{p}\in \mathcal{P}$. Summarizing, this shows that indeed all elements of $\mathcal{V}$ are vertices of $P$. 

Finally, we have that $\mathbf{s}=(0,0)^\top$ is the unique maximizer of $(-1,-1)^\top x$ in $P$ and that $(1,S+\varepsilon)^\top$ the unique maximizer of $(1,1)^\top x$ in $P$.
This completes the argument.
\end{proof}

We note that $P$ has a class of canonical non-redundant encodings given by any description of the half-planes defined by two adjacent vertices.
Given two points in the plane one can efficiently determine a polynomially encoded description the line through them.
Thus, one can also determine both half-planes defined by the two points.
We just established the vertices of $P$ in \cref{claim:monotone-construction-vertices}.
Note that the vertices of $\mathcal{P}$ and the points of $\mathcal{V}$ have polynomial encoding length by \cref{lemma:replacement-hard-instance} \ref{item:hard-encoding} and \cref{lemma:replacement-correct-slopes} \ref{item:A-encoding}, respectively.
Furthermore, $\varepsilon$ has polynomial encoding length.
Thus, $P$ has indeed polynomial encoding length and we can compute it efficiently.

In the following, we call the edge between $\mathbf{s}$ and $\mathbf{u}$ the \emph{left edge}, the edge between $\mathbf{s}$ and $\mathbf{v}_0$ the \emph{lower edge}, the edge between $\mathbf{v}_n$ and $(1,S+\varepsilon)$ the \emph{right edge}, and the edge between $\mathbf{w}$ and $(1,S+\varepsilon)^\top$ the \emph{upper edge} of $P$.
We also denote by $\mathcal{T}$ the concave polygonal chain formed by the set of all points in $\mathcal{P}$ that lie on the boundary of $P$.
Next we identify the circuit directions of $P$.

\begin{observation}\label{obs:monotone-construction-circuit-directions}
	The $\mathbf{c}$-monotone circuit directions of $P$ are exactly the positive scalar multiples of the following vectors:
	\begin{enumerate}[label=\textnormal{(\roman*)}]
		\item $(-1,0)^\top, (0,1)^\top$,
		\item $(1,a_i)^\top$ for $i\in [n]$,
		\item the $\mathbf{c}$-monotone edge directions of $\mathcal{T}$.
	\end{enumerate}
    In the following, we will refer to a circuit of $P$ that is a positive scalar of one of the vectors in the first, second, or third item above as a \emph{circuit of type $1$, $2$, or $3$}, respectively.
    Additionally, by \cref{lemma:replacement-hard-instance} \ref{item:hard-slopes}, the circuits of type 3 have a slope of at most $\frac{1}{2Ck}$ in absolute value.
\end{observation}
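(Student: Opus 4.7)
My plan is to reduce the statement to an enumeration of edge directions of $P$ via \cref{obs:edgesarecircuits} and then to pin down the $\mathbf{c}$-monotone orientation of each. As the first step, I would use the coordinate bounds in \cref{lemma:replacement-hard-instance}\ref{item:hard-coordinates} and \cref{lemma:replacement-correct-slopes}\ref{item:A-coordinates} to arrange the vertices of $P$ (as listed in \cref{claim:monotone-construction-vertices}) in a single cyclic convex ordering $\mathbf{s}\to\mathbf{u}\to(\text{vertices of }\mathcal{T})\to\mathbf{w}\to(1,S+\varepsilon)^\top\to\mathbf{v}_n\to\mathbf{v}_{n-1}\to\dots\to\mathbf{v}_0\to\mathbf{s}$. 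This identifies the edges of $P$, and since the resulting $\mathcal{H}$-description is non-redundant (as already noted in the paragraph following \cref{claim:monotone-construction-vertices}), \cref{obs:edgesarecircuits} tells us that the circuits of $P$ are exactly the vectors parallel to its edges.

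Second, I would read off each edge direction from the vertex coordinates. The left edge $\mathbf{s}\mathbf{u}$ and the right edge $\mathbf{v}_n(1,S+\varepsilon)^\top$ are vertical (both endpoint pairs share an $x$-coordinate of $0$ or $1$, respectively), the upper edge $\mathbf{w}(1,S+\varepsilon)^\top$ and the lower edge $\mathbf{s}\mathbf{v}_0$ are horizontal (both endpoint pairs share a $y$-coordinate of $S+\varepsilon$ or $0$), the lower-right chain contributes $n$ edges of direction $(1,a_i)^\top$ by \cref{lemma:replacement-correct-slopes}\ref{item:A-slopes}, and $\mathcal{T}$ contributes the remaining edges, whose slopes lie strictly between $0$ and $\tfrac{1}{2Ck}$ by \cref{lemma:replacement-hard-instance}\ref{item:hard-slopes}. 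The latter fact also immediately yields the trailing sentence of the observation.

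Third, I would identify the $\mathbf{c}$-monotone orientation of each direction. From $\mathbf{c}=(H^{-1})^\top\mathbf{c}_0$ together with the relative positions of $\mathbf{t}$ and $\mathbf{u}$ in $\mathcal{P}$, one obtains $\mathbf{c}^x<0$ and $\mathbf{c}^y>0$, making $(-1,0)^\top$ and $(0,1)^\top$ the monotone orientations of the four axis-parallel edges, and making each non-vertical edge of $\mathcal{T}$ admit a unique monotone orientation, which is exactly the one referenced in item~(iii). The most delicate point, which I expect to be the main obstacle, is showing that $(1,a_i)^\top$ (rather than its negative) is the monotone orientation of each chain edge, i.e.\ that $a_i\mathbf{c}^y+\mathbf{c}^x>0$ for every $i\in[n]$. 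I would derive this from the particular choice of $\beta=\min\{-\mathbf{c}^x/\mathbf{c}^y,1\}$ made in the proof of \cref{lemma:replacement-correct-slopes}, which pins down $-\mathbf{c}^x/\mathbf{c}^y\le\beta\le 1$, combined with the fact that the $a_i$ are strictly positive integers in any \textsc{Exact Subset Sum with Repetition} instance arising in the reduction.
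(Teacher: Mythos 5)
Your first two steps match the paper's proof exactly: appeal to \cref{obs:edgesarecircuits} to identify circuits with edge directions, then enumerate the edges of $P$ via \cref{claim:monotone-construction-vertices} together with the coordinate information of \cref{lemma:replacement-hard-instance}\ref{item:hard-coordinates} and \cref{lemma:replacement-correct-slopes}\ref{item:A-coordinates}. You also correctly identify the genuinely delicate step, namely that $(1,a_i)^\top$ (rather than $(-1,-a_i)^\top$) is the $\mathbf{c}$-increasing orientation, which reduces to showing $|\mathbf{c}^x|<\mathbf{c}^y$.

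However, the argument you give for $|\mathbf{c}^x|<\mathbf{c}^y$ is incorrect. You claim that $\beta=\min\{-\mathbf{c}^x/\mathbf{c}^y,1\}$ ``pins down $-\mathbf{c}^x/\mathbf{c}^y\le\beta\le 1$''; this is backwards. From $\beta=\min\{-\mathbf{c}^x/\mathbf{c}^y,\,1\}$ one only gets $\beta\le -\mathbf{c}^x/\mathbf{c}^y$ and $\beta\le 1$, and if $-\mathbf{c}^x/\mathbf{c}^y>1$ then $\beta=1$ with no contradiction arising. More fundamentally, $\beta$ is a quantity \emph{computed from} $\mathbf{c}$ inside the proof of \cref{lemma:replacement-correct-slopes}, used only to scale down the points of $\mathcal{V}$ so that $\mathbf{c}^\top\mathbf{v}\le 0$; it places no constraint whatsoever on $\mathbf{c}$, and so cannot be the source of the bound $|\mathbf{c}^x|<\mathbf{c}^y$. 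The correct source, as in the paper, is that $\mathbf{t}$ is the unique $\mathbf{c}$-maximal vertex of the concave chain $\mathcal{T}$, all of whose edges have slope in $(0,\tfrac{1}{2Ck})\subset(0,1)$ by \cref{lemma:replacement-hard-instance}\ref{item:hard-slopes}. Consider the edge of $\mathcal{T}$ immediately to the left of $\mathbf{t}$, with slope $s\in(0,1)$: since $\mathbf{t}$ uniquely maximizes $\mathbf{c}^\top\mathbf{x}$ on $\mathcal{T}$, moving away from $\mathbf{t}$ in direction $(-1,-s)^\top$ strictly decreases $\mathbf{c}^\top\mathbf{x}$, so $\mathbf{c}^x+s\mathbf{c}^y>0$. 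Combined with $\mathbf{c}^y>0$ and $s<1$ this gives $-\mathbf{c}^x<s\mathbf{c}^y<\mathbf{c}^y$, i.e.\ $|\mathbf{c}^x|<\mathbf{c}^y$, and then $a_i\ge 1$ yields $a_i\mathbf{c}^y+\mathbf{c}^x\ge\mathbf{c}^y+\mathbf{c}^x>0$ as needed. Please replace the $\beta$-based step with this argument.
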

\begin{proof}
    By \cref{obs:edgesarecircuits} and since we are considering a non-redundant inequality description of $P$, we have that the circuits of $P$ coincide with the vectors parallel to one of its edges. In turn, the edge directions of $P$ are the directions between two consecutive vertices.
    By \cref{claim:monotone-construction-vertices}, together with the coordinates of $\mathbf{u}$, $\mathbf{w}$, $\mathbf{v}_0$, and $\mathbf{v}_n$ specified in \cref{lemma:replacement-hard-instance} \ref{item:hard-coordinates} and \cref{lemma:replacement-correct-slopes} \ref{item:A-coordinates}, respectively, these are $(\pm 1, 0)^\top$, $(0, \pm 1)^\top$, the edge directions between consecutive vertices in $\mathcal{V}=\{\mathbf{v}_0,\ldots,\mathbf{v}_n\}$ (which by \cref{lemma:replacement-correct-slopes} \ref{item:A-slopes} are parallel to $(1,a_i)^\top$ for some $i\in [n]$), and the edge directions of $\mathcal{T}$.

    As $\mathbf{c}^x < 0 < \mathbf{c}^y$, out of the first four only $(-1, 0)^\top$ and $(0,1)^\top$ are $\mathbf{c}$-monotone.
    Next, recall that the slopes of the edges of $\mathcal{T}$ are in $(0, \frac{1}{Ck})$.
    In particular, they are all less than one.
    As such, we must have $|\mathbf{c}^x| < \mathbf{c}^y$.
    This means that $(1, a_i)^\top$ is $\mathbf{c}$-monotone and $(-1,-a_i)^\top$ is not.
\end{proof}

Below we observe that the vertices $\mathbf{v}_i$ are constructed in such a way that they cannot be visited by any $\mathbf{c}$-monotone circuit walk starting in $\mathbf{s}$.
\begin{observation}\label{obs:monotone-construction-lower-part}
A $\mathbf{c}$-monotone circuit walk starting at $\mathbf{s}$ cannot visit any point with a $y$-coordinate of $0$, or any point in $\mathrm{conv}(\mathcal{V})$. In particular, it does not use points on the edge spanned between $\mathbf{v}_{i-1}$ and $\mathbf{v}_i$ for every $i\in [n]$. 
\end{observation}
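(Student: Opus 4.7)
The plan is to argue by monotonicity that any vertex of $P$ lying in the ``lower region'' has $\mathbf{c}$-value at most $\mathbf{c}^\top\mathbf{s}=0$, so a strictly $\mathbf{c}$-monotone walk starting at $\mathbf{s}$ cannot revisit this region after the first step. The key observation is that $\mathbf{c}^\top\mathbf{s}=0$ since $\mathbf{s}=(0,0)^\top$, and by $\mathbf{c}$-monotonicity, every subsequent point $\mathbf{x}_i$ (for $i\ge 1$) of the walk must satisfy $\mathbf{c}^\top\mathbf{x}_i>0$.

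First, I would verify the $y=0$ claim: every point $\mathbf{p}\in P$ with $\mathbf{p}^y=0$ satisfies $\mathbf{p}^x\ge 0$, since by \cref{claim:monotone-construction-vertices} together with \cref{lemma:replacement-hard-instance} \ref{item:hard-coordinates} and \cref{lemma:replacement-correct-slopes} \ref{item:A-coordinates}, all vertices of $P$ have non-negative $x$-coordinate. Combined with $\mathbf{c}^x<0$ and $\mathbf{c}^y>0$, this yields $\mathbf{c}^\top\mathbf{p}=\mathbf{c}^x\mathbf{p}^x\le 0=\mathbf{c}^\top\mathbf{s}$. Hence such a point cannot appear as $\mathbf{x}_i$ for any $i\ge 1$ in a $\mathbf{c}$-monotone walk starting at $\mathbf{s}$.

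Next, for the claim about $\mathrm{conv}(\mathcal{V})$, I would use \cref{lemma:replacement-correct-slopes} \ref{item:A-cost}, which guarantees $\mathbf{c}^\top\mathbf{v}\le 0$ for every $\mathbf{v}\in \mathcal{V}$. Since $\mathbf{p}\mapsto \mathbf{c}^\top\mathbf{p}$ is a linear functional, its maximum over $\mathrm{conv}(\mathcal{V})$ is attained at a vertex of $\mathcal{V}$, so $\mathbf{c}^\top\mathbf{p}\le 0=\mathbf{c}^\top\mathbf{s}$ for every $\mathbf{p}\in\mathrm{conv}(\mathcal{V})$. Again by strict $\mathbf{c}$-monotonicity, such a point cannot appear after the starting vertex.

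Finally, the last sentence of the observation follows immediately: for each $i\in[n]$, the line segment from $\mathbf{v}_{i-1}$ to $\mathbf{v}_i$ is contained in $\mathrm{conv}(\mathcal{V})$ by definition, so none of its points can be visited. There is no real obstacle here; the only subtle point is making sure the argument excludes \emph{all} points of $\mathrm{conv}(\mathcal{V})$ rather than only the vertices $\mathbf{v}_i$ themselves, which is handled by the standard observation that linear functionals attain their maximum over a convex hull at an extreme point.
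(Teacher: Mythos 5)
Your proof is correct and follows essentially the same approach as the paper: compute $\mathbf{c}^\top\mathbf{s}=0$, then observe that points with $y$-coordinate $0$ (using $\mathbf{p}^x\ge 0$ and $\mathbf{c}^x<0$) and points in $\mathrm{conv}(\mathcal{V})$ (using \cref{lemma:replacement-correct-slopes}~\ref{item:A-cost} and convexity) have $\mathbf{c}$-value at most $0$, hence cannot appear after $\mathbf{s}$ in a strictly $\mathbf{c}$-monotone walk. The paper's argument is identical in structure and substance.
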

\begin{proof}
 Recall that $\mathbf{c}^x < 0$.
 Furthermore, we have $\mathbf{p}^x\geq 0$ for any $\mathbf{p}\in P$.
 Thus, we have $\mathbf{c}^\top \mathbf{p} < 0$ for any $\mathbf{p}\in P$ with $\mathbf{p}^y = 0$ and $\mathbf{p}\neq \mathbf{s}$.
 Hence, a $\mathbf{c}$-monotone walk cannot visit $\mathbf{p}$ after starting in $\mathbf{s}$.

 Next we consider the vertices of $\mathcal{V}$.
 By \cref{lemma:replacement-correct-slopes} \ref{item:A-cost} we have $\mathbf{c}^\top \mathbf{v} \leq 0 = \mathbf{c}^\top \mathbf{s}$ for all $\mathbf{v}\in \mathcal{V}$ and thus by convexity also $\mathbf{c}^\top \mathbf{x}\le \mathbf{c}^\top \mathbf{s}$ for all $\mathbf{x}\in \mathrm{conv}(\mathcal{V})$.
 Hence, a $\mathbf{c}$-monotone circuit walk starting at $\mathbf{s}$ cannot visit any of the points in $\mathrm{conv}(\mathcal{V})$. 
\end{proof}

Next, we observe how the distance properties of $P_{Ck}$ carry over to $P$.

\begin{observation} \label{obs:monotone_construction-close-to-t}
    We have $d^P_\mathbf{c}(\mathbf{u}) = d^P_\mathbf{c}(\mathbf{w}) = Ck$.
\end{observation}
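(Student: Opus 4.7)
My plan is to establish $d^P_\mathbf{c}(\mathbf{u}) = Ck$ via matching upper and lower bounds; the case of $\mathbf{w}$ follows by symmetry. Both directions proceed by transferring between $P$ and $P_{Ck}$ using the affine map $T$, exploiting the circuit-distance lower bound established for $P_{Ck}$ in \cref{thm:linear-circuit-distance} and the preservation of monotone circuit walks under affine transformations from \cref{obs:transform}.

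For the upper bound $d^P_\mathbf{c}(\mathbf{u}) \leq Ck$, I would take a $\mathbf{c}_0$-monotone circuit walk of length exactly $Ck$ from $\mathbf{u}_{Ck}$ to $\mathbf{t}_{Ck}$ in $P_{Ck}$, which exists by \cref{thm:linear-circuit-distance}. Inspection of the recursive construction of $P_{Ck}$ shows that this walk is actually an edge walk through consecutive vertices. Pushing it through $T$ via \cref{obs:transform} yields a $\mathbf{c}$-monotone edge walk of length $Ck$ in $\mathcal{P}$ from $\mathbf{u}$ to $\mathbf{t}$, traversing only edges of $\mathcal{T}$; since each edge of $\mathcal{T}$ is also an edge of $P$, the same vertex sequence is a valid $\mathbf{c}$-monotone circuit walk in $P$. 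To close this direction, I would verify that $\mathbf{t}$ is the unique $\mathbf{c}$-maximal vertex of $P$ by combining \cref{lemma:replacement-hard-instance}\ref{item:hard-vertex} (which gives $\mathbf{c}$-maximality within $\mathcal{P}$, using $\mathbf{c} = (H^{-1})^\top\mathbf{c}_0$) with the sign-based comparisons already used in \cref{claim:monotone-construction-vertices} to dominate $\mathbf{s}$, $(1,S+\varepsilon)$, and the points in $\mathcal{V}$.

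For the lower bound, the central claim I would prove is that every $\mathbf{c}$-monotone circuit walk $W$ in $P$ from $\mathbf{u}$ to $\mathbf{t}$ is in fact entirely contained in $\mathcal{P}$. Once this is in hand, $T^{-1}(W)$ becomes a $\mathbf{c}_0$-monotone circuit walk in $P_{Ck}$ from $\mathbf{u}_{Ck}$ to $\mathbf{t}_{Ck}$ of the same length, and \cref{thm:linear-circuit-distance} forces this length to be at least $Ck$. To confine $W$ to $\mathcal{P}$, I would argue by induction on the step index that every visited point lies on the chain $\mathcal{T}$. At such a point I would enumerate the circuit directions of $P$ using \cref{obs:monotone-construction-circuit-directions}: by \cref{lemma:replacement-hard-instance}\ref{item:hard-slopes}, all edges of $\mathcal{T}$ have slope in $(0, 1/(2Ck))$, so the tangent cone of $P$ at any point of $\mathcal{T}$ is very flat; this places the type 1 circuits $(-1,0)$ and $(0,1)$ as well as every type 2 circuit $(1,a_i)$ with $a_i\geq 1$ outside the tangent cone, ruling them out as infeasible.

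The hard part of the plan will be discarding the remaining type 3 circuits parallel to \emph{non-adjacent} edges of $\mathcal{T}$. Such a direction could, in principle, cut across the interior of $\mathcal{P}$, cross the chord $\mathbf{u}\mathbf{w}$ (which is only an interior segment of $P$, not part of its boundary), and continue into the bulk of $P$ until hitting a distant edge far from $\mathbf{t}$. Excluding this possibility will require a careful geometric analysis exploiting the concavity of the chain $\mathcal{T}$ (whose slopes decrease monotonically from $\mathbf{u}$ to $\mathbf{w}$), together with the tight bounds on the spatial extent of $\mathcal{P}$ given by \cref{lemma:replacement-hard-instance}\ref{item:hard-coordinates}, to conclude that the only feasible $\mathbf{c}$-monotone type 3 move at a vertex of $\mathcal{T}$ is the one along the adjacent $\mathbf{t}$-ward edge, whose endpoint is again a vertex of $\mathcal{T}$. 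With this in place, the walk proceeds edge by edge along $\mathcal{T}$, matching the length bound transferred from $P_{Ck}$.
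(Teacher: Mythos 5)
Your decomposition into an upper and lower bound via the transfer maps $T$ and $T^{-1}$ is the right framework, and the upper bound is fine. But the lower bound has a genuine gap: you correctly identify ``confinement of $W$ to $\mathcal{P}$'' as the crux, but you do not have a working argument for it, and the analysis you sketch would not close it. You propose to show by a case analysis of tangent cones that at a vertex of $\mathcal{T}$ the only feasible $\mathbf{c}$-monotone type~3 move is the one along the adjacent $\mathbf{t}$-ward edge. That conclusion is false: a type~3 circuit direction parallel to a non-adjacent edge of $\mathcal{T}$ can be both feasible and $\mathbf{c}$-monotone, and a maximal circuit move in such a direction can cut through the interior of $\mathcal{P}$ and land at an interior point of a distant edge of $\mathcal{T}$. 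Such moves \emph{must} be allowed for the length-preserving correspondence with $P_{Ck}$ to go through (they are the images of circuit moves in $P_{Ck}$ along steep non-adjacent edge directions), so an argument that forbids them cannot be made to work. More to the point, there is no need for a delicate geometric analysis of the cone at each vertex.

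The key observation the paper uses, and which you are missing, is that the chord $\mathbf{u}\mathbf{w}$ is a level set of $\mathbf{c}$. Indeed, $\mathbf{u}\mathbf{w}=T(\mathbf{u}_{Ck}\mathbf{w}_{Ck})$, and the segment $\mathbf{u}_{Ck}\mathbf{w}_{Ck}$ lies on the line $\{x=0\}=\{\mathbf{c}_0^\top\mathbf{x}=0\}$; by \cref{obs:transform} the affine map $T$ carries level sets of $\mathbf{c}_0$ to level sets of $\mathbf{c}=(H^{-1})^\top\mathbf{c}_0$, so $\mathbf{c}^\top\mathbf{q}=\mathbf{c}^\top\mathbf{u}=\mathbf{c}^\top\mathbf{w}$ for every $\mathbf{q}$ on the chord. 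Since $\mathbf{u}_{Ck}\mathbf{w}_{Ck}$ is the $\mathbf{c}_0$-minimal face of $P_{Ck}$, the chord $\mathbf{u}\mathbf{w}$ is the $\mathbf{c}$-minimal face of $\mathcal{P}$, so every point of $\mathcal{P}$ has $\mathbf{c}$-value at least $\mathbf{c}^\top\mathbf{u}$. Conversely, the sign computations already carried out in \cref{claim:monotone-construction-vertices} show that $\mathbf{s}$, $(1,S+\varepsilon)^\top$ and all of $\mathcal{V}$ have $\mathbf{c}$-value strictly less than $\mathbf{c}^\top\mathbf{u}$, so in fact $P\cap\{\mathbf{c}^\top\mathbf{x}\ge\mathbf{c}^\top\mathbf{u}\}=\mathcal{P}$. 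Now the confinement is immediate: the type~1 and type~2 circuit directions are all strictly steeper than the edges of $\mathcal{T}$ (slopes in $(0,1/(2Ck))$), so they are infeasible from any $\mathbf{p}\in\mathcal{T}$, which means only type~3 moves are available; and since every $\mathbf{c}$-monotone step strictly increases $\mathbf{c}$, the walk never leaves the half-plane $\{\mathbf{c}^\top\mathbf{x}\ge\mathbf{c}^\top\mathbf{u}\}$, hence stays in $\mathcal{P}$, and being on $\partial P$ it stays in $\mathcal{T}$. In particular maximal circuit steps inside $P$ and inside $\mathcal{P}$ agree for these moves, so $T^{-1}$ maps $W$ to a $\mathbf{c}_0$-monotone circuit walk of the same length in $P_{Ck}$, and \cref{thm:linear-circuit-distance}\ref{item:linear-polygon-distance} gives the lower bound $Ck$.

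One small additional remark: the ``by symmetry'' for $\mathbf{w}$ is a bit loose, since $\mathbf{u}$ and $\mathbf{w}$ sit on different parts of $\partial P$ and there is no literal symmetry of $P$ exchanging them. The argument for $\mathbf{w}$ is not a symmetry but the same reasoning: the confinement to $\mathcal{T}$ holds for any starting point on $\mathcal{T}$, and then one uses the equality $d^{P_{Ck}}_{\mathbf{c}_0}(\mathbf{w}_{Ck})=Ck$ from \cref{thm:linear-circuit-distance}.
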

\begin{proof}
	For the moment consider any point $\mathbf{p}\in \mathcal{T}$.
    Then making a circuit move in $P$ from $\mathbf{p}$ following a $\mathbf{c}$-monotone circuit direction of type $1$ or $2$ is not feasible, as these directions are steeper than the edges of $\mathcal{T}$.
    Thus, from such points we can only take circuit directions of type $3$.
    
    Furthermore, by construction of $P_{Ck}$, we have that $\mathbf{c}^\top \mathbf{q} = \mathbf{c}^\top \mathbf{u}= \mathbf{c}^\top \mathbf{w}$ for any $\mathbf{q}$ on the line between $\mathbf{u}$ and $\mathbf{w}$.
    This is due to the choice of $\mathbf{c}_0=(1,0)^\top$, $\mathbf{u}_{Ck} = (0,1)^\top$, and $\mathbf{v}_{Ck} = (0,-1)^\top$ and as $T$ is an affine transformation.
    Hence, a $c$-monotone circuit walk starting in $\mathbf{p}$ cannot visit a point outside of $\mathcal{T}$.
    In particular, $T^{-1}$ and $T$ give length-preserving bijections between $\mathbf{c}$-monotone circuit walks starting in $\mathbf{p}$ and $\mathbf{c}_0$-monotone circuit walks starting in $T^{-1}(\mathbf{p})$ in $P_{Ck}$.
	Applying this argument to $\mathbf{u}_{Ck}$ and $\mathbf{w}_{Ck}$ and using \cref{thm:linear-circuit-distance} \ref{item:linear-polygon-distance} we deduce $d^P_\mathbf{c}(\mathbf{u}) = d^P_\mathbf{c}(\mathbf{w}) = Ck$.
\end{proof}

As a final ingredient, we show that points on the upper edge of $P$ that have $\mathbf{c}$-monotone circuit distance to $\mathbf{t}$ less than $Ck$ must be close to $\mathbf{w}$.
This will allow us to show that any short $\mathbf{c}$-monotone circuit walk to $\mathbf{t}$ gives rise to a solution to the subset sum instance.

\begin{lemma} \label{lem:monotone_construction-reaching-P}
	For every point $\mathbf{v}$ with $\mathbf{v}^y = S + \varepsilon$ and $\mathbf{v}^x \geq \frac{1}{2a_n}$ we have $d^P_\mathbf{c}(\mathbf{v}) \geq Ck$.
\end{lemma}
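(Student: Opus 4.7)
The plan is to enumerate the possible first $\mathbf{c}$-monotone circuit moves from $\mathbf{v}$ and show that each extends to a walk of length at least $Ck$ towards $\mathbf{t}$.

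First I would observe that $\mathbf{v}^y = S + \varepsilon$ equals the maximum $y$-coordinate attained on $P$. This is read off by inspecting the $y$-coordinates of every vertex of $P$ listed in \cref{claim:monotone-construction-vertices}: only $\mathbf{w}$ and $(1, S+\varepsilon)^\top$ reach height $S+\varepsilon$, while the vertices in $\mathcal{V}$ have $y < 1$ by \cref{lemma:replacement-correct-slopes}~\ref{item:A-coordinates}, and every other vertex of $\mathcal{P}$ has strictly smaller $y$ by \cref{lemma:replacement-hard-instance}~\ref{item:hard-coordinates}. Consequently any circuit direction with strictly positive $y$-component is infeasible at $\mathbf{v}$; applying the classification in \cref{obs:monotone-construction-circuit-directions}, this leaves only $(-1,0)^\top$ together with possibly some type 3 directions in their $(-1, -s)^\top$ orientation.

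For the move $(-1, 0)^\top$, the maximal feasible step slides $\mathbf{v}$ leftward along the upper edge and terminates exactly at $\mathbf{w}$: indeed, $\mathbf{w}$ is the unique leftmost point of $P$ at height $S+\varepsilon$, and the hypothesis $\mathbf{v}^x \geq 1/(2a_n)$ combined with the bound $\mathbf{w}^x < (s_1/a_n)^{\lceil Ck/2\rceil+1}$ from \cref{lemma:replacement-hard-instance}~\ref{item:hard-coordinates} guarantees $\mathbf{v}^x > \mathbf{w}^x$. By \cref{obs:monotone_construction-close-to-t} any continuation from $\mathbf{w}$ has length at least $Ck$, so walks starting with this move have length at least $1 + Ck$.

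For any remaining $(-1, -s)^\top$ first move the endpoint $\mathbf{p}$ lies on the boundary of $P$. Since $s v^x$ is on the order of $1/(Ck\, a_n)$ while both $\varepsilon$ and the entire extent of $\mathcal{P}$ are bounded by $(s_1/a_n)^{\lceil Ck/2\rceil+1}$ from \cref{lemma:replacement-hard-instance}~\ref{item:hard-coordinates} (exponentially smaller in $Ck$), the ray from $\mathbf{v}$ in direction $(-1, -s)^\top$ misses $\mathcal{T}$ entirely and lands on the left edge of $P$ at some $\mathbf{p} = (0, y_\mathbf{p})$ with $y_\mathbf{p} < \mathbf{u}^y$. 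I would then argue that any $\mathbf{c}$-monotone walk from $\mathbf{p}$ to $\mathbf{t}$ requires at least $Ck - 1$ further steps: such a walk must either first climb via $(0,1)^\top$ to $\mathbf{u}$, in which case \cref{obs:monotone_construction-close-to-t} supplies the remaining $Ck$-step bound from $\mathbf{u}$, or shoot rightward and eventually re-enter $\mathcal{T}$ through an edge whose endpoints obey the recursive lower bound from \cref{thm:linear-circuit-distance}~\ref{item:linear-polygon-distance}.

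The main technical obstacle lies in this residual type 3 case: verifying geometrically that the $(-1, -s)^\top$ ray from $\mathbf{v}$ misses $\mathcal{T}$ (which amounts to comparing $s v^x$ with $\varepsilon$ and the diameter of $\mathcal{P}$), and then ruling out shortcuts from a left-edge landing point $\mathbf{p}$ using the slope bounds from \cref{lemma:replacement-hard-instance}~\ref{item:hard-slopes}. Once both reductions are carried through, combining the two cases yields $d^P_\mathbf{c}(\mathbf{v}) \geq Ck$, as claimed.
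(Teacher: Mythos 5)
Your proposal is on the right track but contains a genuine gap exactly where you acknowledge it. Your analysis of the two possible first moves from $\mathbf{v}$ matches the paper's: the $(-1,0)^\top$ move lands at $\mathbf{w}$, and $d^P_\mathbf{c}(\mathbf{w})=Ck$ by \cref{obs:monotone_construction-close-to-t}, so that case is fine; and a type 3 move in direction $(-1,-s)^\top$ overshoots $\mathcal{P}$ and lands on the left edge, which also matches the paper's calculation. The problem is the next step: you say a walk from the landing point $\mathbf{p}=(0,y_\mathbf{p})$ needs $Ck-1$ further steps because it must ``either first climb via $(0,1)^\top$ to $\mathbf{u}$ \ldots\ or shoot rightward and eventually re-enter $\mathcal{T}$,'' and for the second branch you gesture at \cref{thm:linear-circuit-distance}~\ref{item:linear-polygon-distance}. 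But that theorem only controls the circuit distance from the \emph{vertices} $\mathbf{u}_{Ck},\mathbf{w}_{Ck}$ inside $P_{Ck}$; it says nothing directly about a rightward move from $\mathbf{p}$ onto the upper edge of $P$, which is exactly what the walk will do next. You have not bounded how far such alternation can make progress, and ``eventually re-enter $\mathcal{T}$'' does not yield a quantitative lower bound.

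What the paper's proof does that yours is missing is set up an explicit induction with a geometrically decaying potential: it defines $p_i = (s_1/a_n)^{\lceil Ck/2\rceil - i + 1}$ and proves that any point on the upper edge with $x$-coordinate exceeding $p_i$ has monotone circuit distance at least $2i+1$. The heart of the argument is that one type-3 move followed by one type-2/3 move takes a point on the upper edge at $x$-coordinate $\mathbf{v}^x$ to a point on the upper edge at $x$-coordinate at least $(s_1/a_n)\mathbf{v}^x$; since the $x$-coordinate can never drop below $p_0$ (which by \cref{lemma:replacement-hard-instance}~\ref{item:hard-coordinates} bounds the diameter of $\mathcal{P}$), it takes at least $\lceil Ck/2\rceil$ two-step rounds to get close to $\mathcal{T}$, giving the $Ck$ lower bound. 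To repair your proof you need to formalize this potential-function induction and verify the two-step contraction factor $s_1/a_n$; the case analysis you wrote, as stated, only handles the first two moves and then appeals circularly to a statement you haven't established.
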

\begin{proof}

For $i\in \{0,\dots, \left\lceil\frac{Ck}{2}\right\rceil\}$ set
$p_i  =
    \left(\frac{s_1}{a_n}\right)^{\left\lceil\frac{Ck}{2}\right\rceil - i + 1}
$.
We will show by induction on $i$ the following statement:
For every point $\mathbf{v}\in P$ with $\mathbf{v}^y = S+\varepsilon$ and $\mathbf{v}^x > p_i$, we have $d^P_\mathbf{c}(\mathbf{v}) \geq 2i + 1$.

Note that for $i = \left\lceil\frac{Ck}{2}\right\rceil$ we have $p_i=\frac{s_1}{a_n}<\frac{1}{2a_n}$, where we used that $s_1\le \frac{1}{2Ck}\le \frac{1}{2}$ by \cref{lemma:replacement-hard-instance},~$(i)$. Hence, the above inductive statement implies the claim of the lemma.
As an induction start, notice that the statement is true for $i = 0$, as $\mathbf{v}^x > p_0$ by \cref{lemma:replacement-hard-instance} implies $\mathbf{v} \notin T(P_{Ck})=\mathcal{P}$ and so in particular $\mathbf{v}\neq \mathbf{t}$ and thus $d^P_\mathbf{c}(\mathbf{v}) \geq 1$.
Hence, assume we proved the statement for some $i\in \{0, \dots, \left\lceil\frac{Ck}{2}\right\rceil-1\}$, and let us show it for $i+1$.

Consider a point $\mathbf{v}\in P$ with $\mathbf{v}^y = S+\varepsilon$ and $\mathbf{v}^x > p_{i+1}$.
Towards a contradiction, suppose that $d^P_\mathbf{c}(\mathbf{v}) \leq 2(i+1)$.
Let $W$ be a $\mathbf{c}$-monotone circuit walk from $\mathbf{v}$ to $\mathbf{t}$ of length at most $2(i+1)$.

Let $\mathbf{q}_1$ denote the successor of $\mathbf{v}$ on $W$.
We cannot have $\mathbf{q}_1 = \mathbf{w}$ as 
\[
d^P_\mathbf{c}(\mathbf{w}) = Ck \geq 2\left(\left\lceil \frac{Ck}{2}\right\rceil-1\right)+1\ge 2i+1\ge d^P_\mathbf{c}(\mathbf{q}_1).
\]
Consider the circuit directions we identified in \cref{obs:monotone-construction-circuit-directions}.
As $\mathbf{q}_1\neq \mathbf{w}$ and $\mathbf{v}^y=S+\varepsilon$, we observe that the circuit step from $\mathbf{v}$  to $\mathbf{q}_1$ along $W$ follows neither a type $1$ nor a type $2$ circuit direction.
Hence, $W$ must use a type 3 circuit direction in this step.
Let $s_j$ be the slope of this circuit direction.
A line through $\mathbf{v}$ with slope $s_j$ intersects the $x=0$ line at $(0, S + \varepsilon - s_j\mathbf{v}^x)$.
Recall that by \cref{lemma:replacement-hard-instance} for all $\bar{\mathbf{v}}\in\mathcal{P}$ we have 
\[\bar{\mathbf{v}}^y\geq S - \frac{1}{2}\left(\frac{s_1}{a_n}\right)^{\left\lceil\frac{Ck}{2}\right\rceil+1}\geq S + \varepsilon - \left(\frac{s_1}{a_n}\right)^{\left\lceil\frac{Ck}{2}\right\rceil+1} > S+\varepsilon-s_jp_{i+1} > S + \varepsilon - s_j\mathbf{v}^x.\] 
Here we used that $\varepsilon\leq\frac{1}{2}\left(\frac{s_1}{a_n}\right)^{\left\lceil\frac{Ck}{2}\right\rceil+1}$ and $s_jp_{i+1}\ge s_1 \left(\frac{s_1}{a_n}\right)^{\left\lceil\frac{Ck}{2}\right\rceil - i}>\left(\frac{s_1}{a_n}\right)^{\left\lceil\frac{Ck}{2}\right\rceil +1}$ in the second and third inequality, respectively.

In particular, we have $(0, S + \varepsilon - s_j\mathbf{v}^x)\in P$ and hence $\mathbf{q}_1 = (0, S + \varepsilon - s_j\mathbf{v}^x)$.
We observe that $W$ has a length of at least two.
Let $\mathbf{q}_2$ denote the point we reach after the first two steps of $W$, i.e., $\mathbf{q}_2$ is the successor of $\mathbf{q}_1$.
We claim that $\mathbf{q}_2^y = S+\varepsilon$ and $\mathbf{q}_2^x > p_i$.
Once we established this claim, using the induction hypothesis we will then be able to conclude $d^P_\mathbf{c}({\mathbf{q}}_2) \geq 2i + 1$ and hence that the length of $W$ is at least $2i+3$, yielding the desired contradiction and concluding the proof of the induction step. 

\begin{figure}[ht!]
	\begin{subfigure}[t]{0.5\textwidth}
		\begin{center}
			\begin{tikzpicture}
	\newcommand{\s}{3}
	\newcommand{\e}{0.3}
	\newcommand{\x}{2}

	\begin{scope}
		\coordinate (s) at (0,0);
		\coordinate (u) at (\x,\s + \e);

		\coordinate (v0) at (0   , \s - \e);
		\coordinate (t)  at (0.2, \s      );
		\coordinate (v1) at (0.6 , \s + \e);
	\end{scope}

	\begin{scope}[thin, gray!50!white, dashed]
		\node[left, gray, font=\small] at (-0.2,\s+\e) {$S+\varepsilon$};
		\node[above, gray, font=\small] at (0,\s+.5) {$0$};
		\draw[] (-0.2, \s+\e) -- (0.7,\s+\e);
		\draw[] (0,\s+0.5) -- (0, \s-.5);
	\end{scope}

	\begin{scope}
		\coordinate (s1) at (0.8, \s + \e);
		\coordinate (s2) at (0, \s + \e - 1.2);
		\coordinate (s3) at (1.6, \s + \e);
		\coordinate (s4) at (0, \s + \e - 2.4);

		\coordinate (r1) at (0, \s+\e - 0.9);
		\coordinate (r2) at (1.2, \s + \e);
		\coordinate (r3) at (0, \s+\e - 1.8);
	\end{scope}

	\begin{scope}[red]
\end{scope}
	\begin{scope}[blue,very thick]
		\draw (v1) -- (r1);
		\draw (r1) -- (r2);
\end{scope}

	\begin{scope}[thick]
		\draw (v0) -- (t) -- (v1);
		\draw (v0) -- ($(v0) + (0, -1.6)$);
		\draw (v1) -- ($(v1) + (2.0, 0)$);
	\end{scope}

	\begin{scope}[red, very thick]
\end{scope}

	\begin{scope}[blue, thick]
	\draw (v0) -- (r1);
    \draw (v1) -- (r2);
	\end{scope}

	\begin{scope}[every node/.style={font=\small}]
		\node[xshift = -.7cm, yshift = -.2cm] (alabel) at (v0) {$\mathbf{u}$};
		\draw[-stealth,shorten >=1pt] (alabel) -- (v0);

\draw[decorate,decoration={brace,amplitude=5pt,raise=1ex}] (0, \s + \e) --node[midway, yshift=3ex]{$p_0$} (v1);
        \draw[decorate,decoration={brace,amplitude=5pt,raise=4ex}] (0, \s + \e) --node[midway, yshift=6ex]{$p_1$} (r2);
	\end{scope}

\end{tikzpicture}
 		\end{center}
		\caption{We choose $p_1$ in such a way that starting at $(x, S+\varepsilon)$ with $x>p_1$ we cannot reach a point of $\mathcal{T}$ with two $\mathbf{c}$-monotone circuit moves, unless we visit $\mathbf{u}$ or $\mathbf{w}$.}
	\end{subfigure}
	\begin{subfigure}[t]{0.5\textwidth}
	\begin{center}
		\begin{tikzpicture}
	\newcommand{\s}{3}
	\newcommand{\e}{0.3}
	\newcommand{\x}{2}

	\begin{scope}
		\coordinate (s) at (0,0);
		\coordinate (u) at (\x,\s + \e);

		\coordinate (v0) at (0   , \s - \e);
		\coordinate (t)  at (0.2, \s      );
		\coordinate (v1) at (0.6 , \s + \e);
	\end{scope}

	\begin{scope}[thin, gray!50!white, dashed]
		\node[left, gray, font=\small] at (-0.2,\s+\e) {$S+\varepsilon$};
		\node[above, gray, font=\small] at (0,\s+.5) {$0$};
		\draw[] (-0.2, \s+\e) -- (0.7,\s+\e);
		\draw[] (0,\s+0.5) -- (0, \s-.5);
	\end{scope}

	\begin{scope}
		\coordinate (s1) at (0.8, \s + \e);
		\coordinate (s2) at (0, \s + \e - 1.2);
		\coordinate (s3) at (1.6, \s + \e);
		\coordinate (s4) at (0, \s + \e - 2.4);

		\coordinate (r1) at (0, \s+\e - 0.9);
		\coordinate (r2) at (1.2, \s + \e);
		\coordinate (r3) at (0, \s+\e - 1.8);
        \coordinate (r4) at (2.4,\s + \e);

	\end{scope}

	\begin{scope}[blue, very thick]
	\draw (v1) -- (r1);
	\draw (r1) -- (r2);
	\draw (r2) -- (r3);
	\draw (r3) -- (r4);
	\end{scope}

	\begin{scope}[thick]
		\draw (v0) -- (t) -- (v1);
		\draw (v0) -- ($(v0) + (0, -1.6)$);
		\draw (v1) -- ($(v1) + (2.0, 0)$);
	\end{scope}

	\begin{scope}[blue]
	\draw (v1) -- (r4);
	\draw (v0) -- (r3);
	\end{scope}

	\begin{scope}[every node/.style={font=\small}]
		\node[xshift = -.7cm, yshift = -.2cm] (alabel) at (v0) {$\mathbf{u}$};
		\draw[-stealth,shorten >=1pt] (alabel) -- (v0);

\draw[decorate,decoration={brace,amplitude=5pt,raise=1ex}] (0, \s + \e) --node[midway, yshift=3ex]{$p_0$} (v1);
        \draw[decorate,decoration={brace,amplitude=5pt,raise=4ex}] (0, \s + \e) --node[midway, yshift=6ex]{$p_1$} (r2);
        \draw[decorate,decoration={brace,amplitude=5pt,raise=7ex}] (0, \s + \e) --node[midway, yshift=9ex]{$p_2$} (r4);
	\end{scope}
\end{tikzpicture}
 	\end{center}
	\caption{For general $i\in \{1, \dots,\left\lceil\frac{Ck}{2}\right\rceil]\}$ we choose $p_{i+1}$ in such a way that starting from $(p_{i+1}, S+\varepsilon)$ using the circuit of slope $s_1$ followed by the circuit with slope $a_n$ we reach the point $(p_i, S+\varepsilon)$.}
	\end{subfigure}
	\caption{Visualization of the proof of \cref{lem:monotone_construction-reaching-P}.
    We define the distances $p_i$ in such a way that starting at a point $\mathbf{v}$ with $\mathbf{v}^y = S + \varepsilon$ and $\mathbf{v}^x > p_i$ we can in two $\mathbf{c}$-monotone circuit moves only reach $\mathbf{u}$, $\mathbf{w}$, or points on the upper edge with an $x$-coordinate of at least $p_{i-1}$.
    This is done by analyzing the maximal change in $y$-coordinate and $x$-coordinate, respectively, achievable by the first and second move.
	}
	\label{fig:monotone-construction-upper-left-scaling}
	\end{figure}
So, all that is left is to prove that indeed, $\mathbf{q}_2^y=S+\varepsilon$ and $\mathbf{q}_2^x>p_i$.

As before, $d^P_\mathbf{c}(\mathbf{u})=Ck$ implies that we cannot have $\mathbf{q}_2 = \mathbf{u}$.
Hence, $W$ takes as the second circuit direction a circuit of type $2$ or $3$.
In particular, the slope between $\mathbf{q}_1$ and $\mathbf{q}_2$ is at most $a_n$.
Taking a line from $\mathbf{q}_1$ with slope $a_n$ and intersecting it with the $y = S+\varepsilon$ line yields the point $(\frac{s_j}{a_n}\mathbf{v}^x,S+\varepsilon)$.
As we have $\frac{s_j}{a_n}\mathbf{v}^x > \frac{s_1}{a_n}p_{i+1} \ge \left(\frac{s_1}{a_n}\right)^{\left\lceil\frac{Ck}{2}\right\rceil+1}$, we again have $(\frac{s_j}{a_n}\mathbf{v}^x,S+\varepsilon)\in P$.
Hence $\mathbf{q}_2$ lies between $(\frac{s_j}{a_n}\mathbf{v}^x,S+\varepsilon)$ and $\mathbf{v}$.
In particular, $\mathbf{q}_2$ lies on the upper edge and we have $\mathbf{q}_2^x \geq \frac{s_j}{a_n}\mathbf{v}^x > \frac{s_1}{a_n}p_{i+1} = p_i$.
As discussed above, this finishes the proof.
A visualization of this argument can be found in \cref{fig:monotone-construction-upper-left-scaling}.
\end{proof}

We are now all set to conclude the proof of the theorem.
\begin{proof}[Proof of \cref{thm:monotone-circuit-diameter}]
We will reduce from the \textsc{Exact Subset sum with Repetition} problem, as mentioned before.
Hence, let $(a_1, \dots, a_n, S, k)$ be an instance of the \textsc{Exact Subset sum with Repetition} problem. 
Recall that this means that $k\leq n$ and that any $r\in \mathbb{Z}_{\geq 0}^n$ with $\sum_{i=1}^{n} r_i a_i = S$ must satisfy $\sum_{i=1}^{n} r_i = k$.

Let $P$ be the polygon constructed as above and let $\mathbf{c}$ be the corresponding cost vector.
As argued above we can efficiently construct $P$ and $\mathbf{c}$ and their encoding length is polynomial, establishing \cref{item:main-construct} and \cref{item:main-encoding}.
Furthermore, the edges of $P$ can be classified into the vertical and horizontal edges, the edges on $\mathcal{P}$ and the edges between vertices in $\mathcal{V}$.
In particular, we have $4 + 2Ck + n$ edges, establishing \cref{item:main-edges}.

To finish the proof, we claim that $d^P_\mathbf{c}(\mathbf{s})\leq 2k$ if the subset sum instance has a solution and that $d^P_\mathbf{c}(\mathbf{s}) \geq Ck + 1$ if it does not have a solution.
This shows \cref{item:main-short-walk} and \cref{item:main-long-walk}, respectively.

Let us first show that if the subset sum instance has a solution, then there is a circuit walk of length at most $2k$.
Let $r\in\mathbb{Z}^n$ be a feasible solution for the subset sum instance, i.e., $\sum r_i a_i = S$ and $\sum r_i = k$.
Then we can construct a short circuit walk the following way.
Let $b_1, \dots, b_k$ be an arbitrary order of the $a_i$ in which each $a_i$ appears $r_i$ times.
Starting from $\mathbf{s}$ we alternatingly use a circuit in the direction of $(1, b_j)^\top$ and the circuit in the direction $(-1,0)^\top$.
This gives the following succession of points on the boundary of $P$:
\begin{align*}
	\begin{array}{rcrc}
	&\mathbf{s} & \rightarrow & (1, b_1)^\top \\
	\rightarrow & (0, b_1)^\top & \rightarrow & (1, b_1 + b_2)^\top \\
	& \vdots & & \vdots \\
	\rightarrow & (0, \sum_{j=1}^{k - 1} b_j)^\top & \rightarrow & (1, \sum_{j=1}^{k} b_j)^\top\\
	\rightarrow &  \mathbf{t} &&
		\end{array}
	\end{align*}

Here we use that $\sum_{j=1}^{\ell} b_j = \sum_{i=1}^{n} r_i a_i = S$ and that $\mathbf{t}^y = S$.
Note that by \cref{lemma:replacement-hard-instance} \ref{item:hard-coordinates} all the points in the constructed sequence lie below the upper edge of $P$.
Additionally, all but the last two points lie below $\mathbf{u}$.
Finally, by \cref{lemma:replacement-correct-slopes} \ref{item:A-coordinates} we have $\mathbf{v}^y < 1$ for all $\mathbf{v}\in \mathcal{V}$.
In particular, the points described above lie on the left and right edge of the polygon $P$.
Hence, the chosen circuit directions indeed give rise to the points claimed above and $d^P_\mathbf{c}(\mathbf{s}) \leq 2k$.

For the reverse direction we assume that there is a circuit walk $W$ of length at most $Ck$ from $\mathbf{s}$ to the unique $\mathbf{c}$-optimal vertex $\mathbf{t}$ of $P$, and our goal is to show that then the subset sum instance has a solution. This argument will be divided into two cases, depending on whether or not $W$ uses a point on the upper edge of $P$. Before jumping into those, let us make two useful observations about $W$:

First, recall that the circuit directions of type $3$ have a slope of at most $\frac{1}{2Ck}$.
Since $P\subseteq [0,1]\times \mathbb{R}$, this implies that every step in $W$ following a circuit direction of type $3$ changes the $y$-coordinate by at most $\frac{1}{2Ck}$.
Second, note that using \cref{obs:monotone-construction-lower-part} we know that $W$ does not visit any point in $\mathrm{conv}(\mathcal{V})$ and no points of the lower edge other than $\mathbf{s}$. 

We now proceed with the two cases of the main argument.

\smallskip
\textbf{Case~1. $W$ contains no point from the upper edge of the polygon $P$.}
Recall that by \cref{obs:monotone_construction-close-to-t} we have $d^P_\mathbf{c}(\mathbf{u}) = d^P_\mathbf{c}(\mathbf{w}) = Ck$.
In particular, $W$ cannot contain $\mathbf{u}$.

We next claim that $W$ does not have any step following the circuit direction $(0,1)^\top$.
To see this, note that starting from any point on the left edge of $P$, a circuit move in the circuit direction $(0,1)^\top$ leads to $\mathbf{u}$, which is not visited by $W$.
Similarly, starting from any point on the right edge a circuit move in direction $(0,1)^\top$ leads to a point on the upper edge, that is also not visited by $W$ by assumption.
Finally, it is not feasible to perform a circuit move in direction $(0,1)^\top$ starting from points on $\mathcal{T}$. Altogether, it indeed follows that the circuit direction $(0,1)^\top$ cannot be used by $W$, as claimed.

Let $W'$ be the prefix of $W$ that ends at the first point of $W$ with a $y$-coordinate of at least $S - 0.5$ (this is well-defined, since the last point $\mathbf{t}$ of $W$ has $y$-coordinate equal to $S$).
Now, let $r_i$ for $i\in [n]$ denote the number of times $W'$ uses the circuit direction $(1,a_i)^\top$.

Note that by \cref{lemma:replacement-hard-instance} \ref{item:hard-coordinates} we have $\mathbf{p}^y > S-0.5$ for every point $\mathbf{p}\in \mathcal{P}$.
In particular, performing a circuit move from a point on the left edge below $(0, S-0.5)^\top$ using the circuit direction $(1,a_i)^\top$ changes the $y$-coordinate by precisely $a_i$, or reaches a point on the upper edge.
Thus, the total change in $y$-coordinate in $W$ that stems from circuit directions of type $2$ is $\sum r_i a_i$.
We claim that $\sum r_i a_i = S$.
To show this, observe first that $\sum r_ia_i$ is an integer.
Furthermore, as noted above all steps of $W$ using a circuit direction of type $3$ change the $y$-coordinate by at most $\frac{1}{2Ck}$, so the total change in $y$-coordinate due to such steps is at most $\frac{1}{2}$ in absolute value. Furthermore, as $W$ does not use the circuit direction $(0,1)^\top$, the total change of $y$-coordinate stemming from steps using a circuit-direction of type $1$ equals $0$.

Hence, and as $W'$ reaches a $y$ coordinate of at least $S-0.5$ but not on the upper edge, we must have $\sum r_ia_i \in [S-0.5, S+\varepsilon+0.5]$.
Now, $S$ is the only integer in this range, and so $\sum r_ia_i = S$.
By the assumption on the \textsc{Exact Subset Sum with Repetition} instance, this implies $\sum_{i=1}^{n} r_i = k$ and thus the instance has a solution, as desired. This concludes the proof in the first case.

\smallskip
\textbf{Case~2. $W$ does contain some point from the upper edge of $P$.}
Let $\mathbf{p}$ be the first vertex of $W$ on the upper edge, i.e., $\mathbf{p}^y = S + \epsilon$.
Note that $d^P_\mathbf{c}(\mathbf{p})<Ck$, as witnessed by the suffix of $W$ starting at $\mathbf{p}$.
Thus, applying \cref{lem:monotone_construction-reaching-P}, we have $\mathbf{p}^x \leq \frac{1}{2a_n}$.

Let $\mathbf{q}$ be the predecessor of $\mathbf{p}$ on $W$.
Note that the circuit direction from $\mathbf{q}$ to $\mathbf{p}$ cannot be parallel to $(0,1)^\top$, as otherwise $\mathbf{q}$ would have to be contained in the lower edge of $P$ or lie on the line segment between $\mathbf{v}_{i-1}$ and $\mathbf{v}_i$ for some $i\in [n]$, which is ruled out by \cref{obs:monotone-construction-lower-part}.
The circuit direction from $\mathbf{q}$ to $\mathbf{p}$ also cannot be parallel to $(-1,0)^\top$, since in this case $\mathbf{q}$ would also be contained in the upper edge of $P$, contradicting the definition of $\mathbf{p}$.
Hence, the circuit direction taken from $\mathbf{q}$ to $\mathbf{p}$ is of type $2$ or $3$.
In particular its slope is at most $a_n$.
Therefore using this direction increases the $x$-coordinate by at least $\frac{(S+\varepsilon - \mathbf{q}^y)}{a_n}\ge \frac{(S - \mathbf{q}^y)}{a_n}$. 

As $\mathbf{p}^x \leq \frac{1}{2a_n}$, it follows that $S - \mathbf{q}^y \leq 0.5$ and thus $\mathbf{q}^y\ge S-0.5$.
Let $W''$ be defined as the prefix of $W$ that ends at $\mathbf{q}$.
Then $W''$ reaches the height $\mathbf{q}^y\in [S-0.5, S+0.5]$ and contains no point on the upper edge.
Using the same argument as in the first case, we can construct a feasible solution to the subset sum problem.

Thus, $d^P_\mathbf{c}(\mathbf{s})\leq Ck$ implies that the subset sum instance has a solution, finishing the proof.
\end{proof}

\section{Concluding remarks}
In this work, we focused on monotone circuit walks as these are most directly relevant to circuit augmentation schemes.
However, the non-monotone variant is natural too and states as follows.

\begin{mdframed}[innerleftmargin=0.5em, innertopmargin=0.5em, innerrightmargin=0.5em, innerbottommargin=0.5em, userdefinedwidth=0.95\linewidth, align=center]
	{\textsc{Circuit Distance}}
	\sloppy

	\noindent
	\textbf{Input:} A polytope $P = \{\mathbf{x}\in \mathbb{R}^n\colon Ax\leq \mathbf{b}\}$ defined by a matrix $A\in \mathbb{Q}^{m\times n}$ and a vector $\mathbf{b}\in \mathbb{Q}^m$, two vertices $\mathbf{s}$ and $\mathbf{t}$ of $P$, and $k\in \mathbb{Z}_{\geq 0}$.

	\noindent
	\textbf{Decision:} Is there a circuit walk from $\mathbf{s}$ to $\mathbf{t}$ of length at most $k$?
\end{mdframed}

Our proof techniques seem likely to extend to the undirected setting but require some technical innovation.
On that basis, we conjecture the following:

\begin{conjecture}
\textsc{Circuit Distance} is \NP-hard for polygons.  
\end{conjecture}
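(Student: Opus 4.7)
The plan is to adapt the reduction from \textsc{Exact Subset Sum with Repetition} used in the proof of \cref{thm:monotone-circuit-diameter} to the non-monotone setting. We would reuse the same polygon $P$ with the source vertex $\mathbf{s}=(0,0)^\top$ and take as the target vertex the image $\mathbf{t}$ of the $\mathbf{c}_0$-maximal vertex $\mathbf{t}_{Ck}$ of $P_{Ck}$ under the affine transformation $T$. The upper bound direction would then be immediate: the explicit alternating walk of length $2k$ constructed in the proof of \cref{thm:monotone-circuit-diameter} when the subset sum instance is feasible is itself a valid (non-monotone) circuit walk in $P$, yielding circuit distance at most $2k$ from $\mathbf{s}$ to $\mathbf{t}$.

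The essential work would be the lower bound. Fixing a suitable constant $C$ (say $C=3$), we would want to show that if the subset sum instance is infeasible, then every circuit walk from $\mathbf{s}$ to $\mathbf{t}$ has length greater than $2k$. Compared to the monotone case, the new freedoms are (i) using circuit directions in either orientation, (ii) visiting the lower-right chain $\mathcal{V}$ without any cost penalty, and (iii) traversing the vertical and horizontal boundary edges in both directions. The first step is to adapt \cref{lem:monotone_construction-reaching-P} to the non-monotone setting, proving that $\mathbf{t}$ can only be approached through $\mathcal{T}$-edges, and that reaching a point of $\mathcal{T}$ close to $\mathbf{t}$ requires either visiting $\mathbf{u}$ or $\mathbf{w}$ (each of which still costs $\Omega(Ck)$ further steps by a non-monotone analog of \cref{thm:linear-circuit-distance}) or first bringing the $y$-coordinate close to $S$ via $a_i$-slope moves. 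Once this geometric bottleneck is in place, one would analyze the prefix of the walk up to its first visit to a point with $y$-coordinate in a narrow window around $S$, and show that the net signed $y$-contribution from $a_i$-slope steps in this prefix must equal exactly $S$, producing via the promise of \textsc{Exact Subset Sum with Repetition} a contradiction with infeasibility.

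The main obstacle we anticipate lies in ruling out shortcuts that combine backward moves with vertical or horizontal edge traversals, in particular walks that exploit the right edge of $P$ to "teleport" the $y$-coordinate near $S+\varepsilon$ in a single step and then descend into $\mathcal{T}$ close to $\mathbf{t}$. To neutralize this, we plan to strengthen the construction by replacing the upper-right corner of the rectangular template with a second, reflected affine copy of $P_{Ck}$, so that any walk attempting to route through the right or upper edges is forced to pay an additional $\Omega(Ck)$ steps. With such a double-barrier polygon, the remaining non-monotone shortcut topologies should fall into a small, explicitly enumerable set of cases, each controllable by combining the slope bounds from \cref{lemma:replacement-hard-instance} with the circuit-distance guarantees of \cref{thm:linear-circuit-distance}. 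Verifying that the enhanced polygon retains polynomial encoding length, still admits the $2k$-step walk in the feasible case, and still forces a $(2k+1)$-step lower bound in the infeasible case, together with the detailed non-monotone reaching lemma, would constitute the main technical content of the proof and is what we identify as the principal difficulty.
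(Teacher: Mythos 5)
The statement you are attempting is left as an open \emph{conjecture} in the paper; the authors explicitly say only that their ``proof techniques seem likely to extend to the undirected setting but require some technical innovation'' and provide no proof. Your proposal therefore cannot be checked against a reference argument, and on its own terms it is a roadmap rather than a proof: you yourself flag ``the principal difficulty'' (ruling out shortcuts that combine backward moves with horizontal/vertical edge traversals, and verifying the proposed double-barrier polygon) as unresolved, so the proposal contains an explicit, acknowledged gap.

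Beyond that self-acknowledged gap, let me point to two concrete places where the plan as written would need genuinely new lemmas and could not simply cite the paper. First, you invoke ``a non-monotone analog of \cref{thm:linear-circuit-distance},'' but no such statement is established in the paper and it does not follow for free. The inductive lower bound in \cref{thm:linear-circuit-distance} uses $\mathbf{c}_0$-monotonicity in an essential way: it is what rules out revisiting the left edge $\{\mathbf{u}_{\ell+1},\mathbf{w}_{\ell+1}\}$ and what lets the argument restrict the suffix $W'$ to $T_\ell(P_\ell)$. In the undirected setting a walk could bounce between the left edge and interior points, and you would need a fresh argument (e.g.\ a potential function in the $x$-coordinate that is robust to backward steps) to show the non-monotone circuit distance from $\mathbf{u}_\ell$ or $\mathbf{w}_\ell$ to $\mathbf{t}_\ell$ is still $\Omega(\ell)$. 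Second, your proposed reflected copy of $P_{Ck}$ in the upper-right corner introduces a large new family of low-slope circuit directions; without monotonicity these become usable anywhere in $P$, potentially creating precisely the kind of cheap $y$-teleportation the gadget was meant to block. One would need to re-derive analogs of \cref{obs:monotone-construction-circuit-directions} and \cref{lem:monotone_construction-reaching-P} for the enlarged circuit set and check that the new directions are too shallow to help (which is plausible, since their slopes are below $\frac{1}{2Ck}$, but is not automatic), and also re-verify that \cref{claim:monotone-construction-vertices} still describes the vertex set of the modified polygon. In short: the high-level strategy mirrors exactly the intuition the authors themselves express, but the hard part — a non-monotone distance lower bound for $P_{Ck}$ together with a complete shortcut case analysis — is identified and deferred, not carried out, so this does not yet constitute a proof.
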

\appendix

\section{Missing proofs}\label[appendix]{sec:missing-proofs}
In this section we supply the proofs from previous sections that were left out so as to ease the readability of those parts of the paper.
For context, we repeat the statements.

\begin{repremark}{obs:bruteforce}
For every constant $K\in \mathbb{N}$, there exists a polynomial algorithm that, given as input a polygon $P$ defined by $m$ inequalities, a starting vertex $\mathbf{s}$ of $P$, and a direction $\mathbf{c}\in \mathbb{Q}^2$, computes a $\mathbf{c}$-monotone circuit walk from $\mathbf{s}$ to a $\mathbf{c}$-optimal vertex whose length is at most $\frac{m}{K}$ times the length of a shortest such walk. 
\end{repremark}
\begin{proof}
Let $K\in \mathbb{N}$ be any given constant, and suppose we are given as input a polygon $P$ defined by $m$ inequalities, a starting vertex $\mathbf{s}$ of $P$ and a direction $\mathbf{c}\in \mathbb{Q}^2$ to optimize in.
Then we can compute representatives of all the (at most $m$) equivalence classes of circuits up to scalar multiplication.
Using those, we can then explicitly compute all the (at most $(2m)^K$ possible) circuit walks of length at most $K$ in $P$ starting at $\mathbf{s}$ in time $m^{O(K)}$.
Finally, for each of these circuit walks we can check if they are $\mathbf{c}$-monotone and end in a $\mathbf{c}$-maximal vertex in polynomial time.
Hence, the following algorithm forms a polynomial-time $\frac{m}{K}$-approximation algorithm for the problem of finding a shortest circuit walk from $\mathbf{s}$ to a $\mathbf{c}$-optimal vertex of $P$: If the above procedure finds a monotone circuit-walk of length at most $K$ to a $\mathbf{c}$-optimal vertex, then output the shortest among all such walks.
This then is clearly the optimal solution, with a multiplicative gap of $1$.
Otherwise, the algorithm outputs a monotone edge-walk from $\mathbf{s}$ to a $\mathbf{c}$-optimal vertex of $P$.
This walk clearly has length at most $m$ and is thus no longer than $\frac{m}{K}$ times the length of a shortest $\mathbf{c}$-monotone circuit walk to an optimum, since the latter is bigger than $K$. 
\end{proof}

\begin{reptheorem}{thm:anyfixeddim}
Consider a polygon $P\in \mathbb{R}^2$, a cost vector $\mathbf{c}\in \mathbb{Q}^2$, and a vertex $\mathbf{s}$ of $P$.
For every $d\in \mathbb{Z}_{\geq 2}$ one can efficiently determine a $d$-dimensional polytope $P_d\in \mathbb{R}^d$, a cost vector $\mathbf{c}_d\in \mathbb{Q}$, and a vertex $\mathbf{s}_d$ of $P_d$ such that the following holds:
The length of a shortest $\mathbf{c}$-monotone circuit walk from $\mathbf{s}$ to a $\mathbf{c}$-maximal point of $P$ agrees with the length of a shortest $\mathbf{c}_d$-monotone circuit walk from $\mathbf{s}_d$ to a $\mathbf{c}_d$-maximal point of $P_d$. 
Furthermore, if $P$ has $m$ edges, then one can choose $P_d$ to have $m+d-2$ facets.
\end{reptheorem}
\begin{proof}
Consider the $(d-2)$-dimensional simplex $\Delta_{d-2} = \mathrm{conv}(0,e_{1},\dots,e_{d-2})$, where $e_i$ denotes the $i$-th standard vector.
Then by standard facts about polyhedral products as one may find in \cite{zieglerbook}, the product $P_d = P \times \mathrm{conv}(0,e_{1},\dots,e_{d-2})$ is a $d$-dimensional simple polytope with $m + d-2$ facets.
Furthermore, one can construct this polytope efficiently:
\[P_d = \left\{(\mathbf{x}, \mathbf{y}): \mathbf{x} \in P, \sum_{i=1}^{d-2} \mathbf{y}_{i} \leq 1, \mathbf{y}_{i} \geq 0 \text{ for all } i \in [d-2]\right\}.\]
As noted in Lemma 3.9 of \cite{CircuitDiamConjecture}, the circuits of a product of polytopes $R$ and $Q$ are precisely the vectors $\mathbf{g} \times \mathbf{0}$ or $\mathbf{0} \times \mathbf{h}$ where $\mathbf{g}$ is a circuit of $R$ and $\mathbf{h}$ is a circuit of $Q$. 

Let us first determine the circuits of $\Delta_{d-2}$.
We have $\Delta_{d-2} = \{\mathbf{y}\in \mathbb{R}^{d-2}_{\geq 0}: \sum_{i=1}^{d-2} \mathbf{y}_i  \leq 1\}$.
Consider a $(d-3)\times (d-2)$ sub-matrix $A$ of the matrix defining $\Delta_{d-2}$.
We can have one of two cases.
Either $A$ contains all non-negativity constraints, but the one corresponding to $i$ for $i\in [d-2]$, or $A$ contains all but two non-negativity constraints and the upper bound on the sum.
In the former case, the circuit is a scalar multiple of $e_i$.
In the latter case, let the two missing non-negativity constraints be for the indices $i,j\in [d-2]$ with $i\neq j$.
Then the corresponding circuits are the scalar multiples of $e_i - e_j$.

To finish the proof, set $\mathbf{c}_d = \mathbf{c}\times e_{d-2}$.
Consider the face $P \times e_{d-2}$ and let $\mathbf{x} \times e_{d-2}$ be any point in that face.
Let us consider the different circuit directions of $P_d$.
The first class of circuits have the form $\mathbf{g}\times 0$ for a circuit $\mathbf{g}$ of $P_d$.
Now $\mathbf{g}\times 0$ is feasible at $\mathbf{x}\times e_{d-2}$ if and only if $\mathbf{g}$ is feasible at $\mathbf{x}$.
Furthermore, $\mathbf{g}\times 0$ is $\mathbf{c}_d$-increasing, if and only if $\mathbf{g}$ is $\mathbf{c}$-increasing.
The second class of circuits are of the form $0\times \mathbf{h}$ for a circuit $\mathbf{h}$ of $\Delta_{d-2}$.
Now note that only the circuit directions $\mathbf{x}\times(-e_{d-2})$ and $\mathbf{x}\times (e_i - e_{d-2})$ for $i\in [d-3]$ are feasible at $\mathbf{x}\times e_{d-2}$, but none of these are $\mathbf{c}_d$-improving.
Hence, any $\mathbf{c}_d$-monotone circuit move starting at $\mathbf{x}\times e_{d-2}$ can be seen as a $\mathbf{c}$-monotone circuit move starting at $\mathbf{x}$.

Now set $\mathbf{s}_d\coloneq\mathbf{s}\times e_{d-2}$, which is a vertex of $P_d$.
By the above argumentation and induction, every $\mathbf{c}_d$-monotone circuit walk starting at $\mathbf{s}\times e_{d-2}$ lives in the facet $P\times e_{d-2}$ and only uses directions $\mathbf{g}\times 0$ for $\mathbf{c}$-monotone circuits $\mathbf{g}$ of $P$.
In particular, the $\mathbf{c}_d$-monotone circuit distance from $\mathbf{s}_d$ to a $\mathbf{c}_d$-maximal vertex agrees with the $\mathbf{c}$-monotone circuit distance from $\mathbf{s}$ to a $\mathbf{c}$-maximal vertex of $P$, as desired.
\end{proof}

\begin{repobservation}{obs:edgesarecircuits}
   Let $A \in \mathbb{R}^{m\times 2}$ and $\mathbf{b}\in \mathbb{R}^m$.
   Let $P = \{\mathbf{x} \in \mathbb{R}^2| A\mathbf{x} \leq \mathbf{b}\}$ be a non-empty polygon.
   If no inequality of $A\mathbf{x}\leq \mathbf{b}$ is redundant, then the circuits of $P$ correspond precisely to the vectors parallel to some edge of $P$.
\end{repobservation}
\begin{proof}
For $i\in [m]$ let us denote by $A_i$ the $i$-th row of $A$.
The inequality $A_i\mathbf{x} \leq \mathbf{b}_i$ not being redundant implies that it defines an edge of $P$ for every $i\in [m]$.
As we consider $d=2$, a $(d-1)\times d$-submatrix of $A$ is a single row $A_i$.
Hence, by definition, the circuits of $P$ are precisely the vectors $\mathbf{g}$ with $A_i\mathbf{g}=0$ for some $i\in [m]$.
Now these are precisely the vectors parallel to the edge defined by $A_i\mathbf{x} \leq \mathbf{b}_i$, finishing the proof.
\end{proof}

\begin{reptheorem}{thm:subset-sum-special-hardness}
	The \textsc{Exact Subset sum with Repetition} problem is \NP-hard.
\end{reptheorem}
\begin{proof}
    We will use a slight variation of the standard hardness reduction from \textsc{3-Dimensional Matching} to the \textsc{Subset Sum} problem.
    Recall the \textsc{3-Dimensional Matching} problem, which is \NP-hard as shown by \textcite{karp2010reducibility}.

\begin{mdframed}[innerleftmargin=0.5em, innertopmargin=0.5em, innerrightmargin=0.5em, innerbottommargin=0.5em, userdefinedwidth=0.95\linewidth, align=center]
	{\textsc{3-Dimensional Matching}}
	\sloppy

 \noindent
 \textbf{Input:} Three disjoint sets $X,Y,Z$ of equal size and a subset $E \subseteq X \times Y \times Z$.

 \noindent
 \textbf{Decision:} Is there a subset $M\subseteq E$ such that every element of $ X, Y$, and $Z$ is part of exactly one element of $M$?
\end{mdframed}
An output $M$ with the desired property is also referred to as a \emph{perfect matching}.

    Given an instance of \textsc{3D Matching}, we want to construct an instance of the \textsc{Exact Subset Sum with Repetition} problem that is feasible if and only if there exists a perfect matching for the \textsc{3D Matching} instance.
    Take arbitrary orderings $X = \{x_0, \dots, x_{N-1}\}$, $Y = \{y_0, \dots, y_{N-1}\}$, and $Z = \{z_0, \dots, z_{N-1}\}$.
    We will define numbers with base $B := N + 1$ with $3N+1$ digits.
    Here digit $i$ for $0 \leq i \leq N-1$ corresponds to $x_i$.
    Digit $j$ with $N \leq j \leq 2N-1$ corresponds to $y_{j-N}$.
    And finally a digit $k$ with $2N \leq h \leq 3N-1$ corresponds to $z_{h-2N}$.
    Finally, there is an additional (most significant) digit at position $3N$ which we will use to ensure that the sequence of numbers we construct is a feasible input of \textsc{Exact subset sum with repetition}.

    We set the target for our instance of \textsc{Exact subset sum with repetition} as $S:=NB^{3N} + \sum_{\ell = 0}^{3N-1} B^\ell$, i.e., the number in base $B$ represented by a value of $N$ in position $3N$ followed by ones at all other positions.
    Additionally set $k:=N$.
    For each element $e = (x_i, y_j, z_h) \in E$ we add the number $B^i + B^{j+N} + B^{h+2N} + B^{3N}$ to our instance, i.e., the number has a one precisely in positions $i, j+N, h+2N, 3N$ and zeroes elsewhere.
    Let $a_1,\dots, a_m$ with $m:=|E|$ be the list of numbers created in this way, and note that these numbers are pairwise distinct. 
    The \textsc{3D Matching} instance is trivially infeasible if $|E| < N$.
    Hence, \textsc{3D Matching} remains \NP-hard when restricted to instances such that $|E|\ge |N|$, and thus we can assume that the subset sum instance consists of $|E| \geq N = k$ elements in the following, meeting one of the requirements on the input for \textsc{Exact subset sum with repetition}.

    Note that for any $r\in \mathbb{Z}_{\geq 0}^m$ we have 
    \[
        \sum_{i=1}^m r_i a_i \geq B^{3N}\sum_{i=1}^m r_i\enspace .
    \]
    As $S < (N+1)B^{3N}$, we have $\sum_{i=1}^m r_i a_i > S$ whenever $\sum_{i=1}^m r_i > N = k$.
    Furthermore, by definition of $k$ and $B$, the sum $\sum_{i=1}^m r_i a_i$ has no carry-over between bits when $\sum_{i=1}^m r_i \leq k$.
    Hence, if we have $\sum_{i=1}^m r_i a_i = S$, then every $r_i$ must be $0$ or $1$.
    Additionally, for every position $d\in [3N-1]$ there must be exactly one $i$ with $r_i = 1$ such that $a_i$ contains the summand $B^d$.
    Hence, taking $M$ to be the elements of $E$ corresponding to the $a_i$ with $r_i = 1$ gives rise to a perfect matching forming a solution to the \textsc{$3$D matching} instance.
    In particular, as any such perfect matching must contain precisely $N$ elements, it follows that $\sum_{i=1}^m r_i = N = k$. Thus, $(a_1,\ldots,a_m,S)$ form a feasible instance of the \textsc{Exact subset sum with repetition} problem, as desired.

    For the other direction of the desired equivalence, it suffices to note that every perfect matching for the \textsc{3D Matching} instance gives rise to a solution of the \textsc{Exact Subset sum with Repetition} instance by setting $r_i$ to one for precisely those indices $i$ where $a_i$ corresponds to an element of the perfect matching.
\end{proof}

\begin{repobservation}{obs:transform}
	Let $P = \{\mathbf{x}\in \mathbb{R}^2\colon A\mathbf{x}\leq \mathbf{b}\}$ be a polygon defined by $A\in \mathbb{Q}^{m\times 2}$ and $\mathbf{b}\in \mathbb{Q}^m$.
  Consider an affine transformation defined by an invertible matrix $H\in \mathbb{Q}^{2\times 2}$ and a translation vector $d\in \mathbb{Q}^2$.
	Let $W = (\mathbf{x}_1, \dots, \mathbf{x}_n)$ be a circuit walk in $P$. Then $W':=(H\mathbf{x}_1 + \mathbf{d}, \dots, H\mathbf{x}_n + \mathbf{d})$ is a circuit walk in the transformed polytope $HP + \mathbf{d} = \{\mathbf{x}\in \mathbb{R}^2\colon AH^{-1}\mathbf{x} \leq \mathbf{b} + AH^{-1}\mathbf{d}\}$. Furthermore, if $W$ is $\mathbf{c}$-monotone for some $\mathbf{c}\in \mathbb{R}^2$, then $W'$ is $\mathbf{c}'$-monotone for $\mathbf{c}'\coloneqq (H^\top)^{-1}\mathbf{c}$.
\end{repobservation}
\begin{proof}
By definition, a vector $\mathbf{g}\in \mathbb{R}^2\setminus\{0\}$ is a circuit of $P$ if and only if there exists some $i\in [m]$ such that the $i$-th row $A_i$ of $A$ is nonzero and satisfies $A_i\mathbf{g}=\mathbf{0}$. But since the latter is clearly equivalent to the $i$-th row $(AH^{-1})_i=A_iH^{-1}$ of $AH^{-1}$ being non-zero and satisfying $(AH^{-1})_i(H\mathbf{g})=0$, we can see that $\mathbf{g}$ is a circuit of $P$ if and only if $H\mathbf{g}$ is a circuit of $HP+\mathbf{d}$. 

It follows directly from this that for every circuit walk $(\mathbf{x}_1,\ldots,\mathbf{x}_n)$ in $P$ and every $i\in [n]$, the point $H\mathbf{x}_i+\mathbf{d}$ is obtained from $H\mathbf{x}_{i-1}+\mathbf{d}$ by a circuit move in $HP+\mathbf{d}$ along the direction of the circuit $H(\mathbf{x}_i-\mathbf{x}_{i-1})$ of $HP+\mathbf{d}$. This shows that the transformed walk  $W'=(H\mathbf{x}_1 + \mathbf{d}, \dots, H\mathbf{x}_n + \mathbf{d})$ is a circuit walk in $HP + \mathbf{d}$, as desired.

Finally, assume that $W$ is $\mathbf{c}$-monotone for some $\mathbf{c}\in \mathbb{R}^2$. Let $\mathbf{c}':=(H^\top)^{-1}\mathbf{c}$. Then we have, for every $i\in [n-1]$:
$$(\mathbf{c}')^\top((H\mathbf{x}_i+\mathbf{d})-(H\mathbf{x}_{i-1}+\mathbf{d}))=\mathbf{c}^\top H^{-1}H(\mathbf{x}_{i}-\mathbf{x}_{i-1})=\mathbf{c}^\top\mathbf{x}_i-\mathbf{c}^\top\mathbf{x}_{i-1}>0,$$ showing that $W'$ is indeed $\mathbf{c}'$-monotone. This concludes the proof. 
\end{proof}

\printbibliography

\end{document}